\documentclass[11pt,letterpaper]{article}
\usepackage[letterpaper,margin=1in]{geometry}
\usepackage{standalone}
\usepackage{etoolbox}
\usepackage{amsthm}
\usepackage{amsmath}
\usepackage{amssymb}
\usepackage{thmtools}
\usepackage{thm-restate}
\usepackage{times}
\usepackage{mathtools}
\usepackage{stmaryrd}
\usepackage{placeins}
\usepackage[table]{xcolor}

\definecolor{linkcol}{rgb}{0,0,0.38}
\definecolor{urlcol}{rgb}{0.1,0.35,0}
\definecolor{citecol}{RGB}{50,125,70}

\usepackage[pdfpagelabels,bookmarks=false,hyperfootnotes=false]{hyperref}
\hypersetup{colorlinks, linkcolor=blue, citecolor=citecol, urlcolor=blue}

\usepackage[utf8]{inputenc}
\usepackage[
backend=biber,
style=alphabetic,
citestyle=alphabetic,
maxalphanames=4,
maxcitenames=99,
mincitenames=98,
maxbibnames=99,
giveninits=true,
]{biblatex}
\usepackage[nameinlink]{cleveref}
\usepackage{lmodern}
\usepackage{bbm}
\usepackage{thm-restate}

\newtheoremstyle{light} %
    {\topsep}                    %
    {\topsep}                    %
    {\itshape}                   %
    {}                           %
    {\scshape}                   %
    {.}                          %
    {.5em}                       %
    {}  %

\newtheorem{theorem}{Theorem}[section]
\newtheorem{lemma}[theorem]{Lemma}

\newtheorem{definition}[theorem]{Definition}

\theoremstyle{light}
\newtheorem{claiminproof}[theorem]{Claim}
\makeatletter
\if@cref@capitalise
\crefname{claiminproof}{Claim}{Claims}
\else
\crefname{claiminproof}{claim}{claims}
\fi

\if@cref@capitalise
\crefname{algocf}{Algorithm}{Algorithms}
\else
\crefname{algocf}{algorithm}{algorithms}
\fi

\if@cref@capitalise
\crefname{conjecture}{Conjecture}{Conjectures}
\else
\crefname{conjecture}{conjecture}{conjectures}
\fi

\if@cref@capitalise
\crefname{thm}{Theorem}{Theorems}
\else
\crefname{thm}{theorem}{theorems}
\fi

\if@cref@capitalise
\crefname{lem}{Lemma}{Lemmas}
\else
\crefname{lem}{lemma}{lemmas}
\fi

\makeatother

\usepackage{url}
\usepackage{array}
\usepackage{setspace}
\usepackage{wrapfig}
\usepackage{upgreek}
\usepackage{complexity}
\usepackage[inline]{enumitem}
\usepackage{moreenum}
\usepackage[super]{nth}
\usepackage{bm}
\usepackage{xspace}
\usepackage{ifthen}
\usepackage{comment}
\usepackage[immediate]{silence}
\usepackage{sectsty}
\DeactivateWarningFilters[tmp]
\allsectionsfont{\boldmath}

\setlist[enumerate]{nosep,topsep=0.1em}
\setlist[enumerate,1]{label=(\roman*), leftmargin=2.2em}
\setlist[itemize]{nosep,topsep=0.3em}
\usepackage[textsize=footnotesize, color=blue!30!white]{todonotes}
\usepackage{float}
\usepackage{graphicx}
\usepackage[margin=10pt,font=small,labelfont=bf,skip=-5pt]{caption}
\usepackage{subcaption}
\usepackage[vlined,ruled,algo2e]{algorithm2e}
\SetAlgoSkip{bigskip}
\SetAlgoInsideSkip{smallskip}

\let\truehypersetup\hypersetup
\renewcommand\hypersetup[1]{}
\usepackage{bigfoot}
\let\hypersetup\truehypersetup
\newcommand{\Zp}{\ensuremath{\mathbb{Z}_{\geq 0}}}
\newcommand{\Z}{\ensuremath{\mathbb{Z}}}
\renewcommand{\R}{\ensuremath{\mathbb{R}}}
\newcommand{\Rp}{\ensuremath{\mathbb{R}_{\geq 0}}}

\newcommand{\Dmax}{\ensuremath{d_{{\max}}}}
\newcommand{\Iscr}{\mathcal{I}}
\newcommand{\Icut}{\mathcal{I}^{\text{cut}}}
\newcommand{\Isplit}{\mathcal{I}^{\text{split}}}
\newcommand{\Qscr}{\mathcal{Q}}

\newcommand{\etight}{e^*}
\newcommand{\Etight}{E^*}

\newcommand{\Einner}{E_{\text{inner}}}
\newcommand{\Eouter}{E_{\text{outer}}}
\newcommand{\Pinner}{\ensuremath{P^{\mathrm{in}}}}
\newcommand{\Pouter}{\ensuremath{P^{\mathrm{out}}}}
\renewcommand{\NP}{\text{NP}}

\newcommand{\nice}[1]{#1-nice}

\definecolor{green}{rgb}{0.4,0.85,0.6}
 
\makeatletter
\def\@fnsymbol#1{\ensuremath{\ifcase#1\or *\or %
\ddagger\or
    \mathsection\or \mathparagraph\or \|\or **\or \dagger\dagger
    \or \ddagger\ddagger \else\@ctrerr\fi}}
\makeatother

\title{Pinning on Tight Cuts: Improved Algorithm and Bounds for Unsplittable Multicommodity Flows in Outerplanar Graphs\footnote{An extended abstract of this paper is to appear in the proceedings of the 53rd EATCS International Colloquium on Automata, Languages, and Programming (ICALP 2026), Track A.}}

\author{
David {Alem\'an Espinosa}\thanks{
Department of Combinatorics and Optimization, University of Waterloo, Canada.
Partially supported by the NSERC Discovery grant 2024-04532.
Email: \href{mailto:dalemanespinosa@uwaterloo.ca}%
{dalemanespinosa@uwaterloo.ca}. 
}
\and
Niklas Schlomberg\thanks{
Research Institute for Discrete Mathematics and Hausdorff Center for Mathematics, University of Bonn, Germany. Partially supported by the SNSF grant 200021-236706.
Email: \href{mailto:niklas.schlomberg@supsi.ch}%
{niklas.schlomberg@supsi.ch}.}
}

\date{}

\begin{document}
\maketitle
\begin{abstract}
The multicommodity flow problem in an undirected capacitated graph $G$ is specified by a set of source-sink pairs with nonnegative demands.
A flow is \emph{feasible} if it routes all demands without exceeding the edge capacities,
and it is \emph{unsplittable} if it routes each demand along a single path.

Let $\alpha$ be the \emph{smallest value} such that the existence of a feasible flow implies the existence of an unsplittable flow that exceeds the edge capacities by at most $+\,\alpha\,\Dmax$, where $\Dmax$ is the maximum demand value. 
Schrijver, Seymour, and Winkler showed that $\alpha\in\left[1.01,\,1.5\right]$ if $G$ is a cycle.
These bounds were ultimately improved to $\alpha\in\left[1.1,\,1.3\right]$ by Skutella and D\"aubel.
Recently,
Alem\'an Espinosa and Kumar extended this constant upper bound to the broader class of outerplanar graphs, and showed that if $G$ is outerplanar then $\alpha\le3.6$.

We show that $\alpha\in\left[\tfrac{4}{3},2\right]$ if $G$ is outerplanar.
We introduce a novel technique that considers the global parameters of the instance,
and that may be useful in other (more general) settings
where the cut-condition is sufficient, or nearly sufficient, for the existence of a feasible flow.
\end{abstract}

\section{Introduction}
Network flows are a fundamental and extensively studied class of problems in  combinatorial optimization; see, e.g.,~\cite{AhujaMagnantiOrlin1993}. 
The {\em multicommodity flow} problem involves routing multiple distinct commodities through a shared network.
An instance, which we denote by the tuple $(G,u,H,d)$, is given by an undirected \emph{supply graph} $G=(V,E(G))$ with edge capacities $u:E(G)\to\Rp$,
and a collection of source-sink pairs $\{s_i,t_i\}$ in $V$ with associated nonnegative demands $d(s_i,t_i)$.
It will be convenient to think of the source-sink pairs as forming the edges of a \emph{demand graph} $H=(V,E(H))$.
The goal is to compute a flow in $G$ that simultaneously routes all demands and respects the edge capacities, or to certify that no such flow exists.
In the standard (\emph{fractional}) version of the problem, 
the demand between a source-sink pair may be split across multiple paths,
and this flexibility enables the precise formulation of the problem as a linear program that can be solved efficiently.
This model has been studied for
almost 70 years~\cite{ford1958suggested,gomory1961multi}.

\vspace{1ex}
\noindent\textbf{Unsplittable multicommodity flows.}
Many practical applications often require the flow between each source-sink pair to be routed along a single path.
Such a flow is called \emph{unsplittable}.\footnote{In this work, whenever we refer to an unsplittable flow, we also assume that it routes \emph{all the demands}.}

This additional requirement gives rise to a significantly harder and less well understood problem.
Various fundamental $\NP$-complete problems in combinatorial optimization such as
bin packing, partitioning, and makespan minimization on identical and
related machines, can be reduced to \emph{unsplittable multicommodity flow} problems in networks consisting of only two nodes, which are the source-sink nodes of all demands, and parallel edges between them~\cite{kleinberg1996single}.
Moreover, already for very small feasible instances, any unsplittable flow 
violates some edge capacity by at least an additive amount of $\Dmax$,
where $\Dmax=\max_{i}d(s_i,t_i)$ denotes the maximum demand value (see Figure~\ref{fig:example}).
Furthermore, strong impossibility and/or hardness results imply that even with unit demands and unit capacities, and even when a feasible multicommodity flow exists, finding an \emph{unsplittable flow} with only a small violation of edge capacities may be impossible or computationally intractable; see, e.g., the (updated) survey by Kolliopoulos~\cite{kolliopoulos2018}.

Therefore, a natural question is to study the conditions under which the existence of a feasible flow implies the existence of an unsplittable flow that incurs only a small violation of edge capacities.
Extensive work studies this question for special instances~\cite{dinitz1999single,schrijver1998ringloading,skutella2002approximating,Martens2007,skutella2016ringloading,daubel2019ringloadingBestUpperbound,morell2022single,shapleyshmoys,traub2024single,aleman2025unsplittable,majthoub2025integer,aleman26,swamy2026unsplittable}.

In the \emph{single-source} setting (i.e., when the demand graph $H$ is a star),
the most prominent result of this kind is due to Dinitz, Garg, and Goemans~\cite{dinitz1999single},
who proved that any feasible (fractional) flow can be converted into an unsplittable flow that violates the edge capacities by at most $\Dmax$, the value of the maximum demand.\footnote{This result also holds for \emph{directed single-source} instances.}

However, in the multicommodity setting very few results of this kind are known for unsplittable flows~\cite{schrijver1998ringloading,shapleyshmoys,aleman2025unsplittable,majthoub2025integer,aleman26}.
A prominent example is the case where the supply graph $G$ is a cycle and the demand graph $H$ is arbitrary, 
which is known as the \emph{ring-loading} problem, first studied by Cosares and Saniee~\cite{cosares1994optimization}.
In a classical result, Schrijver, Seymour, and Winkler~\cite{schrijver1998ringloading} proved that any feasible (fractional) flow for a ring-loading instance can be converted into an unsplittable flow that exceeds the edge capacities by at most an additive amount of $1.5\,\Dmax$;
the best known bound, due to D\"aubel~\cite{daubel2019ringloadingBestUpperbound}, is $1.3\,\Dmax$.

Shapley and Shmoys~\cite{shapleyshmoys} considered the much broader class of instances in which $G$ is \emph{outerplanar} and $H$ is arbitrary.
Recall that $G$ is outerplanar if it admits a planar embedding in which all vertices lie on the unbounded face.
They showed that any feasible flow can be converted into an unsplittable flow that exceeds capacities by at most $O(\log f)\,\Dmax$, where $f$ is the number of faces of $G$.
This was recently improved by Alem\'an Espinosa and Kumar~\cite{aleman2025unsplittable}, who gave an additive $3.6\,\Dmax$ congestion bound.

We show that an additive congestion of $2\Dmax$ can be attained in outerplanar graphs.
In terms of lower bounds on the additive violation needed, the only known result is a lower bound of $1.1\,\Dmax$ by Skutella~\cite{skutella2016ringloading} for the ring-loading problem.
We show that the additive congestion has to be at least $\tfrac{4}{3}\,\Dmax$ for outerplanar graphs.
We remark that outerplanar graphs are the only known nontrivial class of supply graphs for which the existence of a feasible flow guarantees the existence of an unsplittable flow exceeding edge capacities by at most $O(\Dmax)$, regardless of the demand graph $H$.

\begin{figure}
    \centering
    \includegraphics[width=0.9\textwidth]{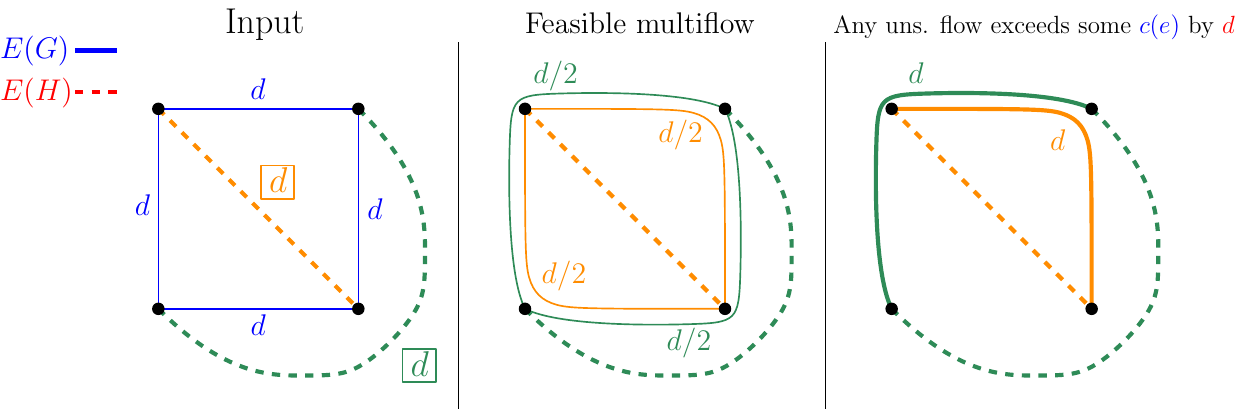}

    \vspace{5mm}
    \caption{An example of a small feasible instance, where every unsplittable flow must necessarily exceed the capacity of some edge by at least $\Dmax$.}\label{fig:example}
\end{figure}

\subsection{Our Contributions}
In this work, we significantly improve the state-of-the-art capacity-violation bounds for the existence of an unsplittable flow in outerplanar graphs.

Let $\alpha_O>0$ be the smallest value such that every feasible instance
with an outerplanar supply graph $G$ admits an unsplittable flow that exceeds the edge capacities by at most an additive amount of $\alpha_O\,\Dmax$.
Similarly, let $\alpha_{RL}$ denote the smallest associated value for \emph{ring-loading} instances (the example in Figure~\ref{fig:example} implies that $\alpha_{RL}\ge 1$).

Prior work showed that $1.01\le\alpha_{RL}\le1.5$~\cite{schrijver1998ringloading}, which was ultimately improved to $\alpha_{RL}\leq 1.3$~\cite{daubel2019ringloadingBestUpperbound}, and the lower bound $\alpha_{RL}\geq 1.1$~\cite{skutella2016ringloading}. Since every cycle is an outerplanar graph, we have $\alpha_O\geq\alpha_{RL}$, and prior to our work, it was not known that the parameters $\alpha_O$ and $\alpha_{RL}$ are necessarily different.

Our first main result establishes that $\alpha_O\le 2$, improving on the bound of Alem\'an Espinosa and Kumar~\cite{aleman2025unsplittable}, who showed that $\alpha_O \le 2\alpha_{RL}+1 \in [3.2,3.6]$.
Our proof uses a more robust divide-and-conquer framework that treats the parameters of the instance more globally.
We hope that this approach will be useful for obtaining similar congestion bounds in other settings where the cut condition is sufficient, or approximately sufficient, for feasibility.

\begin{restatable}{theorem}{theoremmain}\label{theorem:main}
Let $G$ be an outerplanar graph.
If an instance $(G,u,H,d)$ is feasible, or equivalently, if it satisfies the cut-condition, then there exists an unsplittable flow $y$ such that
\[y(e)\le u(e)+2\Dmax\quad \text{for all }e\in E(G).\]
Moreover, $y$ can be computed in polynomial time.
\end{restatable}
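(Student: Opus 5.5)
We prove the theorem by induction on the number of vertices and edges of $G$, using the fact that for outerplanar supply graphs the cut condition is sufficient for feasibility (Okamura--Seymour applies since all terminals lie on the outer face). Throughout we work with cuts rather than flows. Since $G$ is outerplanar, every inclusion-minimal cut of interest is determined by two edges on the outer face, or by a single edge together with a bridge/cut vertex. We also allow ourselves to strengthen the induction hypothesis. The statement we actually carry through the induction has a slightly relaxed cut condition: some cuts may be violated by a bounded amount that is \emph{pinned} to specific edges. We then show that the final routing exceeds each edge by at most $2\Dmax$.

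\textbf{Reductions.} First we reduce to the case where $G$ is $2$-connected, so that its outer face is a Hamiltonian cycle $C$ with chords. At a cut vertex $v$, split every demand crossing $v$ into two demands, one on each side. Each block still satisfies the cut condition, and paths can be concatenated at $v$. Next, we may assume the instance is \emph{tight*ly} capacitated: decreasing capacities while preserving the cut condition only makes the conclusion harder to satisfy. So we may assume every edge lies in a tight cut, i.e.\ a cut $\delta(S)$ with $u(\delta(S))=d(\delta_H(S))$. If some chord $e=ab$ of $C$ exists, it splits $G$ into two outerplanar pieces $G_1,G_2$ sharing $a,b$. Every demand crossing the separation $\{a,b\}$ is rerouted on each side through $a$ or $b$. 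The choice between $a$ and $b$ is the unsplittable decision, and it is made by a ring-loading-type rounding on the ``virtual cycle'' formed by the tight cuts through $e$. The point is that demands crossing $\{a,b\}$ form a laminar/crossing family along the two boundary paths. We order them along the cut and assign each to $a$ or $b$ so that every prefix deviates from the fractional split by less than $\Dmax$. This is the pinning-on-tight-cuts idea: all the rounding error of a given piece is charged to one distinguished tight cut (the chord $e$ and its partner edge), not spread across recursion levels. This avoids the multiplicative blow-up ($2\alpha_{RL}+1$) of the earlier approach.

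\textbf{Base case and accounting.} The base case is a cycle, where we need a ring-loading bound. Here we would rather prove a version with pinned slack than cite $1.3\Dmax$ directly. On a cycle with all pinned violation concentrated at one edge, an unsplittable routing exists whose violation is at most $\Dmax$ on each edge beyond what was pinned. This follows from the standard Schrijver--Seymour--Winkler argument: take a fractional routing and round demands one at a time, where each demand's rounding changes the load on all edges by at most its value, and the loads are balanced along a sweep. The total bound then comes from summing two contributions. The first is at most $\Dmax$, from rounding across the separating chord, which assigns crossing demands to endpoints. The second is at most $\Dmax$, from the recursive rounding inside a piece. We must show that these two never stack on top of contributions from other levels. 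An edge lies on the boundary of at most two pieces, and each piece's error is pinned to its own separating tight cut, so every edge receives error from at most one split plus one base rounding.

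\textbf{Main obstacle.} The hard step is showing that after pinning, the subinstances still satisfy the (relaxed) cut condition, so that the induction applies without errors accumulating across levels. The danger is that assigning a crossing demand to $a$ instead of $b$ overloads some tight cut inside $G_1$. To handle this, I would first choose the split by an uncrossing argument on the tight cuts through $e$. Those cuts form a chain because of outerplanarity, so a greedy assignment in chain order keeps every one of them within $\Dmax$ of tight, with the same sign. I would then absorb this $\Dmax$ slack by temporarily increasing the capacity of $e$ itself by $\Dmax$. Since $e$ is the edge where the slack is pinned, later levels never touch it again, and the final overload on $e$ stays at most $2\Dmax$. Polynomial time follows because each step (finding tight cuts, the greedy assignment, recursing on smaller pieces) is polynomial.
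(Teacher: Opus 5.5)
There is a genuine gap, and it starts with the base case that your $\Dmax+\Dmax$ accounting relies on. Apply your claim to a cycle with nothing pinned. It then says every feasible ring-loading instance has an unsplittable flow exceeding each capacity by at most $\Dmax$, i.e.\ $\alpha_{RL}\le 1$. This contradicts the known lower bound $\alpha_{RL}\ge 1.1$. The Schrijver--Seymour--Winkler sweep only gives $+\frac{3}{2}\Dmax$ on all edges and $+\frac{1}{2}\Dmax$ on one edge fixed in advance.

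The chord rounding has the same defect. Index the crossing demands by the position of their $G_1$-endpoint along the outer path of $G_1$, and let $z_h\in\{-r_h,\,d(h)-r_h\}$, where $r_h$ is the fractional amount of $h$ routed through $a$. Even if every prefix sum stays within $\frac{1}{2}\Dmax$, the change of demand across the $j$-th cut through $e$ is $\sum_{i>j}z_i-\sum_{i\le j}z_i$. This is only bounded by $\frac{3}{2}\Dmax$ and may have either sign, so ``within $\Dmax$ of tight, with the same sign'' is unsupported. Moreover, the cuts through $e$ in $G_1$ and in $G_2$ order the crossing demands by \emph{different} endpoints. A single choice vector must therefore balance prefixes in two orders at once. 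A greedy in one ``chain order'' does not do this unless you first uncross via a ring-loading reduction, which is essentially the earlier approach that only yields $2\alpha_{RL}+1$.

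The claim that the slack absorbed on $e$ is ``never touched again'' also fails. The chord $e=ab$ is an edge of both pieces and lies in every cut of either piece that separates $a$ from $b$, so the pinning error of both sides has to be absorbed on it. It also remains an outer edge of each piece, so both recursive solutions, including a base cycle on each side, route flow through it. Nothing in your argument prevents boosts and rounding errors from both sides from stacking on $e$; with your own figures (a $\Dmax$ boost plus up to $\Dmax$ of base rounding per side) the bound on $e$ is well above $2\Dmax$. Since the strengthened hypothesis (the ``relaxed cut condition'') is never stated quantitatively, the accounting cannot be checked as written.

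The missing idea is a hypothesis that is strictly stronger exactly where the pinning error is placed, together with a decomposition in which every edge takes its final flow from only one subproblem. The paper proves $+\frac{1}{2}\Dmax$ on one designated outer edge per block, $+\frac{3}{2}\Dmax$ on the other outer edges, and $+2\Dmax$ on inner edges. For a tight central cut $S$ through the designated edge, it first solves the cut instance, in which $G[S]$ is collapsed to its outer path. That unsplittable flow then determines the pins. The paper shows the resulting split instance on $G[S]$ is feasible after adding at most $\frac{3}{2}\Dmax$ to the inner-path edges only. These edges are one per block of $G[S]$, hence designated in the recursive call, giving $\frac{3}{2}+\frac{1}{2}=2$.
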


We complement this result by showing 
that $\alpha_{O}\geq \tfrac{4}{3}$.

\begin{restatable}{theorem}{theoremlower}\label{theorem:lower}
For any $\varepsilon>0$ there exists a feasible instance $(G,u,H,d)$ on an outerplanar supply graph $G$, such that for any unsplittable flow $y$, there exists a supply edge $e\in E(G)$ with
\[y(e)>u(e)+\left(\tfrac{4}{3}-\varepsilon\right)\Dmax.\]
\end{restatable}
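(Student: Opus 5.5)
The plan is to build an explicit gadget with a constant number of unit demands ($\Dmax=1$) together with a large number of tiny demands of value $\delta\to 0$. The tiny demands behave essentially fractionally, and they are the source of the $\varepsilon$ in the statement. The unit demands are the only ones whose unsplittability matters. So the argument reduces to a finite combinatorial problem:
\begin{itemize}
\item[(a)] for each of the finitely many ways of routing the unit demands on single paths,
\item[(b)] solve the residual \emph{fractional} problem for the tiny demands,
\item[(c)] show that its optimal maximum excess is at least $\tfrac43$.
\end{itemize}
A tiny demand rounded to a single path changes any edge load by at most $\delta$ per demand. Splitting the tiny demand mass into $N$ equal pieces and letting $N\to\infty$ (with $\delta=1/N$) therefore makes the unsplittable optimum converge to the mixed optimum of (b), giving the bound $\tfrac43-\varepsilon$.

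For the gadget I would take an outerplanar graph formed by several short cycles glued along chords, for instance a central triangle with a cycle attached to each of its three sides. I would place three unit demands, each of which must cross the central triangle and so has essentially two routing options. The capacities and tiny demands are chosen so that the canonical fractional flow is feasible and tight. In that flow each unit demand splits in proportions $\tfrac13/\tfrac23$ between its two routes, and the tiny demands saturate the remaining capacity on every edge. Feasibility is then certified by this explicit flow rather than by checking the cut condition.

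The intended mechanism is a parity or odd-cycle obstruction. Each binary choice of a unit demand pushes an extra $\tfrac23$ or $\tfrac13$ onto certain edges shared with a neighbouring gadget cycle. Because the three demands interact cyclically, some pair of them must both push in the same direction, and the tiny flow can relieve this only partially.

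The lower bound in (c) for each routing choice would come from LP duality. I would exhibit nonnegative weights on a few edges, or equivalently on a few tight cuts of $G$. For every unsplittable choice of the unit demands, the weighted sum of the loads minus capacities must be at least $\tfrac43$ times the total weight. The tiny demands are routed arbitrarily, and since the chosen cuts are tight, every tiny demand crosses them a fixed number of times no matter how it is routed; so its contribution is routing-independent. The check then becomes a finite case analysis over the routings of the three unit demands, which symmetry should collapse to two or three cases.

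The main obstacle is designing the gadget and capacities so that the best unsplittable choice costs exactly $\tfrac43$ rather than just $1$. A single cycle cannot do this, since ring loading admits much better constants and the known lower bound for cycles is only $1.1$; the separation must therefore exploit chords, that is, several faces sharing edges. I would first try to hand-tune the three-face construction above, solving the small LP for each routing and reading off the dual weights. If three unit demands do not suffice, I would add a second layer of demands to amplify the conflict.
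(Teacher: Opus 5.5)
\textbf{There is a genuine gap: the proposal contains no construction, and the theorem is nothing but a construction.} The statement is existential, so its proof is an explicit instance plus a verification that every unsplittable flow overloads some edge by nearly $\tfrac43\Dmax$. You never fix a graph, capacities, or demands. The ``parity obstruction'' for the triangle-with-three-cycles gadget, the dual weights, and the case analysis are all left to hand-tuning, and you concede that three unit demands may not suffice. Nothing in the plan explains why the value should come out as $\tfrac43$ rather than $1$ or $\tfrac54$.

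For comparison, the paper's construction has three layers. First, a 10-cycle ring-loading gadget $\mathcal{I}(n,\ell)$ with demands $1,\tfrac{2n}{1+3n},1$, whose capacities equal the loads of an explicit fractional flow. Second, $n$ such gadgets ($\ell=1,\dots,n$) glued along chords into a one-dimensional grid $\mathcal{J}$. For $\mathcal{J}$, Lemma~\ref{lemma:totheleft} shows that both unit demands of every face must be routed ``to the left'', or else some edge other than $\{t_4^1,s_0^1\}$ has excess $1+\tfrac{n}{1+3n}$. Third, four copies of $\mathcal{J}$ attached to a central 12-cycle carrying two crossing unit demands. By pigeonhole, both crossing demands pass along the attachment of one copy. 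This forces load $4$ on the attachment edge (capacity $3-\tfrac{n}{1+3n}$) or load $3$ on the adjacent edge (capacity $2-\tfrac{n}{1+3n}$). The $\varepsilon$ comes from $n\to\infty$, i.e.\ from the length of the chain, not from tiny demands.

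\textbf{Two parts of your framework also point the wrong way.}

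First, tiny demands cannot strengthen a lower bound. Delete them and reduce each capacity by their load in the feasible fractional flow. The instance stays feasible and $\Dmax$ is unchanged. Any unsplittable flow for the reduced instance, combined with a fine rounding of the tiny flow, gives an unsplittable flow for the original with essentially the same excess. So adding tiny demands only gives the router extra freedom. Your $N\to\infty$ convergence argument is also irrelevant for a lower bound, since unsplittable routings are a subset of mixed ones.

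Second, insisting that every non-tiny demand has value $\Dmax$ throws away exactly the lever the paper uses. In each face, the middle demand of size $\tfrac{2n}{1+3n}\approx\tfrac23$ is split evenly in the fractional flow. This gives the edges $\{s_1,s_2\},\{s_2,s_3\}$ capacity $1+\tfrac{n}{1+3n}$. When the two unit demands go in opposite directions, the unsplittable middle demand then yields excess $1+\tfrac{n}{1+3n}$. If you replaced it by tiny demands of the same total value, they would split evenly and the excess would drop to $1$. You therefore need intermediate-size demands whose unsplittability matters, not only unit ones.
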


Recall that previously, the only lower bound known on $\alpha_O$ was that $\alpha_O\geq\alpha_{RL}\geq 1.1$. Our improved lower bound in Theorem~\ref{theorem:lower}, together with the fact that $\alpha_{RL}\leq 1.3$~\cite{daubel2019ringloadingBestUpperbound}, shows that the $\alpha$-parameters for ring-loading and outerplanar graphs are different, and concretely shows that outerplanar graphs are more difficult settings than ring-loading instances for multicommodity unsplittable flows.

\subsection{Related Work}
Very recently, Alem\'an Espinosa, Kumar, Poremba, and Shepherd~\cite{aleman26} showed that in the \emph{fully planar setting}, defined by instances in which $G+H$ is planar, any feasible flow can be converted into an unsplittable flow that violates the edge capacities by at most $2\Dmax$. 
They also showed that the additive congestion has to be at least $1.5\,\Dmax$ in this setting.
The authors in~\cite{aleman26} also showed that when $G$ is \emph{series-parallel} (i.e., $G$ does not contain a $K_4$ minor), any feasible flow can be converted into an unsplittable flow such that the flow on any edge is at most twice its capacity plus $7.2\,\Dmax$.

The directed single-source setting has been extensively studied; see, e.g., \cite{skutella2002approximating,Martens2007,morell2022single,traub2024single,swamy2026unsplittable}, including work on minimum-cost variants. Recently, Majthoub Almoghrabi, Skutella, and Warode~\cite{majthoub2025integer} studied a multicommodity setting in a special class of directed series-parallel graphs and showed that any fractional flow can be converted into an unsplittable flow of no larger cost, such that the flow values on each arc differ by at most $\Dmax$.\footnote{
We note that, as mentioned in~\cite{majthoub2025integer} (see footnote on page 429 in~\cite{majthoub2025integer}), that their results do not extend to the undirected (outerplanar) setting we consider here, nor the other way around.}

\subsection{The Cut-Condition}
Interestingly, and perhaps not surprisingly,
all of these settings have the common feature that one can identify a simple structural property, called the {\em cut-condition} that is sufficient, or nearly-sufficient, to guarantee the existence of a feasible flow.
This condition requires that, for every cut, the total demand separated by the cut is at most the total capacity of the supply edges crossing it.
The cut-condition is always necessary for feasibility of a multicommodity flow instance (see, e.g.,~\cite{schrijver2003combinatorial}), but it is not always sufficient, even for small graphs such as $G=K_{2,3}$~\cite{okamura1981multicommodity}.
It is well known that the cut-condition is sufficient for the following instances:

\vspace{1ex}
\begin{enumerate}[label=(\roman*)]
\item (Single-source setting) $G$ is arbitrary and $H$ is a star~\cite{ford1956maximal}.
\item (Okamura-Seymour setting)
$G$ is planar and all endpoints of the demand edges lie on a common face of $G$~\cite{okamura1981multicommodity}.\label{instance:OS}
\item (Outerplanar setting) $G$ is outerplanar and $H$ is arbitrary (special case of (ii)).
\item (Fully planar setting) $G+H=(V,E(G)\cup E(H)\,)$ is planar~\cite{seymour1981odd}.
\end{enumerate}
\vspace{1ex}

Alem\'an Espinosa and Kumar~\cite{aleman2025unsplittable} posed as an open problem whether every feasible Okamura--Seymour instance admits an unsplittable flow with additive congestion $O(\Dmax)$.
Alem\'an Espinosa, Kumar, Poremba, and Shepherd~\cite{aleman26} later posed a more general question relating unsplittable flows and the flow-cut gap.

Given a (supply graph, demand graph) pair $(G,H)$, the \emph{flow-cut gap} of $(G,H)$ is the smallest value $\beta\ge 1$ such that, for all choices of capacities and demands, if every cut has capacity at least $\beta$ times the total demand it separates, then a feasible flow exists.
For example, $\beta=1$ for the classes of instances (i)--(iv).
It is known that $\beta=2$ for instances in which $G$ is series-parallel and $H$ is arbitrary~\cite{chakrabarti2008embeddings,lee2010coarse}.
For instances in which $G$ is planar it is only known that $\beta\in[2,\,O(\sqrt{\log n}\,)\,]$~\cite{Rao1999PlanarEmbedding,lee2010coarse}.
In contrast, for general graphs the flow-cut gap is known to be $\beta=\Theta(\log k)$, where $k$ is the number of source-sink pairs~\cite{linial1995geometry,aumann1998log}.
This motivates the natural question, posed in~\cite{aleman26}, of whether a flow-cut gap of $\beta$ for a class of instances implies the existence of an unsplittable flow that can be routed with capacities $O(\beta)\,u+O(\Dmax)$, or even $f(\beta)\,u+O(\Dmax)$ for some function $f:\Rp\to \Rp$.\footnote{An analogous question was previously posed by Chekuri, Shepherd, and Weibel~\cite{chekuri2013flow} in the context of \emph{integral multicommodity flows}, i.e., the case of unit demands. They answer this question affirmatively for various classes of instances, including settings in which $G$ is series-parallel or $k$-outerplanar.}

The sufficiency of the cut-condition for the outerplanar setting is used crucially in~\cite{aleman2025unsplittable} and also in this work.
Its approximate sufficiency for series-parallel graphs is used in~\cite{aleman26}, where the framework in~\cite{aleman2025unsplittable} is carefully adapted to handle this setting.
We build on the ideas in~\cite{aleman2025unsplittable} and develop a more robust divide-and-conquer framework that treats the instance parameters more globally.
We hope that this approach can be extended beyond the outerplanar setting, for example, to the significantly more intricate Okamura--Seymour setting $(ii)$, which is perhaps the most natural (intermediate) generalization.

\subsection{Technical Contributions and Overview}

Pinning a demand edge $h=(s,t)\in E(H)$ is a standard technique in (unsplittable) multicommodity flow problems.
It subdivides $h$ into a sequence $(s,v_1),(v_1,v_2),\ldots,(v_\ell,t)$ of demand edges, each with the same demand value, with the goal of transforming the instance into a more structured one.
However, pinning can destroy feasibility: the resulting instance may not even admit a fractional flow.
The algorithms in~\cite{dinitz1999single}, \cite{aleman2025unsplittable}, \cite{majthoub2025integer}, and~\cite{aleman26} (for series-parallel graphs), perform pinnings in a controlled way so that an unsplittable flow exists, while incurring only an $O(\Dmax)$ additive capacity violation.

We first provide a high-level overview of the algorithm in~\cite{aleman2025unsplittable} for outerplanar instances, which proceeds in two phases.
In the first phase, it performs pinnings until every demand edge becomes parallel to a supply edge. They show that the cut-condition, and hence feasibility, can be preserved throughout this phase by increasing capacities by at most $2\alpha_{RL}\,\Dmax$, where $\alpha_{RL}\le 1.3$~\cite{daubel2019ringloadingBestUpperbound} is the ring-loading parameter. In the second phase, they compute an unsplittable flow greedily, exceeding capacities by an additional amount of $\Dmax$.
We say that
an ear of $G$ is a path whose internal vertices have degree two, ignoring parallel edges.
In each iteration of the first phase, their algorithm selects an ear $P=v_1,\ldots,v_\ell$ (with $\{v_1,v_\ell\}\in E(G)$),
and considers the set $H_P\subseteq E(H)$ of demand edges with at least one endpoint in $P$. Each demand $(v_i,w)\in H_P$ is then pinned along a subpath of $P$.
The key observation in~\cite{aleman2025unsplittable} is that all demands in $H_P$ can be pinned via a reduction to ring-loading,
which preserves the cut-condition in the resulting instance,
at the expense of increasing capacities on $E(P)+\{v_1,v_\ell\}$ by an additive amount of $\alpha_{RL}\,\Dmax$.
After the pinnings along $P$ are made, a feasible flow is computed, and then each demand $(s,t)\in E(H)\cap E(G)$ and its corresponding 
$s$-$t$ flow are removed from the first phase.
After this removal, $P$ can be treated as an edge that is parallel to $\{v_1,v_\ell\}$ for the remaining demands $E(H)\setminus H_P$.
A similar ring-loading reduction approach was implemented in~\cite{aleman26} in order to contract the ears of series-parallel graphs,
where the goal of the first phase is to pin the demands so that the resulting instance becomes fully planar (i.e., $G+H$ is planar).

The main drawback of this procedure is that we cannot remove the ear $P$ after an iteration because the demands in $E(H)\setminus H_P$ may still require the residual capacity of $P$.
This is not an issue in the outerplanar setting,
but it becomes critical in, for example, an Okamura-Seymour instance $(G,u,H,d)$.
For example, suppose that both endpoints of every demand edge lie on the unbounded face of $G$.
Not being able to remove $P$ implies that one cannot pin the demands by using the inner nodes of $G$.
If one makes such a pinning, then the resulting instance is no longer an Okamura-Seymour instance.
Therefore, one cannot get access to these inner nodes by relying on such a ring-loading reduction.

We now give an overview of our algorithm.
Our main technical insight is that, instead of pinning on an ear,
we can pin on the vertices incident to the cut edges $\delta_G(S)$ of a tight cut $(S,\,V\setminus S)$, that is, a cut satisfying $u(\delta_G(S))=d(\delta_H(S)).$
This allows us to decompose the outerplanar instance $\Iscr$ into two smaller instances, which can then be solved recursively.

More concretely, tightness of the cut implies that any feasible flow for $\Iscr$ must route every demand with both endpoints on the same side of the cut entirely within that side.
We start by computing a feasible flow $x=(x_h)_{h\in E(H)}$.
We then remove every demand $h\in E(H[S])$ together with its $h$-flow $x_h$. That is, we reduce the capacity of each edge $e\in E(G[S])$ by $\sum_{h\in E(H[S])} x_h(e).$

Our first task is to construct an unsplittable flow for the demands that remain in the instance.
To this end, we first compute a simplified \emph{cut instance} $\Icut$, obtained from $\Iscr$ by removing the edges of the induced graph $G[S]$ that are not incident to the unbounded face of $G$.
In order to guarantee the feasibility of $\Icut$, we transfer the (residual) capacity of the deleted edges to the remaining edges of $G[S]$.

We solve the (edgewise smaller) instance $\Icut$ recursively, i.e., we find an unsplittable flow $y^\prime$ for $\Icut$ that exceeds the edge capacities by at most $2 \Dmax$.
For the edges of the unbounded face, we get a stronger bound of $+\frac{3}{2} \, \Dmax$.
The flow $y^\prime$ already determines our unsplittable flow for the initial instance $\Iscr$ on all edges outside $G[S]$, namely on $E(G)\setminus E(G[S])=E(G[V-S])\cup\delta_G(S)$.

Our second task is to extend this partial unsplittable flow to all of $G$.
To this end, we construct a \emph{split instance} $\Isplit$ whose supply graph is $G[S]$.
We first add back the removed demands $E(H[S])$.
Then, we use $y'$ to pin every demand $(s,t)\in E(H)\setminus E(H[S])$ of $\Icut$ on the vertices of $S$ incident to the cut  $\delta_G(S)$.
Suppose w.l.o.g. that $s\in V\setminus S$, and consider the path $P$ on which $y'$ routes $(s,t)$. 
Suppose that $P=P_1Q_1\hdots P_iQ_i\hdots$, where the $Q_i$'s
denote the maximal subpaths of $P$ lying inside $S$ (the last subpath of this sequence is some $Q_\ell$ if $t\in S$).
We fix the subpaths $P_i$ in our final unsplittable $s$-$t$ flow for $\Iscr$.
Each subpath $Q_i$ induces a new demand in $\Isplit$ between its endpoints.
We show that, since $y^\prime$ does not exceed capacities by too much, $\Isplit$ becomes feasible if we increase the capacities of the edges $\Etight \subseteq E(G[S])$ lying on the inner faces of $G$ that are incident to $\delta_G(S)$ by $\frac{3}{2}\,\Dmax$. All edges in $E(G[S])\setminus \Etight$ retain their original capacity in $\Iscr$.
This edge set $\Etight$ contains at most one edge per block of $G[S]$.

We again use recursion to solve the (smaller) split instance, which yields an unsplittable flow $\widehat y$ that can ``fill the gaps'' for $y^\prime$.
Our final unsplittable flow $y$ will be equal to $\widehat y$ on $G[S]$ and to $y^\prime$ outside $G[S]$.
For the edges $\Etight$ we will guarantee an even stronger bound of $+\frac{1}{2} \, \Dmax$ on the capacity violation of $\widehat y$, which will yield the capacity bounds needed for $y$.

The only case where our recursion step fails is if the tight cut $\delta_G(S)$ already contains all inner edges of $G$, because then the cut instance $\Icut$ will not be smaller than the original instance $\Iscr$.
In this case, the inner faces of $G$ must form a one-dimensional grid.
We use a standard reduction that allows us to assume that $G$ has maximum degree $3$, and then we argue that there exists some cut $(S',V\setminus S')$ of $G$ with $|\delta_G(S')|=2$, such that we can perform the above recursive procedure with respect to either $S'$ or $V\setminus S'$.

\section{Preliminaries and Notation}
All of the graphs that we consider in this work are undirected.
Given a graph $G=(V,E)$, a function $c: E \to \R$, and an edge $\{v,w\} \in E$, we overload notation and use $c(v,w)$ to denote $c(\{v,w\})$.
For any edge set $F\subseteq E$, we use $c(F)$ to denote $\sum_{e\in F}c(e)$.
Given a vertex set $S\subseteq V$, we use $\delta_G(S)$ to denote the set of edges of $G$ with precisely one endpoint in $S$.
We sometimes use the term \emph{path} interchangeably to mean either its edge set or its vertex sequence; the intended meaning will be clear from the context.
For any universe $U$, any $S\subseteq U$, and any $e\in U$, we sometimes use $S+e$ and $S-e$ to denote $S\cup\{e\}$ and $S\setminus\{e\}$, respectively.
For any integer $n\in\Zp$, we use $[n]$ to denote $\{1,2,\hdots,n\}$.

\subsection{Multicommodity Flows}
An instance $(G,u,H,d)$ of {\em multicommodity flow}
is given by an undirected graph $G=(V,E(G))$ with edge capacities $u:E(G)\to\mathbb{R}_{\geq 0}$,
and a graph $H=(V,E(H))$ with demand values $d:E(H)\to \mathbb{R}_{\geq 0}$. 
$G$ and $H$ may have parallel edges but no loops.
We refer to $G$ and $E(G)$ as the {\em supply graph} and {\em supply edges}, respectively. 
We refer to $H$ and $E(H)$ as the {\em demand graph} and {\em demand edges}, respectively. 

For a demand edge $h=\{s,t\} \in E(H)$, we use $\mathcal{P}_{h}$ to denote the set of all (simple) $st$-paths in $G$.
A flow for $h$, or {\em $h$-flow}, is an assignment $x_h:\mathcal{P}_h \to \Rp$ of non-negative real numbers to paths in $\mathcal{P}_h$ such that $\sum_{P\in\mathcal{P}_h}x_h(P)=d(h)$.
A collection of flows $x=(x_h)_{h\in E(H)}$ constitutes a {\em multicommodity flow} for the instance.
For simplicity, we often refer to multicommodity flows simply as flows.
For a supply edge $e \in E(G)$ and a demand edge $h \in E(H)$, we use $x_h(e):=\sum_{P: e \in P} x_h(P)$ to denote the total $h$-flow going through $e$.
We use $x(e):=\sum_{h \in E(H)} x_h(e)$ to denote the total flow going through $e$.
A flow $x$ is \emph{feasible} if it satisfies the edge-capacity constraints, that is, if $x(e) \leq u(e)$ for all $e \in E(G)$.

An $h$-flow $x_h$ is called \emph{unsplittable} if \emph{exactly one} of the paths in $\mathcal{P}_h$ is assigned a non-zero value in $x_h$.
We call a flow $x = (x_h)_{h \in E(H)}$ unsplittable if each $x_h$ is unsplittable.
In this work, whenever we refer to an unsplittable flow, we also assume that it routes all demands.
We use $\Dmax:=\max_{h\in E(H)} d(h)$ to denote the maximum demand value.
To simplify notation in the upcoming proofs,
we adopt the notion of $\alpha$-feasibility introduced in~\cite{aleman2025unsplittable} and generalize it to the notion of $\vec\beta$-feasibility.\footnote{
In~\cite{aleman26} the authors defined $(\alpha,\beta)$-feasibility, to indicate that a flow $x$ satisfies $x(e)\le \alpha\,u(e)+\beta\,\Dmax$ for every $e\in E(G)$ (and some $\alpha,\beta\in\Rp$).
In this work, we focus only on additive violations with respect to $\Dmax$.
} 

\begin{definition}[$\vec{\beta}$-feasibility]
For a vector $\vec{\beta}\in\Rp^{E(G)}$, we say that an (unsplittable) flow $x$ is {\em $\vec{\beta}$-feasible} if
\[x(e)\le u(e)+\vec\beta(e)\cdot \Dmax\text{ for each }e\in E(G);\]
we overload notation, and say that $x$ is {\em$\alpha$-feasible} for a real number $\alpha$, if $x(e) \leq u(e) + \alpha \cdot \Dmax$ for all $e \in E(G)$ (i.e., $x$ is $\vec{\beta}$-feasible, where $\vec{\beta}=\alpha\cdot\vec{1}$).
\end{definition}

We may assume w.l.o.g.\@ that the supply graph $G$ is $2$-vertex-connected.
Indeed, suppose that $G$ has a cut vertex.
Let $B_1 \subseteq V(G)$ denote the vertex set of a block\footnote{
A \emph{block} of $G$ is an inclusion-wise maximal induced subgraph of $G$ with no cut vertex.}
of $G$ containing exactly one cut vertex $v \in B_1$ and define $B_2 := V(G) \setminus (B_1 \setminus \{v\})$.
For any demand $h = \{a,b\} \in E(H)$ with $\{a,b\} \subseteq B_i$ for some $i \in \{1,2\}$ we have that any $ab$-path in $G$ is contained inside $B_i$, so that demand does not interact with edges in the other component $B_{3-i}$.
For a demand $h = \{a,b\} \in E(H)$ with $a \in B_1$, $b \in B_2$ we have that any $ab$-path in $G$ contains $v$.
Hence, we can replace $h$ by two demands $\{a,v\}$ and $\{v,b\}$, each of value $d(h)$, obtaining an equivalent multicommodity flow instance.
Repeating this transformation until no cut vertex exists reduces the problem to a collection of smaller instances, whose supply graphs are the blocks of $G$.

\subsection{Outerplanar Instances}\label{sec:outerplanar_instances}
A graph is outerplanar if it admits a planar embedding in which all vertices lie on the unbounded face.
Throughout this work, the supply graph $G$ of our instance $(G,u,H,d)$ is outerplanar, and we fix an outerplanar embedding of $G$.
We sometimes also call the unbounded face the \emph{outer face}, and the other faces \emph{inner faces}.
Similarly, we call the edges of $G$ incident with the outer face the \emph{outer edges}, and the remaining edges the \emph{inner edges}.
We denote the sets of outer and inner edges by $\Eouter(G)$ and $\Einner(G)$, respectively.
Since we may assume that $G$ is $2$-vertex-connected, every face of $G$ is bounded by a cycle.

\subsection{Cut-Condition}
Let $(G,u,H,d)$ denote a multicommodity flow instance,
and set $V := V(G)$.
For any $S\subseteq V$, the \emph{cut} $(S,V\setminus S)$ is a bipartition of the vertex set.
The \emph{cut-condition} requires that, for every cut, the total demand crossing the cut
is at most the total capacity of the supply edges crossing it i.e., \[u(\delta_G(S)) \ge d(\delta_H(S)) \quad \text{for all } S\subseteq V.\]
A set $\emptyset \neq S\subsetneq V$ is called \emph{central} if both of the induced graphs $G[S]$ and $G[V\setminus S]$ are connected.
We say that $(S,V\setminus S)$ is a \emph{central cut} if $S$ is central.
The following is a well-known fact (see, e.g.,~\cite{schrijver2003combinatorial}).
\begin{lemma}\label{lemma:cut_sufficiency}
An instance $(G,u,H,d)$ satisfies the cut-condition if and only if $u(\delta_G(S)) \ge d(\delta_H(S))$ for all central sets $S\subseteq V$.
\end{lemma}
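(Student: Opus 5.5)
One direction is trivial: if the cut-condition holds for all $S$, it holds in particular for the central ones. For the converse, I assume $u(\delta_G(S))\ge d(\delta_H(S))$ for every central $S$ and show it for an arbitrary $\emptyset\neq S\subsetneq V$. I may assume $G$ is connected, which is the setting of the paper, since $G$ is taken to be $2$-vertex-connected; for disconnected $G$ the statement would fail without a separate treatment of components. I argue by induction on the number of connected components of $G[S]$ plus the number of connected components of $G[V\setminus S]$. When this number equals $2$, the set $S$ is central and there is nothing to prove.

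For the inductive step, suppose without loss of generality that $G[S]$ has components $C_1,\dots,C_k$ with $k\ge 2$; the case of $G[V\setminus S]$ is symmetric, by swapping $S$ and $V\setminus S$, which leaves both cuts unchanged. There are no supply edges between distinct $C_i$, so $\delta_G(S)$ is the disjoint union of the sets $\delta_G(C_i)$. Hence $u(\delta_G(S))=\sum_i u(\delta_G(C_i))$.

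On the demand side, every demand edge in $\delta_H(S)$ has exactly one endpoint in some $C_i$ and the other in $V\setminus S$, so it lies in exactly one $\delta_H(C_i)$. Thus $d(\delta_H(S))\le\sum_i d(\delta_H(C_i))$. The inequality may be strict, because demand edges between different $C_i$ are counted on the right but not on the left; this slack only helps. It therefore suffices to prove the cut-condition for each $C_i$.

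The key check is that each $C_i$ has strictly smaller complexity. The graph $G[C_i]$ is connected. For the complement, $V\setminus C_i=(V\setminus S)\cup\bigcup_{j\ne i}C_j$. Because $G$ is connected, every $C_j$ has an edge to $V\setminus S$. Consequently each component of $G[V\setminus C_i]$ contains at least one component of $G[V\setminus S]$. So $G[V\setminus C_i]$ has at most as many components as $G[V\setminus S]$, while $G[C_i]$ has one component instead of $k\ge2$, and the total strictly decreases. By induction, $u(\delta_G(C_i))\ge d(\delta_H(C_i))$ for each $i$, and summing gives $u(\delta_G(S))\ge d(\delta_H(S))$.

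The only mildly delicate step is the component count for $V\setminus C_i$, which is exactly where connectivity of $G$ is used; the rest is bookkeeping.
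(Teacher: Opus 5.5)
Your proof is correct. It is the standard argument behind this well-known fact, which the paper does not prove but cites from Schrijver's book. That argument splits a disconnected side into its components $C_i$, uses $u(\delta_G(S))=\sum_i u(\delta_G(C_i))$ and $d(\delta_H(S))\le\sum_i d(\delta_H(C_i))$, and inducts on the total number of components of $G[S]$ and $G[V\setminus S]$. You are also right that connectivity of $G$ is needed: it is implicit in the paper, where $G$ is assumed $2$-vertex-connected, and your component-count step for $V\setminus C_i$ is exactly where it is used.
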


Observe that in a $2$-vertex-connected outerplanar graph $G$,
the central sets are precisely the vertex sets of subpaths of the outer face.

A classical result of Okamura and Seymour states that the cut-condition is sufficient whenever
$G$ is planar and the endpoints of all demand edges lie on a common face of $G$.

\begin{theorem}[Okamura--Seymour~\cite{okamura1981multicommodity}]\label{theorem:OS}
Let $G$ be a planar supply graph and $H$ a demand graph.
Suppose there exists a face $F$ of $G$ such that for every demand edge $\{s,t\}\in E(H)$, 
both $s$ and $t$ lie on $F$.
Then, for any $u:E(G)\to\Rp$ and $d:E(H)\to\Rp$, the instance $(G,u,H,d)$ admits
a multicommodity flow if and only if it satisfies the cut-condition.
\end{theorem}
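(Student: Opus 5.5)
Necessity is immediate: if a feasible flow exists, each demand $\{s,t\}$ with $s\in S$, $t\notin S$ sends all of its $d(s,t)$ units across $\delta_G(S)$, so $u(\delta_G(S))\ge d(\delta_H(S))$. All the work is in sufficiency, which I plan to reduce to an integral statement with a parity (Euler) condition and then prove by a minimal-counterexample argument using tight cuts on the face $F$.

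\textbf{Reduction to integral Eulerian data.} Fix $G$ and $H$. The pairs $(u,d)\in\Rp^{E(G)}\times\Rp^{E(H)}$ satisfying the cut-condition form a rational polyhedral cone $C_{\mathrm{cut}}$, since there are finitely many cuts. The pairs admitting a feasible flow also form a cone $C_{\mathrm{flow}}$: it is the projection of the flow polytope, and it is closed. It therefore suffices to show $C_{\mathrm{cut}}\subseteq C_{\mathrm{flow}}$ on rational points, which are dense in $C_{\mathrm{cut}}$. After scaling, $u$ and $d$ are integral, and after doubling, $\deg_{G+H}(v)$ is even at every vertex, where edges are counted with multiplicity $u$ and $d$. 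I will then prove the stronger integral statement: if the cut-condition holds and $G+H$ is Eulerian, an integral flow exists.

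\textbf{Minimal counterexample.} Take a counterexample minimizing $u(E(G))$, and secondarily $d(E(H))$. The key consequence of the Euler condition is a parity fact: for every $S$, the surplus $u(\delta_G(S))-d(\delta_H(S))$ is even, because it has the parity of $\sum_{v\in S}\deg_{G+H}(v)$. Now suppose some edge $e=\{a,b\}$ with $u(e)>0$ lies in no tight cut. I would then decrease $u(e)$ by $1$ and add a unit demand $\{a,b\}$.
\begin{itemize}
\item Any cut containing $e$ loses $2$ in surplus, which is at most its surplus, since a non-tight cut has surplus at least $2$.
\item The Euler condition is preserved.
\item Face-incidence is kept if $a$ and $b$ lie on $F$.
\end{itemize}
So it is cleanest to first run this argument for edges on $F$, applied to a minimal counterexample after subdividing so that the relevant edges lie on $F$. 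The smaller instance is routable by minimality, and the demand $\{a,b\}$ routes back along $e$. Hence every edge of $F$ lies in a tight cut, and we may take these tight cuts to be central by Lemma~\ref{lemma:cut_sufficiency}'s argument. Next, since $d$ is submodular-like on the face, tight cuts can be uncrossed: if $S$ and $T$ are tight and cross, then $S\cap T$ and $S\cup T$ (or $S\setminus T$ and $T\setminus S$) are tight. Here I use that every demand has both ends on $F$, so a demand crossing both cuts "diagonally" only helps the inequality.

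\textbf{Splitting a demand.} Choose a tight set $S$ whose intersection with $F$ is an interval minimal under inclusion among tight sets containing a fixed face edge. Take a demand $\{s,t\}$ with $s\in S$, $t\notin S$ (one exists since the cut is tight with $u(\delta_G(S))>0$), and a face vertex $w\in\delta$-boundary of $S$. Replace $\{s,t\}$ by $\{s,w\}$ and $\{w,t\}$, each of the same demand value. I must check that the cut-condition survives: a violated cut would cross $S$ in a way contradicting the minimality of $S$ via the uncrossing identity. The Euler condition is preserved. I then shrink to reduce the potential, e.g.\ subtract a unit of capacity along $e$ together with the demand $\{s,w\}$ routed through it, to obtain a smaller instance, and conclude by minimality.

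\textbf{Main obstacle.} I expect the obstacle to be this last step: choosing the split vertex $w$ and the extremal tight set so that no new cut becomes violated. This requires a careful case analysis of how an arbitrary central cut (an interval of $F$) interacts with $S$ and with the positions of $s,t,w$ on $F$. That analysis is where planarity and the single-face hypothesis are genuinely used.
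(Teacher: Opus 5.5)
The paper does not prove this theorem: it quotes it from Okamura and Seymour and uses it as a black box. So there is no in-paper argument to compare with, and your proposal has to stand on its own.

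\textbf{What is fine.} Your opening steps are the standard ones and are correct:
\begin{itemize}
\item the density and scaling reduction to integral Eulerian data;
\item the parity of the surplus $\sigma(S):=u(\delta_G(S))-d(\delta_H(S))$;
\item moving one unit of an edge with both ends on $F$ from capacity into demand, which shows that every boundary edge of $F$ lies in a tight cut.
\end{itemize}

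\textbf{First gap: the uncrossing claim is false.} The intuition behind it is also backwards. Write $E_G(A,B)$ and $E_H(A,B)$ for the supply and demand edges between $A$ and $B$. For all $S,T$,
\[
\sigma(S)+\sigma(T)=\sigma(S\cap T)+\sigma(S\cup T)+2\,u\bigl(E_G(S\setminus T,T\setminus S)\bigr)-2\,d\bigl(E_H(S\setminus T,T\setminus S)\bigr),
\]
and the analogous identity holds for the pair $S\setminus T$, $T\setminus S$ with $E(S\cap T,V\setminus(S\cup T))$ in place of $E(S\setminus T,T\setminus S)$. So ``diagonal'' demands are exactly what \emph{prevents} the uncrossed sets from being tight. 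Concretely, take the $4$-cycle $a,b,c,d$ with all capacities $2$ and demands $\{a,c\},\{b,d\}$ of value $2$. This instance is Eulerian and satisfies the cut-condition, and $S=\{a,b\}$, $T=\{b,c\}$ are tight. Yet $\{a\},\{b\},\{c\},\{a,b,c\}$ all have surplus $2$, so none of the four uncrossed sets is tight.

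\textbf{Second gap: the splitting step does not work as described.} Pinning an arbitrary crossing demand $\{s,t\}$ at a single vertex $w$ leaves $u$ unchanged and increases total demand, so your potential goes up. ``Routing $\{s,w\}$ through $e$'' only makes sense if you pin at \emph{both} endpoints of a face edge $e\in\delta_G(S)$. More importantly, the cut-condition check you defer cannot succeed for an arbitrary demand. The missing idea is an extremal choice of the demand, as in the standard proof:
\begin{itemize}
\item Reduce to $2$-connected $G$, so that $F$ is a cycle $v_1,\dots,v_l$.
\item Take a tight $X$ with $X\cap V(F)=\{v_1,\dots,v_j\}$ of minimum length, and let $e=\{v_l,v_1\}$.
\item Among demands $\{v_i,v_k\}$ with $i\le j<k$, choose one with $k$ \emph{maximal}.
\item Replace one unit of it by $\{v_i,v_1\}$ and $\{v_l,v_k\}$, and lower $u(e)$ by one.
\end{itemize}

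Now suppose $Y$ is violated in the new instance; you may take $Y$ central with $v_l\notin Y$. Then $Y$ was tight and loses exactly $2$, which forces the pattern of $(v_1,v_i,v_k)$ relative to $Y$ to be (in, out, out), (out, in, in) or (in, in, in).
\begin{itemize}
\item In the first case, $Y$ itself is a tight set with a shorter interval, contradicting the choice of $X$.
\item In the other two cases, maximality of $k$ (together with all terminals lying on $F$) gives $d(E_H(X\cap Y,V\setminus(X\cup Y)))=0$. Hence $\sigma(X\setminus Y)+\sigma(Y\setminus X)+2u(E_G(X\cap Y,V\setminus(X\cup Y)))=0$.
\item In the (out, in, in) case this makes $X\setminus Y\ni v_1$ a tight set with a shorter interval.
\item In the (in, in, in) case it is impossible, because $e\in E_G(X\cap Y,V\setminus(X\cup Y))$ has positive capacity.
\end{itemize}
This case analysis, and the extremal choice of $k$ that makes the one legitimate uncrossing available, is the heart of the proof. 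Your proposal does not contain it.
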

Since every vertex of an outerplanar graph is incident to the outer
face, an immediate consequence of the above theorem is that the cut-condition is sufficient for the feasibility of an instance if the supply graph $G$ is outerplanar.

\section{Improved Upper Bound for Outerplanar Instances}\label{section:2dmax}
In this section, we give a proof of Theorem~\ref{theorem:main}, which we restate for convenience.
\theoremmain*

Our main approach is to reduce our instance to two smaller instances, which can be solved recursively.
Ultimately, we arrive at a ring-loading problem, which is better understood than the case where $G$ is outerplanar.
We introduce the results on the ring-loading problem that we use in Section~\ref{sec:ring_loading}.
Afterwards, we prove Theorem~\ref{theorem:main}.

\subsection{Ring-Loading}\label{sec:ring_loading}

For the ring-loading problem, D\"aubel~\cite{daubel2019ringloadingBestUpperbound} showed how to compute a $1.3\,$-feasible unsplittable flow in any feasible instance.
For our purposes, however, it is essential to obtain a much stronger bound on the capacity violation for one special edge, while still keeping the capacity violations on all other edges small.
Therefore, we make use of the algorithm of Schrijver, Seymour, and Winkler~\cite{schrijver1998ringloading}, which only gives a $\frac{3}{2}$-feasible unsplittable flow, but also directly implies a bound of $+\frac{1}{2}\Dmax$ for the capacity violation on one edge that can be chosen arbitrarily in advance.

\begin{restatable}[Schrijver, Seymour, Winkler~\cite{schrijver1998ringloading}]{theorem}{theoremring}\label{theorem:ring_loading_3_2}
Let $(G, u, H, d)$ be a feasible ring-loading instance and $\etight \in E(G)$.
Then there is a $\vec\beta$-feasible unsplittable flow $y = (y_h)_{h \in E(H)}$,
where $\vec \beta(\etight) = \frac{1}{2}$ and $\vec \beta(e) = \frac{3}{2}$ for all other edges $e \in E(G) \setminus \{\etight\}$.
\end{restatable}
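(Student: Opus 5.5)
We follow the classical argument of Schrijver, Seymour, and Winkler, keeping track of which edge receives which violation. Write the cycle as $G=v_0v_1\cdots v_{n-1}v_0$ with edges $e_i=\{v_{i-1},v_i\}$, and choose the labelling so that $\etight=e_n=\{v_{n-1},v_0\}$. Each demand $h=\{s,t\}$ has two routes: the arc $A_h$ avoiding $\etight$ and the complementary arc $\bar A_h$ through $\etight$. Fix a feasible fractional flow $x$, which exists by assumption. We may assume that $x_h$ sends $\lambda_h d(h)$ along $A_h$ and $(1-\lambda_h)d(h)$ along $\bar A_h$ for some $\lambda_h\in[0,1]$, since in a cycle these are the only two simple paths.

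The flow on an edge $e$ is affine in the vector $\lambda$. Moving demand $h$ from $\bar A_h$ to $A_h$ changes $x(e)$ by $+d(h)$ on $A_h$ and by $-d(h)$ on $\bar A_h$. Equivalently, for $e\in A_h$ we have $x(e)=\text{const}+\sum_h d(h)\lambda_h$ restricted to the relevant $h$, and similarly for $e\in\bar A_h$. The plan is to round $\lambda$ to $\{0,1\}$ while controlling the load change on each edge.

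We proceed in three steps.

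\textbf{Step 1: reduction to crossing demands.} We uncross. If two fractional demands $h,h'$ have non-crossing arcs, we shift mass between them so that one of them becomes integral. This keeps every edge load unchanged, which follows from a parity/cut argument on the cycle. After this step, the fractionally routed demands are pairwise crossing, meaning their endpoints interleave around the cycle. List them as $h_1,\dots,h_k$ in the cyclic order of their endpoints. For any edge $e$, the set of fractional demands whose arc $A_{h_j}$ contains $e$ then forms a contiguous interval in this order.

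\textbf{Step 2: greedy rounding.} Process $h_1,\dots,h_k$ in order and round each to $A$ or $\bar A$, maintaining a running discrepancy
\[D_j=\sum_{i\le j}(\lambda_{h_i}^{\text{new}}-\lambda_{h_i})\,d(h_i).\]
The rule is to route $h_{j+1}$ so that $D_{j+1}$ stays in $(-\tfrac12\Dmax,\tfrac12\Dmax]$, which is possible because each step changes $D$ by at most $d(h)\le\Dmax$.

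We then bound edge loads. For an edge $e$, the change in load equals a signed combination of $D$ at the interval endpoints. The $A$-side demands contribute positively and the $\bar A$-side demands negatively, so the change is $D_b-D_a$ or $D_b-D_a\pm D_k$, depending on whether the interval wraps around. Each $|D|\le\tfrac12\Dmax$, which gives increments of at most $\tfrac32\Dmax$.

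\textbf{Step 3: the bound on $\etight$.} The edge $\etight$ lies in $\bar A_h$ for every $h$. Its load change is therefore exactly $-D_k$, which is at most $\tfrac12\Dmax$.

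The main obstacle is the accounting in Step 2. We must order the demands and choose the endpoints of the interval so that no edge accumulates more than three half-$\Dmax$ terms. Since the labelling was chosen so that $\etight$ is crossed by all $\bar A$ arcs, it picks up only the single term $D_k$. This is precisely the Schrijver–Seymour–Winkler analysis, and we expect to cite it for the detailed case check.
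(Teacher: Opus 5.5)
Your overall strategy is the paper's: invoke the Schrijver--Seymour--Winkler structure lemma so that the split demands pairwise cross, then round them greedily while keeping prefix sums in $[-\tfrac12\Dmax,\tfrac12\Dmax]$. Step 3 is also fine: $\etight\in\bar A_h$ for every $h$, so its load changes by exactly $-D_k$ whatever the order.

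\textbf{The gap is in Step 2}, which is where the content of this theorem lies.

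With $\delta_i:=(\lambda^{\mathrm{new}}_{h_i}-\lambda_{h_i})d(h_i)$, the load change on $e$ is $\sum_{i:\,e\in A_{h_i}}\delta_i-\sum_{i:\,e\in \bar A_{h_i}}\delta_i$. If the $A$-set of $e$ is $\{a+1,\dots,b\}$, this equals $2(D_b-D_a)-D_k$, not $D_b-D_a$. For a wrapping set $\{1,\dots,a\}\cup\{b+1,\dots,k\}$ it equals $2D_a-2D_b+D_k$. These expressions carry five half-$\Dmax$ terms, so $|D_j|\le\tfrac12\Dmax$ only gives $\tfrac52\Dmax$.

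This bound is actually attained if the order is merely a rotation of the right one. Take three unit crossing demands whose endpoints, read from $\etight$, are $s_1,s_2,s_3,t_1,t_2,t_3$. Process them as $h_2,h_3,h_1$ with $\lambda$-values $\tfrac12$, $1-\varepsilon$, $\varepsilon'$, where $\varepsilon'>\varepsilon>0$. Your rule then forces $D\approx\tfrac12,-\tfrac12,\tfrac12$. The edge $\{s_2,s_3\}$, whose $A$-set $\{h_1,h_2\}$ wraps, receives $\tfrac52-\varepsilon-\varepsilon'$ above $x$; taking $u=x$ makes this a genuine violation.

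So the $\tfrac32$ bound does not follow from ``contiguous intervals'' plus $|D_j|\le\tfrac12\Dmax$. The ordering is exactly what ties the special edge to the bound on all other edges, so deferring this ``case check'' leaves the proof incomplete at its crux.

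\textbf{The fix}, which is what the paper does, is to anchor the order at $\etight$. Label the crossing demands so that their endpoints appear as $s_1,\dots,s_k,t_1,\dots,t_k$ when walking around the cycle starting just after $\etight$, so that $\etight$ lies on the segment from $t_k$ to $s_1$. Then process $h_1,\dots,h_k$ in this order.

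With this order, every edge's $A$-set is either a prefix $\{1,\dots,j\}$ or a suffix $\{j+1,\dots,k\}$ of $h_1,\dots,h_k$:
\begin{itemize}
\item edges on the segment between $s_j$ and $s_{j+1}$ (with $s_{k+1}:=t_1$) have a prefix;
\item edges between $t_j$ and $t_{j+1}$ have a suffix;
\item edges between $t_k$ and $s_1$, including $\etight$, have the empty prefix.
\end{itemize}
Hence the change is always $\pm(2D_j-D_k)$ for some $0\le j\le k$, where $D_0:=0$. This is at most $\tfrac32\Dmax$ by the triangle inequality, and at most $\tfrac12\Dmax$ when $j\in\{0,k\}$. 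This is precisely the paper's identity $|y(e)-x(e)|=\bigl|2\sum_{i\le j}z_i-\sum_i z_i\bigr|$ with $\etight=\{s_k,t_1\}$.

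A minor point on Step 1: uncrossing does not keep loads unchanged. Shifting $\varepsilon$ between two split demands with nested arcs $A'\subseteq A$ lowers the load by $2\varepsilon$ on $A\setminus A'$. This is harmless, since only $x\le u$ is needed, and the paper simply cites the SSW lemma.
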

We also need an even stronger bound for the special case where all demand edges are incident to one of two specified vertices.

\begin{restatable}{lemma}{lemmaring}\label{lemma:ring_loading_with_two_demands}
    Let $\Iscr = (G, u, H, d)$ be a feasible ring-loading instance with two vertices $v,w \in V(G)$ such that any demand edge contains $v$ or $w$.
    Let $e_1, e_2 \in E(G)$.
    Then there exists a $\vec \beta$-feasible unsplittable flow $y = (y_h)_{h \in E(H)}$ for $\Iscr$, where $\vec \beta(e_1) = \vec \beta(e_2) = \frac{1}{2}$ and $\vec \beta(e) = \frac{3}{2}$ for $e \in E(G) \setminus \{e_1, e_2\}$.
\end{restatable}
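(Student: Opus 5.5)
The plan is to start from a fractional flow and uncross it until at most two demands remain split, then handle those two by hand. I would fix a feasible fractional flow $x$. On a cycle, every demand $h=\{s,t\}$ has exactly two $st$-paths, so $x_h$ is described by a split $\lambda_h\in[0,1]$: the fraction routed on the clockwise path. The vertices $v$ and $w$ cut the cycle into two arcs. Every demand edge is of one of three kinds: incident to $v$ only, incident to $w$ only, or equal to $\{v,w\}$.

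\textbf{Step 1 (uncrossing).} Take two demands $h=\{v,a\}$ and $h'=\{v,b\}$ sharing the endpoint $v$, with $a$ before $b$ in clockwise order from $v$. Suppose $h'$ sends flow clockwise while $h$ sends flow counterclockwise. Shifting $\varepsilon$ of $h'$'s clockwise flow to counterclockwise and $\varepsilon$ of $h$'s counterclockwise flow to clockwise (for suitable $\varepsilon$, weighting by demands) does not increase the load of any edge. The edges between $v$ and $a$ keep their load, and the edges on the arc from $a$ to $b$ change in a compensating way. This is the standard exchange argument from~\cite{schrijver1998ringloading}. Repeating it, the demands at $v$ become nested: targets in an initial clockwise interval from $v$ are routed clockwise, the rest counterclockwise, and at most one demand $h_v$ at $v$ is split. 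The same holds at $w$, giving at most one split demand $h_w$. Demands parallel to $\{v,w\}$ can be merged into this ordering at either endpoint, so we can assume at most two split demands overall.

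\textbf{Step 2 (reduction).} I would fix all unsplit demands on their current paths and subtract their loads from $u$. The residual instance contains only $h_v$ and $h_w$, and it is fractionally feasible because $x$ restricted to these two demands routes them. So it suffices to prove the lemma for a feasible ring instance with at most two demands, both at most $\Dmax$, with the bounds measured against the residual capacities.

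\textbf{Step 3 (rounding two demands).} This is the main obstacle. If demand $h$ has split $\lambda$, rounding it clockwise increases the load on the clockwise path by $(1-\lambda)d(h)$. Rounding it counterclockwise increases the load on the other path by $\lambda d(h)$. With one split demand, a single choice adds at most $d(h)\le\Dmax$ to each edge. We would pick the direction that adds at most $\frac12 d(h)$ on both $e_1$ and $e_2$. This is possible when $e_1,e_2$ lie on the same side of $h$; otherwise take the direction by $\min(\lambda,1-\lambda)$. That needs care and is the first thing I would check, perhaps by choosing which of $\lambda$ or $1-\lambda$ is small via the uncrossing in Step 1.

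With two split demands, an edge can receive contributions from both, so I would case-split on the positions of $e_1$, $e_2$ relative to the four endpoints. In each case I would pick among the four roundings one that gives increase at most $\frac12\Dmax$ on $e_1,e_2$ and at most $\frac32\Dmax$ elsewhere. For the ``elsewhere'' bound, I would use that every edge receives at most $\min$-side contributions summing to at most $\frac32\Dmax$ once one demand is rounded toward its larger side. I would also try first to exploit the remaining freedom in Step 1, namely which demand ends up split, to force $e_1$ and $e_2$ into favorable positions before doing this case analysis.
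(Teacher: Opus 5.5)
\textbf{Steps 1--2.} These are sound and reach the same reduction as the paper. The paper cites the Schrijver--Seymour--Winkler structure lemma: split demands pairwise cross, hence are vertex-disjoint, hence there are at most two, since each contains $v$ or $w$. You instead re-derive the special case you need by exchanging at the shared endpoints $v$ and $w$, which is a legitimate self-contained alternative.

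To make ``repeating it'' rigorous, take $x$ minimizing $\sum_e x(e)$; each of your exchanges with $a\neq b$ strictly decreases this sum. Then consolidate parallel demands by load-neutral exchanges. Otherwise, exchanges at $w$ can disturb the order you built at $v$ through $\{v,w\}$-demands.

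\textbf{The gap is Step 3.} This is where the content of the lemma lies, and you leave it as an unexecuted case analysis with hedges. The one-demand case needs no care: routing $h$ on the path carrying at least $\frac12 d(h)$ raises every load by at most $\frac12\Dmax$.

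For two split demands $h_1,h_2$ you are missing the dichotomy that makes the case analysis collapse. It works for \emph{any} two demands and any fractional $x$, so no extra freedom from Step 1 is needed. Your heuristic for the other edges is right, but you must show it is compatible with the $\frac12$ bound on $e_1,e_2$, and the dichotomy is what does that.

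(a) Suppose that for some $i$ both $e_1,e_2$ lie on the same $s_it_i$-path. Route $h_i$ on the other path, which avoids $e_1,e_2$ and raises loads there by at most $\Dmax$. Route $h_{3-i}$ on its heavier path, which raises any load by at most $\frac12\Dmax$. This gives at most $\frac12\Dmax$ on $e_1,e_2$ and at most $\frac32\Dmax$ elsewhere.

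(b) Otherwise, each demand has one path through $e_1$ and the other through $e_2$. Then every flow $f$ satisfies $f(e_1)+f(e_2)=d(h_1)+d(h_2)$, so $y(e_2)-x(e_2)=-(y(e_1)-x(e_1))$, and it suffices to bound the change on $e_1$ from both sides. Let $\lambda_i$ be the fraction of $x_{h_i}$ on the path through $e_1$, and write $z_i:=y_{h_i}(e_1)-x_{h_i}(e_1)\in\{(1-\lambda_i)d(h_i),\,-\lambda_i d(h_i)\}$. Every edge then changes by $\pm z_1\pm z_2$, with $e_1$ changing by $z_1+z_2$ and $e_2$ by $-(z_1+z_2)$.

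Choose $|z_1|\le\frac12 d(h_1)$. Then choose $z_2$ with $|z_1+z_2|\le\frac12\Dmax$; this is possible because the two options for $z_1+z_2$ differ by $d(h_2)\le\Dmax$ and bracket $z_1$. This gives at most $\frac12\Dmax$ on $e_1,e_2$, and elsewhere $|z_1-z_2|\le 2|z_1|+|z_1+z_2|\le\frac32\Dmax$.

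The paper handles (b) by the same invariant, combined with the one-special-edge Schrijver--Seymour--Winkler theorem applied with $\etight=e_1$ (using its two-sided bound on $\etight$).
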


The proof of Theorem~\ref{theorem:ring_loading_3_2} follows by using the same algorithm and analysis as Schrijver, Seymour, and Winkler~\cite{schrijver1998ringloading}.
We include the details, together with the proof of Lemma~\ref{lemma:ring_loading_with_two_demands}, in Appendix~\ref{appendix:omitted}.

\subsection{Proof of Theorem~\ref{theorem:main}}\label{sec:proof_main_thm}

Recall that a nonempty vertex set $S\subsetneq V(G)$ is central if $G[S]$ and $G[V(G) \setminus S]$ are connected.
A useful tool which we use in our proof is the notion of \emph{\nice{$x$} central sets}:

\begin{definition}[\nice{$x$} sets]\label{def:nice_cut}
    Let $(G, u, H, d)$ be an outerplanar multicommodity flow instance, and let $x = (x_h)_{h \in E(H)}$ be a feasible flow.
    We call a central vertex set $S \subseteq V(G)$ \emph{\nice{$x$}} if every $h \in E(H[S])$ is routed in $G[S]$, i.e., if $x_h(e) = 0$ for every edge $e \notin E(G[S])$.
\end{definition}

We make use of two types of \nice{$x$} sets, which are captured in the following two lemmata:

\begin{lemma}\label{lemma:tight_cuts_are_nice}
    Let $(G, u, H, d)$ be a multicommodity flow instance and $x = (x_h)_{h \in E(H)}$ a feasible flow.
    Let $(S, V(G) \setminus S)$ be a central cut that is tight, i.e., $u(\delta_G(S)) = d(\delta_H(S))$.
    Then $S$ is \nice{$x$}.
\end{lemma}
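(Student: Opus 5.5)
The plan is a counting argument on the capacity of the tight cut: every demand crossing the cut must send its entire value through $\delta_G(S)$ at least once. If in addition some demand with both endpoints in $S$ puts positive flow outside $G[S]$, that flow must also use cut edges, and the cut capacity is exceeded.

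First I fix the accounting. For each $h\in E(H)$ and each path $P\in\mathcal{P}_h$ with $x_h(P)>0$, let $|P\cap\delta_G(S)|$ denote the number of cut edges on $P$. Summing over the cut edges and using feasibility gives
\[
u(\delta_G(S)) \;\ge\; \sum_{e\in\delta_G(S)} x(e) \;=\; \sum_{h\in E(H)}\ \sum_{P\in\mathcal P_h} x_h(P)\,|P\cap\delta_G(S)|.
\]
For $h\in\delta_H(S)$, every $P\in\mathcal P_h$ has one endpoint in $S$ and one outside. Hence $|P\cap \delta_G(S)|\ge 1$, and this demand contributes at least $d(h)$ to the right-hand side. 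All other terms are nonnegative, so the right-hand side is at least $d(\delta_H(S))$. By tightness, every inequality in the chain must therefore hold with equality.

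Next I extract what equality forces. Consider a demand $h\in E(H[S])$ and a path $P\in\mathcal P_h$ with $x_h(P)>0$ that uses some edge $e\notin E(G[S])$. Since both endpoints of $P$ lie in $S$, the path $P$ must leave $S$ to reach $e$. If $e\in E(G[V\setminus S])$, then $P$ leaves $S$ and later returns, so it uses at least two cut edges. If $e\in\delta_G(S)$, then $P$ crosses at least once, and since it starts and ends in $S$ it actually crosses an even, positive number of times. In either case $|P\cap\delta_G(S)|\ge 2>0$, so this path contributes strictly positively to the sum. That contribution is not accounted for by the $d(\delta_H(S))$ lower bound, which gives
\[
u(\delta_G(S))\;\ge\; d(\delta_H(S)) + x_h(P)\,|P\cap\delta_G(S)| \;>\; d(\delta_H(S)),
\]
contradicting tightness.

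It follows that $x_h(e)=0$ for all $e\notin E(G[S])$ and all $h\in E(H[S])$. Since $S$ is central by assumption, $S$ is \nice{$x$}.

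None of these steps is a real obstacle. The only care needed is in the parity observation: a path with both ends in $S$ that touches anything outside $G[S]$ must use a cut edge. Centrality is not used in the argument itself; it only enters through the definition of \nice{$x$} sets.
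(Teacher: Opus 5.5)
Your proof is correct and is essentially the paper's argument. Both lower-bound the flow across $\delta_G(S)$ by $d(\delta_H(S))$ and use tightness to force equality. The only difference is bookkeeping: you count cut crossings path by path and argue by contradiction, while the paper writes the chain with edge-wise quantities $x_h(\delta_G(S))$ and reads off $x_h(\delta_G(S))=0$ for $h\in E(H[S])$ directly.
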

\begin{proof}
Since $x$ is feasible, we have
\[u(\delta_G(S))\geq x(\delta_G(S))= \sum_{h\in E(H)} x_h(\delta_G(S))
\geq \sum_{h\in \delta_H(S)} x_h(\delta_G(S))\geq d(\delta_H(S))= u(\delta_G(S)).\]
Therefore, equality holds throughout the above expression.
Thus, $x_h(\delta_G(S))=0$ for every $h\in E(H[S])$. Since both endpoints of such an $h$ lie in $S$, this implies that $x_h(e)=0$ for all $e\notin E(G[S])$. Therefore, $S$ is \nice{$x$}.
\end{proof}

\begin{lemma}\label{lemma:cuts_of_size_two_are_nice}
    Let $\Iscr = (G, u, H, d)$ be a multicommodity flow instance and $x = (x_h)_{h \in E(H)}$ a feasible flow that is minimal, i.e., for any feasible flow $x^\prime$ for $\Iscr$ with $x^\prime(e) \leq x(e)$ for all $e \in E(G)$ we actually have $x^\prime(e) = x(e)$ for all $e \in E(G)$.
    Let $(S_1, S_2 = V(G) \setminus S_1)$ be a central cut with $|\delta_G(S_1)| = 2$.
    Then $S_1$ or $S_2$ is \nice{$x$}.
\end{lemma}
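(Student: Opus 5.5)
The plan is a proof by contradiction using an uncrossing (exchange) argument. Suppose that neither $S_1$ nor $S_2$ is \nice{$x$}. Then there are demands $h_1=\{s_1,t_1\}\in E(H[S_1])$ and $h_2=\{s_2,t_2\}\in E(H[S_2])$, and paths $P\in\mathcal{P}_{h_1}$ and $Q\in\mathcal{P}_{h_2}$ with $x_{h_1}(P)>0$ and $x_{h_2}(Q)>0$, such that $P$ uses an edge outside $G[S_1]$ and $Q$ uses an edge outside $G[S_2]$. I will reroute a small amount $\varepsilon:=\min\{x_{h_1}(P),x_{h_2}(Q)\}>0$ of both flows so that the load on every edge weakly decreases and the load on the cut edges strictly decreases. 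This contradicts the minimality of $x$.

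\textbf{Structure of the crossing paths.} Let $\delta_G(S_1)=\{e_1,e_2\}$ with $e_i=\{a_i,b_i\}$, where $a_i\in S_1$ and $b_i\in S_2$. Both endpoints of $P$ lie in $S_1$, yet $P$ leaves $S_1$. So $P$ crosses the cut an even, positive number of times. Since $P$ is simple, it uses each of $e_1,e_2$ at most once, so it crosses exactly twice, once through each edge. Hence, after possibly swapping the names $s_1,t_1$,
\[P = A\,e_1\,B\,e_2\,C,\]
where $A$ is an $s_1a_1$-path in $G[S_1]$, $B$ is a $b_1b_2$-path in $G[S_2]$, and $C$ is an $a_2t_1$-path in $G[S_1]$. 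By the same argument, after possibly swapping $s_2,t_2$,
\[Q = D\,e_1\,E\,e_2\,F,\]
where $D$ is an $s_2b_1$-path in $G[S_2]$, $E$ is an $a_1a_2$-path in $G[S_1]$, and $F$ is a $b_2t_2$-path in $G[S_2]$.

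\textbf{Exchange.} Decrease $x_{h_1}(P)$ and $x_{h_2}(Q)$ by $\varepsilon$. Route $\varepsilon$ units of $h_1$ along the $s_1t_1$-walk $A\,E\,C$, which stays inside $S_1$, and $\varepsilon$ units of $h_2$ along the $s_2t_2$-walk $D\,B\,F$, which stays inside $S_2$.
\begin{itemize}
\item The new walks use exactly the edge segments $A,B,C,D,E,F$ with the same multiplicities as before.
\item The two uses each of $e_1$ and $e_2$ are removed.
\end{itemize}
These walks may fail to be simple, so I shortcut each to a simple path contained in it. This can only lower edge loads.

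\textbf{Conclusion.} The resulting flow $x'$ routes all demands and satisfies $x'(e)\le x(e)\le u(e)$ for all $e$, so it is feasible. Moreover $x'(e_1)\le x(e_1)-2\varepsilon<x(e_1)$. This contradicts the minimality of $x$, so $S_1$ or $S_2$ is \nice{$x$}.

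The only delicate point is the structural step. It needs simplicity of $P$ and $Q$ together with $|\delta_G(S_1)|=2$ to force exactly one crossing per cut edge. It also needs the orientation bookkeeping: choosing labels so that both paths traverse $e_1$ before $e_2$, so that the segments $E$ and $B$ connect the right endpoints. Everything else is routine; notably, centrality of the cut is not even needed beyond $|\delta_G(S_1)|=2$.
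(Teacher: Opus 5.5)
Your proof is correct and is essentially the paper's argument: your segments $e_1Be_2$ and $e_1Ee_2$ are exactly the paper's minimal subpaths $Q_1,Q_2$ containing both cut edges. Swapping $B$ and $E$ is the paper's replacement of $Q_i$ by $Q_{3-i}$ with the two cut edges removed, followed by shortcutting and a contradiction to minimality via the $2\varepsilon$ drop on the cut edges. Your write-up is only more explicit about the exactly-once crossing, the orientation bookkeeping, and why edge loads cannot increase after shortcutting.
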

\begin{proof}
    Set $C := \delta_G(S_1) = \delta_G(S_2)$. Assume that neither $S_1$ nor $S_2$ is \nice{$x$}.
    Then, for each $i=1,2$, there is a demand $h_i=\{s_i,t_i\}\in E(H[S_i])$ such that $x_{h_i}(C)\neq 0$.
    Choose an $s_i t_i$-path $P_i$ with $E(P_i)\cap C\neq\emptyset$ and $x_{h_i}(P_i)>0$.
    Since both endpoints of $h_i$ lie in $S_i$, the path $P_i$ contains both edges of $C$. Let $Q_i$ be the minimal subpath of $P_i$ containing both edges of $C$. 
    Then $Q_i-C$ is a path in $G[S_{3-i}]$. 
    For each $i=1,2$, let $P_i^\prime$ be the $s_i t_i$-path obtained from $P_i$ by replacing $Q_i$ with $Q_{3-i}-C$ and possibly deleting cycles.
    Now, decrease $x_{h_i}(P_i)$ by a sufficiently small amount $\varepsilon>0$ and increase $x_{h_i}(P_i^\prime)$ by $\varepsilon$, for $i=1,2$.
    This yields another feasible flow $x^\prime$ with $x^\prime(e)\leq x(e)$ for every $e\in E(G)$, and with $x^\prime(e)=x(e)-2\varepsilon$ for both edges $e\in C$, contradicting minimality of $x$.
\end{proof}

We can use an $x$-nice set in the following way to round the splittable flow $x$ to an unsplittable one:
First, we remove demands in $H[S]$, together with their flows, from our instance and replace $G[S]$ by a path (consisting of the outer edges of $G$ within $S$) with adequate edge capacities.
This yields a \emph{cut instance} (cf.\@ Definition~\ref{def:cut_instance}).
Due to the fact that $S$ is \nice{$x$},
the cut instance differs from the original instance only within $S$, so after solving the cut instance recursively, we can fix the obtained unsplittable flow $y^\prime$ everywhere except within $S$.
Afterwards, we need to route the demands in $H[S]$ and reconnect the segments of the unsplittable flows in $y^\prime$ that are routed in $S$.
All of this needs to be routed through $S$, so we encode this task in a \emph{split instance} (cf.\@ Definition~\ref{def:split_instance}), which we also solve recursively.

\begin{figure}
    \centering
    \includegraphics[width=0.4\textwidth]{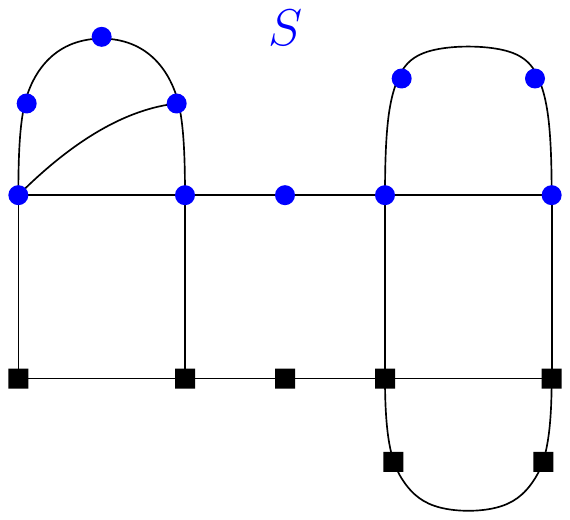}
    \hspace{10mm}
    {\includegraphics[width=0.4\textwidth]{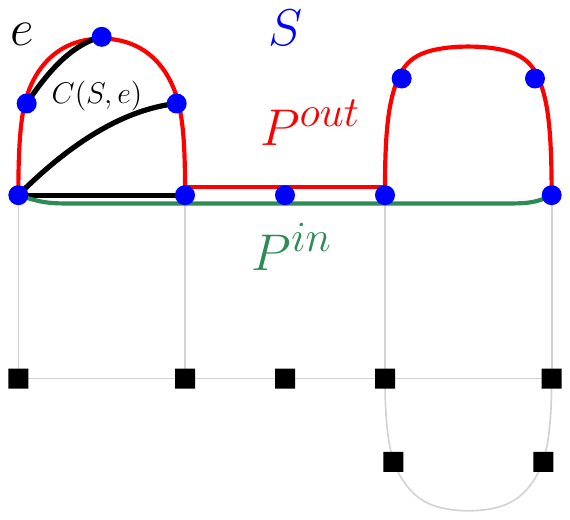}}

    \vspace{5mm}
    \caption{The left image shows an outerplanar graph with a central set $S$, whose vertices are blue and circular. The corresponding inner and outer paths are shown on the right (cf. Definition~\ref{def:outerinner}). Together, these paths form the outer edges of $G[S]$. For the indicated (outer) edge $e$, the semi-cut $C(S,e)$ consists of $e$, the leftmost edge of $\Pinner$, and the inner black edge of $G[S]$ between them (cf. Definition~\ref{def:capacity_definition_edges}).
    }\label{fig:inner_outer_path}
\end{figure}

\begin{definition}[Outer and inner paths]\label{def:outerinner}
    Let $G$ be a $2$-vertex-connected outerplanar graph and $S \subsetneq V(G)$ a central set.
    \begin{enumerate}[label=(\roman*)]
    \item The \emph{outer path} of $S$ is the unique path $\Pouter$ with $V(\Pouter)=S$ and $E(\Pouter)\subseteq \Eouter(G)$.
    \item For any $a, b \in S$ we define their outer path $\Pouter_{a,b}$ as the unique $ab$-subpath of $\Pouter$.
    \item The \emph{inner path} of $S$ is the path $\Pinner$ in $G[S]$ between the endpoints of $\Pouter$ whose edge set is the union of $\Eouter(G[S])\cap \Einner(G)$ and the bridges~\footnote{A bridge is an edge whose removal disconnects the graph.} of $G[S]$. 
    Equivalently, if $G[S]$ contains no parallel edges, $\Pinner$ is the unique shortest path in $G[S]$ between the endpoints of $\Pouter$.
    \end{enumerate}
    See Figure~\ref{fig:inner_outer_path} for an example.
\end{definition}

\begin{lemma}\label{remark:paths}
In the situation of Definition~\ref{def:outerinner} all of the following properties hold.
\begin{enumerate}[label=(\alph*)]
\item The outer path $\Pouter$ and inner path $\Pinner$ are well-defined.\label{itema}
\item The vertex sequence of $\Pinner$ is a subsequence of the vertex sequence of $\Pouter$.\label{itemb}
\item The edges in $\Pinner\cap\Pouter$ are precisely the bridges of $G[S]$.\label{itemc}
\item The edges in $\Pouter\cup\Pinner$ are precisely the outer edges of $G[S]$.\label{itemd}
\item The inner vertices of $\Pinner$ are precisely the cut-vertices of $G[S]$.\label{iteme}
\item If $|S|\geq2$, there is a one-to-one correspondence between the blocks of $G[S]$ and the edges of $\Pinner$. More precisely, each edge $\{v,w\}\in \Pinner$ corresponds to the block induced by the vertex set of $\Pouter_{v,w}$.\label{itemf}
\end{enumerate}
\end{lemma}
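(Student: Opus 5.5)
The plan is to fix the cyclic order of the outer face and translate every statement into a statement about chords. Since $G$ is $2$-vertex-connected and outerplanar, its outer face is a Hamiltonian cycle $C=v_0v_1\cdots v_{n-1}v_0$. Every inner edge is a chord of $C$, and no two chords cross.

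As observed after Lemma~\ref{lemma:cut_sufficiency}, a central set $S$ is the vertex set of a subpath of $C$; without loss of generality $S=\{v_0,\dots,v_k\}$ with $k<n-1$. Then $\Pouter=v_0v_1\cdots v_k$ consists of outer edges with vertex set $S$. It is unique: a second such path would have to use the edge $v_kv_0$, which would make $S$ the whole cycle, contradicting $S\neq V(G)$. This settles the first half of \ref{itema}. Every edge of $G[S]$ is either an edge of $\Pouter$ or a chord $v_iv_j$ with $0\le i<j\le k$. So $G[S]$ is the path $\Pouter$ together with a non-crossing family of chords drawn on one side of it, which is its induced outerplanar embedding.

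Next I would describe the outer face of $G[S]$ directly. Call a chord $v_iv_j$ inside $S$ \emph{maximal} if no chord $v_{i'}v_{j'}$ of $G[S]$ has $i'\le i<j\le j'$ with $(i',j')\ne(i,j)$. By non-crossing, the maximal chords have pairwise interior-disjoint index intervals. The walk from $v_0$ to $v_k$ that jumps along maximal chords and otherwise follows $\Pouter$ is then a path. Its vertex sequence is a subsequence of $v_0,\dots,v_k$, which gives \ref{itemb}. This path, together with $\Pouter$, bounds the outer face of $G[S]$.

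I then check the three types of edge on this path.
\begin{itemize}
\item A maximal chord is inner in $G$ and outer in $G[S]$.
\item An edge $v_iv_{i+1}$ of $\Pouter$ not covered by any chord is a bridge of $G[S]$.
\item Every other edge of $\Pouter$ lies in the interior of the cycle $v_i\cdots v_jv_i$ formed by a maximal chord, so it is not a bridge. Likewise, non-maximal chords are not on the outer face of $G[S]$.
\end{itemize}
Hence the jumping path has edge set $(\Eouter(G[S])\cap\Einner(G))\cup\{\text{bridges of }G[S]\}$, so it is $\Pinner$. This proves \ref{itema}, and from the same case analysis \ref{itemc} and \ref{itemd} follow directly. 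The bridges are exactly the uncovered edges, which lie on both paths. The outer edges of $G[S]$ are the edges of $\Pouter$ (which are outer in $G$, hence outer in the subgraph) together with the maximal chords.

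For \ref{iteme} and \ref{itemf}, let $\Pinner=w_0w_1\cdots w_m$. Each edge $w_{t}w_{t+1}$ corresponds to $\Pouter_{w_t,w_{t+1}}$. If the edge is a bridge, this vertex set induces just that edge, which is a block. If it is a maximal chord, the set induces the cycle $\Pouter_{w_t,w_{t+1}}+w_tw_{t+1}$ plus chords, which is $2$-connected. These vertex sets cover $S$, consecutive ones meet only in a single vertex $w_t$, and no edge of $G[S]$ joins two different ones, by maximality. So they are exactly the blocks of $G[S]$, and the vertices $w_1,\dots,w_{m-1}$ are exactly the cut vertices. The endpoints $w_0,w_m$ lie in only one block and are not cut vertices.

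For the equivalent characterization in Definition~\ref{def:outerinner}, assume $G[S]$ has no parallel edges. Any $v_0v_k$-path in $G[S]$ must pass through every cut vertex $w_t$, and within each block it needs at least one edge. Therefore $\Pinner$, which uses exactly one edge per block, is the unique shortest such path. Parallel edges would break uniqueness, which is why that hypothesis is there.

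The main obstacle is the rigorous justification that the outer face of $G[S]$ is exactly the union of $\Pouter$ and the maximal chords. I would argue it via the non-crossing property: any non-maximal chord or covered path edge lies inside the bounded region of the cycle formed by some maximal chord. Everything else is bookkeeping.
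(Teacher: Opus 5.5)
Your argument is correct (the planar fact you flag as the main obstacle is asserted at a similar level of detail in the paper), but it is organized differently.

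The paper argues by induction on $|S|=\ell$. It takes the smallest $j$ with $\{v_j,v_\ell\}\in E(G)$. If $j=1$, then $G[S]$ is $2$-connected with outer cycle $v_1,\dots,v_\ell,v_1$, and $\Pinner$ is the single edge $\{v_1,v_\ell\}$. Otherwise $G[S]$ is $G[S_1]$ glued at the cut vertex $v_j$ to the leaf block $G[S_2]$. Then $\Pinner$ is $\Pinner_1$ followed by $\{v_j,v_\ell\}$, and (a)--(f) are inherited from $S_1$.

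The edge $\{v_j,v_\ell\}$ is exactly your maximal chord ending at $v_\ell$ (or the uncovered edge $\{v_{\ell-1},v_\ell\}$). So the paper peels your blocks off one at a time from the right, while you describe them all at once. Your global description has blocks as the intervals under outermost chords plus the uncovered path edges, and cut vertices as their shared endpoints. This makes (e), (f) and the shortest-path characterization in Definition~\ref{def:outerinner} immediate; the paper does not prove that characterization separately. The price is that you need the enclosure fact for all maximal chords simultaneously. The induction uses it only for a single block: in the base case, and in the claim that no edge joins $S_1\setminus\{v_j\}$ to $S_2\setminus\{v_j\}$.

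Three small points:
\begin{itemize}
\item You cite the remark after Lemma~\ref{lemma:cut_sufficiency} that central sets are subpaths of the outer cycle. That remark is stated without proof, and it is exactly what makes $\Pouter$ well-defined. The paper proves it inside this lemma: a $v_av_c$-path in $G[S]$ and a $v_bv_d$-path in $G[V\setminus S]$ with $a<b<c<d$ would cross inside the outer cycle. You should include that argument.
\item A covered edge of $\Pouter$ lies \emph{on} the cycle of its maximal chord, not in its interior, and it is an outer edge of $G[S]$. Only non-maximal chords lie in the open disk bounded by that cycle. Your conclusion that covered path edges are not bridges is unaffected.
\item If $G$ has parallel chords, your definition makes every copy of an outermost chord maximal, so you should take the copy drawn outermost.
\end{itemize}
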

\begin{proof}
Assume w.l.o.g.\@ that the outer face of $G$ is bounded by the cycle $C=v_1,v_2,\hdots,v_n,v_1$, and that $v_1\in S$ and $v_n\notin S$.
Observe that there must be some $1\le \ell\le n-1$ such that $S=\{v_1,v_{2},\hdots,v_\ell\}$.
Indeed, otherwise there are some $a<b<c<d$ with $v_a,v_c\in S$ and $v_b,v_d\notin S$. Since $G[S]$ and $G[V\setminus S]$ are connected, there must be a $v_av_c$-path $P_{ac}$ in $G[S]$ and a $v_bv_d$-path $P_{bd}$ in $G[V\setminus S]$. Since the embedding of $P_{ac}$ and $P_{bd}$ lie inside the region bounded by $C$, some of the edges of these two paths would cross, contradicting the planarity of $G$.
This shows that $\Pouter$ is well-defined, and that its vertex sequence is given by $\Pouter=v_1,v_{2},\hdots,v_\ell$ for some $1\le\ell\le n-1$.

Note that $E(\Pouter)\subseteq \Eouter(G[S])$. We prove the lemma by induction on $|S|=\ell$.
It is easy to see that the lemma is satisfied if $\ell\in\{1,2\}$: if $\ell=1$, then $\Pouter$ and $\Pinner$ consist only of the vertex $v_1$, and if $\ell=2$, then $G[S]$ has a single block with vertex set $\{v_1,v_2\}$, possibly with parallel edges, so properties~\ref{itema}-\ref{itemf}  are immediate.

Let $j$ be the smallest index such that $\{v_j,v_\ell\}\in E(G)$. Note that $1\le j\le \ell-1$, since $v_{\ell-1}$ is adjacent to $v_\ell$. If $j=1$, then the outer face of $G[S]$ is bounded by the cycle $v_1,v_2,\ldots,v_\ell,v_1$.
Thus, $G[S]$ is 2-vertex-connected (and hence 2-edge-connected), and $\Pinner$ is the single edge $\{v_1,v_\ell\}$. Hence, all properties~\ref{itema}-\ref{itemf} are satisfied.
Assume now that $2\le j\le\ell-1$. Consider the central sets $S_1:=\{v_1,v_2,\hdots,v_j\}$ and $S_2:=\{v_{j},v_{j+1},\hdots,v_\ell\}$.
Let $\Pouter_1$ and $\Pinner_1$ denote the outer path and inner path of $S_1$, respectively.
By the induction hypothesis, $G[S_1]$ (and $S_1$, $\Pouter_1$, $\Pinner_1$) satisfy the lemma.

By the choice of $j$ and planarity, no edge of $G[S]$ has one endpoint in $S_1\setminus\{v_j\}$ and the other in $S_2\setminus\{v_j\}$.
Hence $G[S]$ is obtained by gluing $G[S_1]$ and $G[S_2]$ at the cut vertex $v_j$.
Moreover, $G[S_2]$ is a (leaf) block of $G[S]$. If $j=\ell-1$, this block has vertex set $\{v_{\ell-1},v_\ell\}$ and may contain parallel edges; otherwise, its outer face is bounded by the cycle $v_j,v_{j+1},\ldots,v_\ell,v_j$.
In either case, the inner path of $S_2$ consists of the edge $\{v_j,v_\ell\}$, and $\Pinner$ is obtained by concatenating $\Pinner_1$ with $\{v_j,v_\ell\}$.
Properties~\ref{itema}-\ref{itemf} for $G[S]$ now follow from the corresponding properties for $G[S_1]$, since the edge $\{v_j,v_\ell\}$ corresponds to the block $G[S_2]$, whose vertex set is precisely the vertex set of $\Pouter_{v_j,v_\ell}$.
\end{proof}

We now describe our procedure to compute an unsplittable routing by solving two smaller instances.
We first introduce the following useful notion, and then formally define the cut instance.

\begin{definition}[Semi-cut]\label{def:capacity_definition_edges}
Let $G$ be a $2$-vertex-connected outerplanar graph.
Let $S \subsetneq V(G)$ be a central set with $|S|\ge2$, and let $\Pouter$ be the outer path of $S$.
For any edge $e\in E(\Pouter)$ we define the \emph{semi-cut w.r.t.\@ $S$ and $e$} as the edge set $C(S,e) := \{\{v,w\} \in E(G[S]) : e \in E(\Pouter_{v,w})\}$.
Equivalently, $C(S,e) := \delta_{G[S]}(X)$, where $X \subseteq S$ is the vertex set of one of the two connected components of $\Pouter - e$.
See Figure~\ref{fig:inner_outer_path} for an example.
\end{definition}

\begin{definition}[Cut instance]\label{def:cut_instance}
Let $(G,u,H,d)$ be an outerplanar multicommodity flow instance where $G$ is
$2$-vertex-connected, and let $x=(x_h)_{h\in E(H)}$ be a feasible flow.
Let $S\subseteq V(G)$ be an \nice{$x$} central set with $|S|\ge2$, and let $\Pouter$ be the outer path of $S$.
The \emph{cut instance for $x$ and $S$} is the instance $\Icut=(G^\prime,u^\prime,H^\prime,d^\prime)$ defined as follows.
See Figure~\ref{fig:cut_instance} for an illustration.
The supply graph $G^\prime$ is given by
$V(G^\prime)=V(G)$ and $E(G^\prime) := \left(E(G)\setminus E(G[S])\right)\cup E(\Pouter)$.
The demand graph and demand values are given by
$E(H^\prime) := E(H)\setminus E(H[S])$ and $d^\prime(h):=d(h)$ for all $h\in E(H^\prime)$.
Edge capacities are defined as follows.
\[u^\prime(e):=\begin{cases}
 \sum_{e^\prime \in C(S,e)} \sum_{h \in E(H^\prime)} x_h(e^\prime)\,,&\:\text{if }~e\in\Pouter\\
u(e)\,,&\:\text{otherwise.}
\end{cases}\]   
\end{definition}

\begin{figure}
    \centering
    \includegraphics[width=0.4\textwidth]{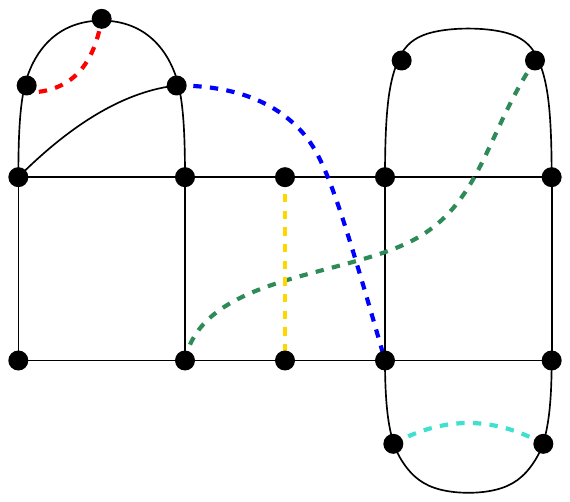}
    \hspace{8mm}
    \includegraphics[width=0.42\textwidth]{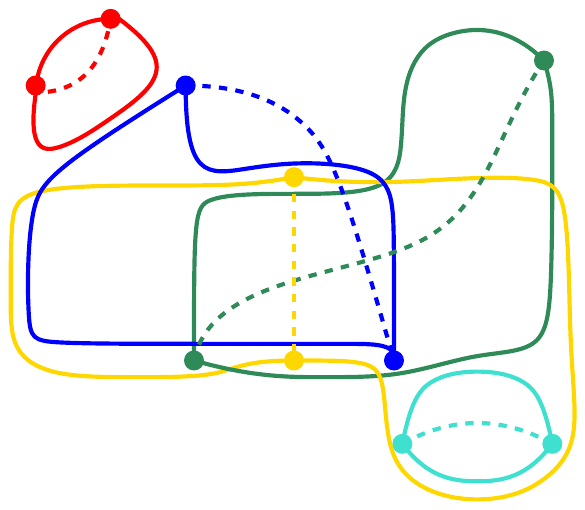}

    \vspace{5mm}
    \caption{The left image shows an outerplanar multicommodity flow instance (black edges) with 5 demands (colored dashed edges).
    The right image shows a feasible fractional flow, where the flow paths of each demand are drawn in the same color as the demand.
    }\label{fig:instance}
\end{figure}

\begin{figure}
    \centering
    \includegraphics[width=0.4\textwidth]{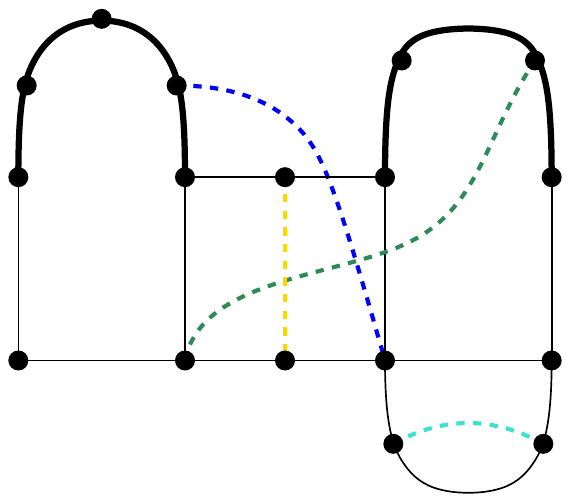}
    \hspace{8mm}
    \includegraphics[width=0.42\textwidth]{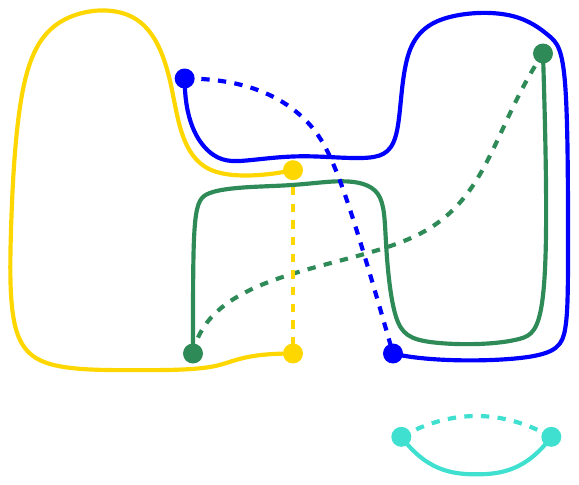}

    \vspace{5mm}
    \caption{The left image shows the cut instance for the instance in Figure~\ref{fig:instance}, where $S$ is chosen as in Figure~\ref{fig:inner_outer_path}.
    Note that the red demand is removed, and the capacities of inner edges $\{v,w\}$ in $S$ are added along their outer paths $\Pouter_{v,w}$.
    The right image shows a possible unsplittable flow for the cut instance.
    }\label{fig:cut_instance}
\end{figure}

\begin{figure}
    \centering
    \includegraphics[width=0.4\textwidth]{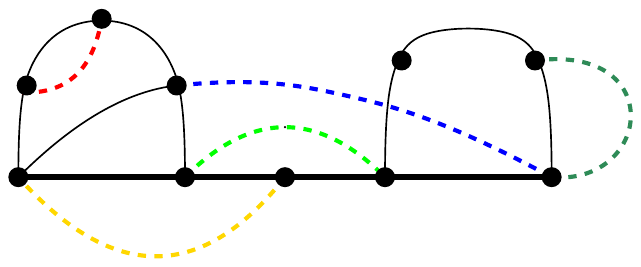}
    \hspace{8mm}
    \includegraphics[width=0.42\textwidth]{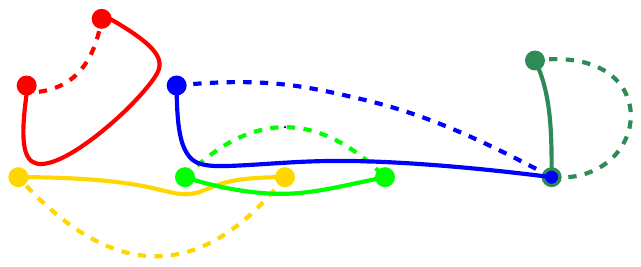}

    \vspace{5mm}
    \caption{The left image shows the split instance for the unsplittable flow from Figure~\ref{fig:cut_instance}.
    Note that the green demand from Figure~\ref{fig:cut_instance} induces two demands here.
    The capacity on the inner path (straight horizontal edges) is increased to make it feasible.
    An unsplittable flow in this instance can be combined with the unsplittable flow from Figure~\ref{fig:cut_instance} in order to obtain a solution for the original instance.
    }\label{fig:split_instance}
\end{figure}

\begin{figure}
    \centering
    \includegraphics[width=0.4\textwidth]{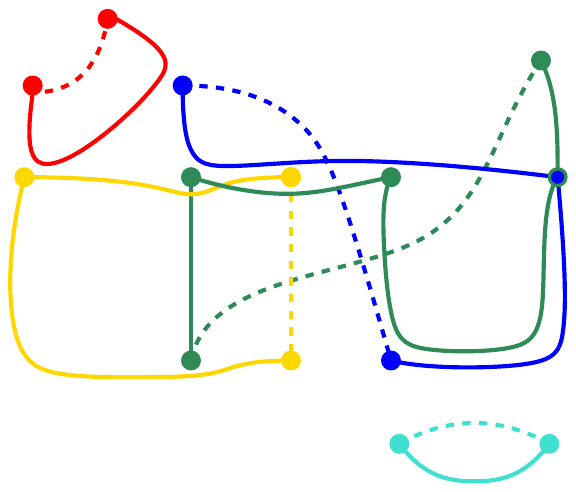}

    \vspace{5mm}
    \caption{Combining the unsplittable flows from Figure~\ref{fig:cut_instance} and Figure~\ref{fig:split_instance} yields an unsplittable flow for the original instance
    }\label{fig:unsplittable_solution}
\end{figure}

In other words, we first remove the flow of the demands that are routed completely inside $S$. We add the (residual) capacities of the inner edges of $G$ inside $S$ to the outer path of $S$, and then remove those inner edges from the instance.
This makes sure that we maintain feasibility of the cut instance:

\begin{lemma}\label{lemma:cut_instance_is_feasible}
    In the situation of Definition~\ref{def:cut_instance}, the cut instance $\Icut$ is feasible.
\end{lemma}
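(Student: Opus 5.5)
We exhibit an explicit feasible flow $x'$ for $\Icut$ by rerouting the flow $x$ restricted to the demands in $E(H^\prime)$. Since $S$ is \nice{$x$}, every demand $h\in E(H[S])$ has all of its flow inside $G[S]$. Hence $x$ restricted to $E(H^\prime)$ already uses every edge outside $E(G[S])$ with at most $u(e)=u^\prime(e)$ units. It remains to replace each flow path's use of $G[S]$ by a route along $\Pouter$, and to check that each edge $e\in\Pouter$ receives at most $u^\prime(e)$.

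For each $h\in E(H^\prime)$ and each path $P$ with $x_h(P)>0$, write $P$ as an alternating sequence of maximal subpaths outside $G[S]$ and maximal subpaths $Q$ inside $G[S]$ (a subpath of length zero is allowed). Replace each such $Q$, with endpoints $a,b\in S$, by the outer path $\Pouter_{a,b}$. The result is a walk in $G^\prime$ from $s$ to $t$, and we shortcut cycles, which only decreases loads. Assign this new path the same value $x_h(P)$. This defines $x^\prime$, which routes every demand of $H^\prime$ with the same value $d^\prime(h)=d(h)$. On edges of $E(G)\setminus E(G[S])$ we have $x^\prime(e)\le \sum_{h\in E(H^\prime)}x_h(e)\le x(e)\le u(e)$.

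The key step is the load on an edge $e\in \Pouter$. Let $X$ be the vertex set of one component of $\Pouter-e$, so that $C(S,e)=\delta_{G[S]}(X)$. The outer path $\Pouter_{a,b}$ contains $e$ exactly when $a$ and $b$ lie on different sides, i.e. one in $X$ and the other in $S\setminus X$. In that case the original $ab$-subpath $Q\subseteq G[S]$ must cross $\delta_{G[S]}(X)=C(S,e)$ at least once. Therefore, for each flow path $P$, the number of its inside-subpaths $Q$ whose replacement uses $e$ is at most $|E(P)\cap C(S,e)|$, since distinct subpaths $Q$ are edge-disjoint. Summing over paths and demands gives
\[x^\prime(e)\le \sum_{h\in E(H^\prime)}\sum_{e^\prime\in C(S,e)} x_h(e^\prime)=u^\prime(e).\]
Hence $x^\prime$ is feasible. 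As a remark, one could alternatively verify the cut-condition via Theorem~\ref{theorem:OS}, but the explicit construction is cleaner.

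The only delicate point is this counting step: it needs $C(S,e)$ to be exactly the set of $G[S]$-edges separating the two sides of $\Pouter-e$, which is the equivalent form stated in Definition~\ref{def:capacity_definition_edges}. It also needs the observation that the replaced subpaths of one flow path are edge-disjoint, so that crossings are not double counted. Everything else is bookkeeping.
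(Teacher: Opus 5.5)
Your proof is correct and is essentially the paper's argument: reroute every flow path of $x_h$, $h\in E(H^\prime)$, through $\Pouter$, and charge the load on each $e\in\Pouter$ to the edges of $C(S,e)$ that the original path uses. The only difference is granularity. The paper replaces each single edge $\{v,w\}$ with $v,w\in S$ by $\Pouter_{v,w}$, so the charging follows directly from the first form of the definition of $C(S,e)$; your replacement of maximal subpaths instead needs the (correct) crossing argument via $\delta_{G[S]}(X)$. Also, $x$-niceness of $S$ is not actually needed for this lemma.
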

\begin{proof}
    We construct a feasible flow $x^\prime=(x^\prime_h)_{h\in E(H^\prime)}$ for $\Icut$ from the flow $x$.
    For any $h = \{s,t\} \in E(H^\prime)$ we construct $x^\prime_h$ as follows.
    For each $st$-path $P$ in $G$ that lies in the support of $x_h$,
    let $P^\prime$ be the $st$-path in $G^\prime$ obtained from $P$ by replacing each edge $\{v,w\} \in E(P)$ with $v,w \in S$ by their outer path $\Pouter_{v,w}$, and then deleting cycles if necessary.
    We then increase $x^\prime_h(P^\prime)$ by $x_h(P)$.
    Doing this for all $st$-paths in the support of $x_h$ yields a flow $x^\prime_h$ of value $d(h) = d^\prime(h)$ in $G^\prime$.
    It remains to check that the capacities of $\Icut$ are satisfied. If $e\in E(G^\prime)\setminus E(\Pouter)$, the above procedure does not increase the flow on $e$, and hence $x^\prime(e)\le x(e)\le u(e)=u^\prime(e)$.
    Now let $e\in E(\Pouter)$.
    By construction of $x^\prime$, for every $h\in E(H^\prime)$,
    \[x^\prime_h(e) \le \sum_{\substack{\{v,w\} \in E(G[S]) \\ e \in E(\Pouter_{v,w})}}x_h(\{v,w\}) = \sum_{\{v,w\}\in C(S,e)}x_h(\{v,w\}). \]
    Therefore,
    \[x^\prime(e)=\sum_{h\in E(H^\prime)}x_h^\prime(e)\le\sum_{h\in E(H^\prime)}\,\sum_{e^\prime\in C(S,e)}x_h(e^\prime)=u^\prime(e).\]
    Thus, $x^\prime$ is feasible for $\Icut$.
\end{proof}

Our rounding algorithm will find an adequate \nice{$x$} set such that the cut instance is simpler and can be solved recursively.
Consider an unsplittable flow for $\Icut$ specified by the paths $\{P_h\}_{h\in E(H')}$ of $G'$.
Next, we construct an unsplittable flow for $\Iscr$ as follows.
Each $h\in E(H[S])$ is to be routed inside of $G[S]$.
For each $h\in E(H')=E(H)\setminus E(H[S])$,
we fix the maximal subpaths of $P_h$ that lie outside $G[S]$,
and the remaining task is to connect the endpoints of our partial unsplittable $h$-flow inside $G[S]$.
This task is captured in the split instance, which we define now:

\begin{definition}[Split instance]\label{def:split_instance}
    Let $\Iscr=(G,u,H,d)$ be an outerplanar multicommodity flow instance where $G$ is $2$-vertex-connected, and let $x=(x_h)_{h\in E(H)}$ be a feasible flow. Let $S\subseteq V(G)$ be an \nice{$x$} central set with $|S|\ge2$, and let $\Pouter$ and $\Pinner$ be the outer path and inner path of $S$, respectively.
    Let $\Icut = (G^\prime, u^\prime, H^\prime, d^\prime)$ be the cut instance for $x$ and $S$.
    Let $y^\prime = (y^\prime_h)_{h \in E(H^\prime)}$ be a $\vec \beta$-feasible unsplittable flow of $\Icut$,
    for some $\vec \beta \in \Rp^{E(G^\prime)}$.
    
    The \emph{split instance for $\Iscr, \Icut$, $y^\prime$ and $\vec \beta$} is the instance $\Isplit = (\widehat G, \widehat u, \widehat H, \widehat d)$ defined as follows.
    First, we set $\widehat G := G[S]$.
    Edge capacities are given by
    \[\widehat u(v,w):=\begin{cases}
    u(v,w)+ \max\{\;\vec \beta(e^\prime) : e^\prime \in \Pouter_{v,w}\;\} \cdot \Dmax\,,&\:\text{if }~\{v,w\}\in E(\Pinner)\\
    u(v,w)\,,&\:\text{otherwise.}\\
    \end{cases}\]
    The demands $E(\widehat H)$ are constructed as follows.
    First, we add all the demands in $E(H[S])=E(H) \setminus E(H^\prime)$ to $E(\widehat H)$.
    Next, for each edge $h \in E(H^\prime)$, 
    let $P^\prime_h$ denote the path on which $y^\prime_h$ routes $h$.
    Let $\Qscr_h$ denote the set of maximal subpaths of $P^\prime_h$ that are in $G^\prime[S] = \Pouter$.
    For each $Q \in \Qscr_h$ with $E(Q) \neq \emptyset$ we add a demand $h_Q$ between the endpoints of $Q$ to $\widehat H$ and set $\widehat d(h_Q) := d(h)$.
    See Figure~\ref{fig:split_instance} for an example.
\end{definition}

A key property of the split instance is that it is always feasible, although we only increased capacities along the inner path $\Pinner$.

\begin{lemma}\label{lemma:split_instance_is_feasible}
    In the situation of Definition~\ref{def:split_instance} the split instance $\Isplit$ is feasible.
\end{lemma}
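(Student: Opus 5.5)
By Theorem~\ref{theorem:OS}, $\widehat G=G[S]$ is outerplanar, so it suffices to verify the cut-condition for $\Isplit$. By Lemma~\ref{lemma:cut_sufficiency} (applied blockwise if needed), we only need central sets $X\subseteq S$ of $G[S]$. The plan is to bound $\widehat d(\delta_{\widehat H}(X))$ by $\widehat u(\delta_{G[S]}(X))$ using two facts:
\begin{itemize}
\item $S$ is \nice{$x$}, so the demands of $H[S]$ are routed by $x$ entirely inside $G[S]$.
\item $y^\prime$ is $\vec\beta$-feasible, and on each $e\in\Pouter$ the cut capacity $u^\prime(e)$ equals the $H^\prime$-flow of $x$ across the semi-cut $C(S,e)$.
\end{itemize}

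\textbf{Step 1: describe $\delta_{G[S]}(X)$.} By Lemma~\ref{remark:paths}, the outer face boundary of $G[S]$ is $\Pouter$ followed by $\Pinner$ reversed. A central $X$ of $G[S]$ is therefore an interval of this closed walk, bounded by two boundary edges $f_1,f_2$. There are two cases.
\begin{enumerate}
\item Both $f_i$ lie on $\Pouter$. Then $\delta_{G[S]}(X)=C(S,f_1)\,\triangle\,C(S,f_2)$, restricted to the edges that actually separate the interval.
\item Some $f_i$ lies on $\Pinner\setminus\Pouter$. Then the corresponding side of the cut contains the whole semi-cut through the block of $f_i$, including its $\Pinner$ edge, which carries the extra $\vec\beta\cdot\Dmax$ capacity.
\end{enumerate}
I would handle these cases separately, reducing everything to per-semi-cut statements.

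\textbf{Step 2: a per-semi-cut inequality.} For every $e\in\Pouter$, the demands $h_Q$ with $e\in Q$ have total value $y^\prime(e)$. I would show they can be routed across $C(S,e)$ within capacities $u(e^\prime)-\sum_{h\in E(H[S])}x_h(e^\prime)$, plus the extra capacity on the unique $\Pinner$ edge of $C(S,e)$. This capacity is at least $\max\{\vec\beta(e^\prime): e^\prime\in\Pouter_{v,w}\}\cdot\Dmax\ge\vec\beta(e)\Dmax$.

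The reason it works: the residual capacity on $C(S,e)$ is at least $x_{H^\prime}(C(S,e))=u^\prime(e)$, and $y^\prime(e)\le u^\prime(e)+\vec\beta(e)\Dmax$. The extra capacity sits on an edge of $C(S,e)$, so the cut $C(S,e)$ has enough room. The $H[S]$ demands across this cut are covered by their own $x$-flow, because of niceness.

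\textbf{Step 3: combine for a general central $X$.} A demand $h_Q$ crosses $X$ iff $Q$ contains exactly one of $f_1,f_2$, and an $H[S]$ demand crosses iff its $x$-flow does. I would add the two semi-cut inequalities and cancel the contributions of the common edges $C(S,f_1)\cap C(S,f_2)$ against the demands $h_Q$ containing both $f_1$ and $f_2$.

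\textbf{Main obstacle.} The cancellation in Step 3 is where I expect difficulty. It requires that the common part of the two semi-cuts carries at least as much capacity as the demands using both boundary edges. In particular, the extra $\Dmax$ terms on $\Pinner$ must not be double-counted when both $f_i$ lie in the same block. To prove this I would use the original cut-condition of $\Iscr$ on the central set $X$ of $G$, which is legitimate because $X$ is an interval of $\Pouter$. I would also use the fact that $y^\prime$ routes outside $S$ exactly the flow that $x$ sends across $\delta_G(S)$. This identifies the demand entering $X$ from $V\setminus S$ with the portions of the paths $P^\prime_h$ that are pinned at the endpoints of the subpaths $Q$.
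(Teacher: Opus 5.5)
\textbf{Step 2 is fine; Step 3 has a genuine gap.} Your Step 2 is correct. It is exactly what is needed for central sets $X$ of $G[S]$ that contain $v_1$ or $v_k$, where $\Pouter=v_1,\dots,v_k$. There $\delta_{\widehat G}(X)$ is a single semi-cut $C(S,e)$, and that semi-cut contains a $\Pinner$ edge with at least $\vec\beta(e)\Dmax$ extra capacity.

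The gap is in Step 3, for $X=\{v_i,\dots,v_j\}$ with $1<i\le j<k$. Set $A_1=\{v_1,\dots,v_{i-1}\}$ and $A_2=\{v_1,\dots,v_j\}$. Add the semi-cut inequalities for $A_1$ and $A_2$, then subtract twice the contribution of edges and demands between $A_1$ and $S\setminus A_2$. This yields $\widehat d(\delta_{\widehat H}(X))\le \widehat u(\delta_{\widehat G}(X))$ only if the demand between $A_1$ and $S\setminus A_2$ is \emph{at least} the capacity between them. That is the reverse of the condition you wrote down, and it is false in general. The common part contains the $\Pinner$ edge $\{a,b\}$ of the block containing $X$. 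Its boosted capacity $\widehat u(a,b)$ is counted in both semi-cuts, while possibly no demand passes over $X$. So the two semi-cut inequalities simply do not imply the one for $X$.

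The auxiliary fact you plan to use is also false. $y'$ is an unsplittable $\vec\beta$-feasible flow, so it need not carry on $\delta_G(S)$ what $x$ carries. It may also route through $S$ demands that $x$ keeps outside.

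\textbf{The missing structural fact.} This case follows directly from the original cut-condition, with no semi-cuts and no reference to $y'$, once you note the following.
\begin{enumerate}
\item Suppose $X$ is central in $G[S]$ and contains neither $v_1$ nor $v_k$. Then $S\setminus X$ is connected and contains both endpoints of $\Pinner$, so $X$ contains no cut vertex of $G[S]$.
\item The cut vertices of $G[S]$ are exactly the inner vertices of $\Pinner$ (Lemma~\ref{remark:paths}). Hence $X\cap V(\Pinner)=\emptyset$.
\item Every vertex of $S$ with a neighbour in $V(G)\setminus S$ lies on $\Pinner$. So $X$ has no neighbour outside $S$.
\item Consequently $\delta_{\widehat G}(X)=\delta_G(X)$, it contains no $\Pinner$ edge, and $\widehat u(\delta_{\widehat G}(X))=u(\delta_G(X))$.
\item On the demand side, the demands of $H[S]$ are unchanged. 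For $h\in E(H')$, every $Q\in\Qscr_h$ has its endpoints at vertices adjacent to $V(G)\setminus S$, except possibly at the unique endpoint of $h$ in $S$.
\item Hence $h$ contributes exactly one demand of value $d(h)$ crossing $X$ if $h$ crosses $X$, and none otherwise. This gives $\widehat d(\delta_{\widehat H}(X))=d(\delta_H(X))\le u(\delta_G(X))$.
\end{enumerate}

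Split the cases by whether $X$ contains an endpoint of $\Pouter$, rather than by where $f_1,f_2$ lie. Your classification misfiles a prefix that ends at a bridge of $G[S]$. With this split you get exactly the two cases of the paper's proof.
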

\begin{proof}
    We first observe that we may assume that $H^\prime=H$.
    If this is not the case,
    we could first remove all demands in $E(H) \setminus E(H^\prime)$ from $H$ and decrease all capacities $u(e)$ by $\sum_{h \in E(H) \setminus E(H^\prime)} x_h(e)$.
    This does not change the cut instance $\Icut$.
    Thus, if the lemma holds for the resulting split instance, then restoring the removed demands and their flows yields the lemma for the original split instance.

    Suppose that the outer face is bounded by the cycle $v_1,v_2,\hdots,v_n,v_1$.
    Assume w.l.o.g.\@ that $S=\{v_1,v_2,\hdots,v_k\}$ (and thus $\Pouter=v_1,v_2,\hdots,v_k$).
    Let $X \subseteq S$ denote the set of vertices that are adjacent to $V(G) \setminus S$ in $G$ (note that $X\subseteq V(\Pinner)$).
    By Lemma~\ref{lemma:cut_sufficiency} and Theorem~\ref{theorem:OS}, it suffices to show that the cut-condition holds for all central cuts of $\widehat G$.
    Let $U \subsetneq S$ be a central set of $\widehat G = G[S]$. Then $U = \{v_i, \dots, v_j\}$ for some $1 \leq i \leq j \leq k$.
    We distinguish two cases:
    \begin{description}
        \item[Case 1:] $i > 1$ and $j < k$.
        
        Since $S \setminus U$ is also a central set of $G[S]$, it follows that $U$ does not contain a cut vertex of $G[S]$; otherwise, $v_1$ would be disconnected from $v_k$ in $S\setminus U$.
        Since the cut vertices of $G[S]$ are precisely the inner vertices of $\Pinner$ (see Lemma~\ref{remark:paths}),
        and since $v_1,v_k\notin U$, it follows that $U\cap X\subseteq U\cap V(\Pinner)=\emptyset$.
        Therefore, $\delta_{\widehat G}(U) = \delta_G(U)$ and $u(\delta_G(U)) = \widehat u(\delta_{\widehat G}(U))$.        
        Note that each demand $h\in E(H')$ induces at most one demand $h_Q$ with an endpoint in $S\setminus X$. 
        If such an $h_Q$ exists, then this endpoint is the unique endpoint of $h$ inside $S$.
        Hence, since $U\cap X=\emptyset$, we have $\delta_{\widehat H}(U)=\delta_H(U)$.
        Thus, \[\widehat d(\delta_{\widehat H}(U)) = d(\delta_H(U)) \leq u(\delta_G(U)) = \widehat u(\delta_{\widehat G}(U)),\]
        where the inequality follows from the fact that the initial instance $\Iscr$ satisfies the cut-condition.
        \item[Case 2:] $i = 1$ or $j = k$.
        
        W.l.o.g.\@ $i=1$ and thus $j < k$.
        Define $e := \{v_j, v_{j+1}\}$, and observe that $\delta_{\widehat G}(U)$ is precisely the semi-cut $C(S,e)$ for $S$ and $e$ (see Definition~\ref{def:capacity_definition_edges}).
        Since exactly one endpoint of $\Pinner$ lies in $U$, there is some $e^\prime \in C(S,e) \cap E(\Pinner)$.
        We have $\widehat u(e^\prime) \geq u(e^\prime) + \vec \beta(e)\cdot \Dmax$.
        Thus, \[\widehat u(\delta_{\widehat G}(U)) \geq u(C(S,e)) + \vec \beta(e)\cdot \Dmax \geq u^\prime(e) + \vec \beta(e) \cdot \Dmax\geq y^\prime(e).\]
        Finally, observe that by definition of $\widehat H$ we have $\widehat d(\delta_{\widehat H}(U)) = y^\prime(e)$. \qedhere
    \end{description}
\end{proof}

We now have all the ingredients to prove Theorem~\ref{theorem:main}.
We prove a slightly stronger statement that enables us to use an inductive argument.
Rather than presenting the proof as an explicit induction, we use a minimal counterexample argument.

\begin{theorem}\label{thm:2dmax}
    Let $\Iscr = (G,u,H,d)$ be an outerplanar multicommodity flow instance that is feasible, or equivalently, that satisfies the cut-condition.
    Let $\Etight \subseteq \Eouter(G)$ be a set of outer edges of $G$ that contains at most one edge of each block of $G$.
    Then there exists a $\vec \beta$-feasible unsplittable flow $y = (y_h)_{h \in E(H)}$, where
    \[
    \vec \beta(e) :=
    \begin{cases}
    \frac{1}{2}, & \text{if } e \in \Etight,\\
    \frac{3}{2}, & \text{if } e \in \Eouter(G) \setminus \Etight,\\
    2, & \text{if } e \in \Einner(G).
    \end{cases}
    \]
\end{theorem}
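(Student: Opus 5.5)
We argue by a minimal counterexample. Fix a feasible outerplanar instance $\Iscr=(G,u,H,d)$ and a set $\Etight$ as in the statement such that no $\vec\beta$-feasible unsplittable flow exists, with $|E(G)|+|V(G)|$ minimal (and, among those, $|\Einner(G)|$ minimal).

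\textbf{Reductions.} First we reduce to $G$ being $2$-vertex-connected. Splitting at a cut vertex yields instances on the blocks. Each block still receives at most one edge of $\Etight$, and outer and inner edges keep their type. So minimality lets us assume $G$ is a single block. If $G$ is a cycle, then $\Einner(G)=\emptyset$ and $|\Etight|\le1$, and \Cref{theorem:ring_loading_3_2} with $\etight$ (chosen arbitrarily if $\Etight=\emptyset$) gives the claim. Hence we may assume $\Einner(G)\neq\emptyset$. We also use the standard degree reduction, splitting high-degree vertices into paths of infinite-capacity (or appropriately large) edges on the outer face, so that we may assume maximum degree $3$ whenever needed. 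We fix a minimal feasible flow $x$, which exists by \Cref{theorem:OS}.

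\textbf{Recursive step.} Suppose we find an $x$-nice central set $S$ with $|S|\ge2$ such that $G[S]$ contains an inner edge of $G$ not lying on $\Pinner$. Then the cut instance $\Icut$ is feasible by \Cref{lemma:cut_instance_is_feasible} and has fewer edges. Its outer edges are the outer edges of $G$, and $\Etight$ restricted to $E(G')$ still has at most one edge per block, since $G'$ is $2$-connected. By minimality we obtain $y'$ that is $\vec\beta'$-feasible with $\vec\beta'\le\frac32$ on $\Pouter$, and we take $y'$ to agree with the original bounds outside $S$.

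Next we form $\Isplit$ with this $\vec\beta'$. It is feasible by \Cref{lemma:split_instance_is_feasible}, and $G[S]$ is smaller. We apply minimality to $\Isplit$ with the tight set $\widehat E^*:=E(\Pinner)$, which has exactly one edge per block by \Cref{remark:paths}\ref{itemf}, together with, if needed, the $\Etight$ edges inside $\Pouter$. We must avoid two tight edges in one block: in a block where $\Etight$ has an edge on $\Pouter$, the $\Pinner$-edge $\{v,w\}$ received a $\max\vec\beta'(e')$ that may be $\frac12$. The instance uses the demand bound $\widehat{d}_{\max}\le\Dmax$.

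This yields $\widehat y$ with violation $\le\frac12\Dmax$ on top of $\widehat u$ along $\Pinner$. Along $\Pinner$ the total is then $u+\frac32\Dmax+\frac12\Dmax=u+2\Dmax$. This is fine because $\Pinner$-edges that are not bridges are inner edges of $G$, while bridges of $G[S]$ lie on $\Pouter$ and are outer edges of $G$. The bridge case needs care: there the increase is $\vec\beta'(e)$ on that very edge, and we want $\le\frac32$ overall, so we would set that $\Pinner$-edge as non-tight and get $u+\vec\beta'(e)\Dmax+\ldots$. I expect the precise bookkeeping here to need the stronger accounting that $y'$ and $\widehat y$ both contribute on outer edges, with the cut-instance bound on $\Pouter$ replaced by a split. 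Other edges of $G[S]$ keep $u$ and get the bounds $\frac32$ and $2$ directly. Concatenating $y'$ outside $S$ with $\widehat y$ gives an unsplittable flow for $\Iscr$, contradicting the choice of counterexample.

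\textbf{Finding $S$ (main obstacle).} If some central cut is tight and contains an inner edge strictly inside one side, \Cref{lemma:tight_cuts_are_nice} gives the desired $S$. We can always create a tight cut: increase demands (adding a new demand between endpoints of a violated-slack central cut) or decrease capacities until some central cut containing an inner edge becomes tight, which only makes the instance harder. The remaining case is when every tight cut contains all inner edges, so that the inner faces form a one-dimensional chain. With maximum degree $3$, we then pick an end face of the chain and a central cut $(S_1,S_2)$ with $|\delta_G(S_1)|=2$ that crosses only outer edges. By \Cref{lemma:cuts_of_size_two_are_nice}, one side is $x$-nice, and we choose the cut so that either side contains an inner edge off its inner path. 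Making this last case analysis work, including the choice of which side and the placement of $\Etight$, is where I expect the real difficulty.
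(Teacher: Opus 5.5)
\textbf{There is a genuine gap.} It sits where you sense trouble: your recursive step never controls where $\etight$ lies relative to $S$. Suppose $\etight\in E(G[S])$ and $\etight$ is not a bridge of $G[S]$. Then in $\Isplit$ the block of $G[S]$ containing $\etight$ needs the bound $\frac12$ on $\etight$ itself. It also needs $\frac12$ on the block's $\Pinner$-edge $\{v,w\}$, an inner edge of $G$ whose capacity was already raised by $\max\{\vec\beta'(e'):e'\in\Pouter_{v,w}\}\cdot\Dmax$. In general this maximum is $\frac32$, not $\frac12$: when the block has at least three vertices, $\Pouter_{v,w}$ has at least two edges and at most one of them is tight. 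The induction hypothesis allows only one tight edge per block, so the recursion breaks.

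The paper's remedy has two steps. First, lower capacities until every edge lies in a tight central cut; capacity $0$ is excluded by minimality. Second, take $S:=S_{\etight}$, a tight central cut with $\etight\in\delta_G(S)$. By \Cref{lemma:tight_cuts_are_nice} both sides are $x$-nice, and neither side contains $\etight$. Hence in $\Icut$ the edge $\etight$ keeps its bound $\frac12$, every edge of $\Pouter$ gets $\frac32$, and $E(\Pinner)$ is a legitimate tight set for $\Isplit$.

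Two smaller corrections in this step:
\begin{itemize}
\item Any inner edge of $G$ inside $G[S]$, on $\Pinner$ or not, disappears in $\Icut$, so your restriction ``off $\Pinner$'' is unnecessary.
\item Bridges need no extra bookkeeping. A bridge $e$ of $G[S]$ carries exactly the demand it separates in any routing, so $\widehat y(e)\le\widehat u(e)\le u(e)+\frac32\Dmax$.
\end{itemize}

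\textbf{The chain case is left open, and the tool you propose cannot finish it.} With $S=S_{\etight}$, the remaining case is that $\delta_G(S)$ contains all inner edges. Then $\etight$ lies on an end face $F$ of the chain, and $F$ has a single inner edge $\{v,w\}$. One may assume $v,w$ have degree $3$ and carry no demands. Let $\{v_1,v\}$ and $\{w_1,w\}$ be the outer edges of $F$ at $v$ and $w$.
\begin{itemize}
\item If some demand $h$ with both ends on $F$ has $x_h(\{v_1,v\})>0$, then \Cref{lemma:cuts_of_size_two_are_nice} applied to this $2$-cut makes the side containing $v,w$ nice, and you proceed as above.
\item A demand with $x_h(\{v_1,v\})=0$ is already routed on a single path and can be deleted. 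This is why the paper minimizes $(|\Einner(G)|,|E(H)|)$. Your measure $|E(G)|+|V(G)|$ does not drop under this deletion, and it grows under the vertex splitting you use for the degree-$3$ reduction.
\item If no demand has both ends on $F$, the $2$-cut lemma only certifies the trivially nice side $V(F)\setminus\{v,w\}$. That side induces a path, so no progress is possible this way.
\end{itemize}
In that last case the paper instead takes $S=V(F)$. Then $\Icut$ simply loses $\{v,w\}$ and is solved with no tight edge. $\Isplit$ is the cycle $F$, in which every demand contains $v$ or $w$, and there one needs $\frac12$ simultaneously on $\etight$ and on $\{v,w\}$. That is precisely \Cref{lemma:ring_loading_with_two_demands}, which is absent from your plan.
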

\begin{proof}
    Assume that this statement is false, and let $(G, u, H, d)$, together with $\Etight \subseteq E(G)$, be a counterexample that lexicographically minimizes $(|\Einner(G)|, |E(H)|)$.
    As discussed in Section~\ref{sec:outerplanar_instances}, the blocks of $G$ are independent, so we may assume that $G$ is $2$-vertex-connected.
    In particular, $|\Etight| \leq 1$ and we can assume w.l.o.g.\@ that $\Etight = \{\etight\}$ for some outer edge $\etight \in \Eouter(G)$.

    Furthermore, we may assume that for every edge $e \in E(G)$ there exists a central cut $(S_e, V(G) \setminus S_e)$ with $e \in \delta_G(S_e)$ that is tight, i.e.,
    $u(\delta_G(S_e)) = d(\delta_H(S_e)).$
    Otherwise, we can decrease $u(e)$ while preserving the cut-condition, until either such a cut exists or $u(e)=0$.
    The latter case can be ruled out by the minimality of the counterexample.
    Let $S := S_{\etight}$ and let $x = (x_h)_{h \in E(H)}$ be a feasible flow for $\Iscr$.
    By Lemma~\ref{lemma:tight_cuts_are_nice}, both $S$ and $V(G) \setminus S$ are \nice{$x$}.
    Also, the fact that each edge belongs to a tight cut implies that $x$ is minimal, i.e., if $x^\prime$ is feasible with $x^\prime(e) \leq x(e)$ for all $e \in E(G)$ then also $x^\prime(e) = x(e)$ for all $e \in E(G)$.

    \begin{description}
    \item[Case 1:] $G[S]$ or $G[V(G) \setminus S]$ contains an inner edge of $G$.
    
     W.l.o.g.\@ $E(G[S]) \cap \Einner(G) \neq \emptyset$.
     Let $\Pouter$ and $\Pinner$ denote the outer path of $S$ and inner path of $S$, respectively.
     Let $\Icut = (G^\prime, u^\prime, H^\prime, d^\prime)$ be the cut instance for $x$ and $S$.
     
     Observe that $|\Einner(G^\prime)| < |\Einner(G)|$.
     Thus, by minimality of our counterexample,
     we can assume that there is a $\vec \beta^\prime$-feasible unsplittable flow $y^\prime = (y^\prime_h)_{h \in E(H^\prime)}$ for $\Icut$,
     where $\vec \beta^\prime(\etight) = \frac{1}{2}$, $\vec \beta^\prime(e) = \frac{3}{2}$ for $e \in \Eouter(G^\prime) \setminus \{\etight\}$ and $\vec \beta^\prime(e) = 2$ for $e \in \Einner(G^\prime)$.
    
     Let $\Isplit = (\widehat G, \widehat u, \widehat H, \widehat d)$ be the split instance for $\Iscr$, $\Icut$, $y^\prime$ and $\vec \beta^\prime$ according to Definition~\ref{def:split_instance}.
     By Lemma~\ref{lemma:split_instance_is_feasible}, $\Isplit$ is feasible.
     Recall that $\Eouter(\widehat G)=\Pouter\cup\Pinner$, and that for any edge $\{v,w\}\in \Pinner$, the vertex set $V(\Pouter_{v,w})$ is a block of $\widehat G$.
     Therefore, $\Pinner$ contains precisely one edge from every block of $\widehat G$.
     Since $|\Einner(\widehat G)| < |\Einner(G)|$, by minimality of our counterexample we can find a $\widehat {\vec \beta}$-feasible unsplittable flow $\widehat y = (\widehat y_h)_{h \in E(\widehat H)}$ for $\Isplit$,
     where $\widehat {\vec \beta}(e) = \frac{1}{2}$ for $e \in E(\Pinner)$, $\widehat {\vec \beta}(e) = \frac{3}{2}$ for $e \in \Eouter(\widehat G) \setminus E(\Pinner)$ and $\widehat {\vec \beta}(e) = 2$ for $e \in \Einner(\widehat G)$.

     Now, we can define the unsplittable flow $y=(y_h)_{h\in E(H)}$ for the original instance $(G,u,H,d)$:
     For $h \in E(H) \setminus E(H^\prime)=E(H[S])$ we also have $h \in E(\widehat H)$, and hence $y_h := \widehat y_h$ defines an unsplittable $h$-flow.
     For $h \in E(H^\prime)$ let $P^\prime_h$ denote the path on which $y^\prime_h$ routes $h$.
     As in Definition~\ref{def:split_instance},
     let $\Qscr_h$ define the set of maximal subpaths of $P^\prime_h$ that are in $G^\prime[S] = \Pouter$.
     For each $Q \in \Qscr_h$ we have added a demand $h_Q$ to $E(\widehat H)$; we construct the path $P_h$ from $P^\prime_h$ by replacing each $Q \in \Qscr_h$ by the path that $h_Q$ is routed on in the unsplittable flow $\widehat y$, and deleting possible cycles in the constructed walk.
     We define $y_h$ to be the unsplittable $h$-flow that routes $h$ along $P_h$ (cf.\@ Figure~\ref{fig:unsplittable_solution}).
         
     It is left to check the capacity constraints for $y$.
     For $e \in E(G) \setminus E(G[S])$ we have 
     \[ y(e) \leq y^\prime(e) \leq u^\prime(e) + \vec \beta^\prime(e) \cdot \Dmax = u(e) + \vec \beta(e) \cdot \Dmax.\]
     Now consider an edge $e \in E(G[S])$.
     We have 
     \[y(e) \leq \widehat y(e) \leq \widehat u(e) + \widehat {\vec \beta}(e) \cdot \Dmax.\]
     If $e \notin E(\Pinner)$ then $\widehat u(e) = u(e)$ and $\widehat {\vec \beta}(e) = \vec \beta(e)$, so $y(e) \leq u(e) + \vec \beta(e) \cdot \Dmax$.
     Finally, consider the case $e \in E(\Pinner)$.
     By our choice of $\vec \beta^\prime$, Definition~\ref{def:split_instance} implies $\widehat u(e) \leq u(e) + \frac{3}{2} \cdot \Dmax$ in this case.
     If $e \in E(\Pinner)$ is an outer edge of $G$ then $e$ must be a bridge in $\widehat G$;
     in particular we even have $\widehat y(e) \leq \widehat u(e) \leq u(e) + \frac{3}{2} \cdot \Dmax = u(e) + \vec \beta(e)\cdot\Dmax$.
     Otherwise, we have $\widehat {\vec \beta}(e) = \frac{1}{2} = \vec \beta(e) - \frac{3}{2}$.
     Therefore,
     \[y(e)\le \widehat u(e)+ \widehat {\vec \beta}(e) \cdot \Dmax \le u(e)+\tfrac{3}{2}\cdot\Dmax + \left(\vec\beta(e)-\tfrac{3}{2}\right)\cdot\Dmax= u(e)+\vec\beta(e)\cdot\Dmax.\]
     This concludes the proof for case 1.
    \item[Case 2:] $\delta_G(S)$ contains all inner edges of $G$.\footnote{In this case, deleting the vertex corresponding to the outer face from the planar dual results in a path.}
    
     Let $F$ be the (unique) inner face of $G$ that is incident to $\etight$.
     If $F$ is the only inner face of $G$ then $G$ is a cycle, and Theorem~\ref{theorem:ring_loading_3_2} yields a $\vec \beta$-feasible unsplittable flow.
     Otherwise, since $\delta_G(S)$ contains $\etight$ and all inner edges of $G$, $F$ is incident to exactly one inner edge, say $\{v,w\} \in \Einner(G)$.
     Let $v_1, v_2 \in V(G)$ such that $\{v_1, v\}$ and $\{v_2, v\}$ are the two outer edges of $G$ incident to $v$, where $v_1$ lies on the boundary of $F$. Let $w_1, w_2 \in V(G)$ be defined analogously for $w$.
     
     We can assume w.l.o.g.\@ that $|\delta_G(v)| = 3$ and that no demand edges of $H$ are incident to $v$;
     otherwise, replace $v$ by two vertices $v$ and $v^\prime$ that are connected by an edge $\{v,v^\prime\}$ and connect all edges in $\delta_{G+H}(v)$ except $\{v_1, v\}$ and $\{v,w\}$ to $v^\prime$ instead of $v$.
     We can choose the capacity for the new edge $\{v,v^\prime\}$ large enough such that our instance is feasible.
     Note that any $\vec\beta$-feasible unsplittable flow in the constructed instance directly induces a $\vec \beta$-feasible unsplittable flow in the original instance.
     Similarly, we can assume that $|\delta_G(w)| = 3$ and that no demand edges of $H$ are incident to $w$.
     Now, we again distinguish two cases:
     \begin{description}
         \item[Case 2a:] There exists a demand edge $h$ between two vertices on the boundary of $F$.
         
         Observe that $x_h(\{v_1, v\}) = x_h(\{w_1,w\})$.
         Note that if $x_h(\{v_1,v\}) = 0$ then $x$ routes all the demand of $h$ along a single path (on the boundary of $F$).
         Thus, removing $h$ from $H$ as well as removing $x_h$ from $x$ and $u$ yields a counterexample with fewer demand edges, contradicting minimality of our counterexample.
         So w.l.o.g.\@ assume $x_h(\{v_1,v\}) > 0$.
         By Lemma~\ref{lemma:cuts_of_size_two_are_nice} one connected component $S \subseteq V(G)$ of $G- \{\{v_1, v\}, \{w_1,w\}\}$ is \nice{$x$}, and due to the flow $x_h$ we know that $v, w \in S$.
         In particular, the supply graph of the cut instance for $x$ and $S$ has less inner edges than $G$, so we can proceed exactly as in Case 1 to finish the proof in this case.
         \item[Case 2b:] Any demand edge contains at most one vertex on the boundary of $F$.
         
         In this case we consider the set $S$ of all vertices on the boundary of $F$.
         $S$ is \nice{$x$} because $H[S] = \emptyset$.        
         Let $\Icut = (G^\prime, u^\prime, H^\prime, d^\prime)$ be the cut instance for $x$ and $S$.
         We have $E(G^\prime) = E(G) \setminus \{\{v,w\}\}$, so by minimality of our counterexample there exists a $\vec \beta^\prime$-feasible unsplittable flow $y^\prime = (y^\prime_h)_{h \in E(H^\prime)}$ for $\Icut$, where $\vec \beta^\prime(e) = \frac{3}{2}$ for $e \in \Eouter(G^\prime)$ and $\vec \beta^\prime(e) = 2$ for $e \in \Einner(G^\prime)$.
         Now let $\Isplit = (\widehat G, \widehat u, \widehat H, \widehat d)$ be the split instance for $\Iscr$, $\Icut$, $y^\prime$ and $\vec \beta^\prime$.
         Clearly, $\widehat G$ is a cycle corresponding to the boundary edges of $F$.
         Furthermore, the assumption of case 2b implies that each edge of $\widehat H$ is incident to $v$ or $w$.
         Therefore, we can apply Lemma~\ref{lemma:ring_loading_with_two_demands} to find a $\widehat {\vec \beta}$-feasible unsplittable flow $\widehat y$ for $\Isplit$, where $\widehat {\vec \beta}(\etight) = \widehat {\vec \beta}(\{v,w\}) = \frac{1}{2}$ and $\widehat {\vec \beta}(e) = \frac{3}{2}$ for all other $e \in E(\widehat G)$.
         
         Define the unsplittable flow $y$ as in case 1.
         Analogously to case 1, it is straightforward to verify that the flow $y$ fulfills the requirements of the lemma. \qedhere
     \end{description}
    \end{description}
\end{proof}

Since $\vec \beta(e) \leq 2$ for all edges $e \in E(G)$ in Theorem~\ref{thm:2dmax}, this directly implies the existence part of Theorem~\ref{theorem:main}.
Note that although the proof of Theorem~\ref{thm:2dmax} is existential, it still shows a clear way to obtain an unsplittable flow $y$ as desired:
In each iteration, after partitioning the instance into the blocks of $G$ and removing demands which are already routed unsplittably, we construct two smaller instances, the cut instance $\Icut$ and the split instance $\Isplit$, and solve them recursively.
Afterwards, we combine the obtained unsplittable flows to the flow $y$.
Note that the combined number of inner faces of the supply graphs in $\Icut$ and $\Isplit$ equals the number of inner faces of $G$, so the total number of recursion steps is bounded linearly.
Thus, we can also compute an unsplittable flow as guaranteed by Theorem~\ref{thm:2dmax} in polynomial time.

\section{Improved Lower Bound for Outerplanar Instances}
We prove Theorem~\ref{theorem:lower} in this section, which we restate for convenience.

\theoremlower*

This section is organized as follows.
As a warm-up, we show in subsection~\ref{section:warmup} that the additive violation of edge capacities has to be at least $\tfrac{5}{4}\Dmax$ for outerplanar graphs.
This weaker lower bound already improves over the previous best lower bound of $\tfrac{11}{10}\Dmax$ in~\cite{skutella2016ringloading} for outerplanar graphs, 
which is achieved in a ring-loading instance.
Building on the ideas in subsection~\ref{section:warmup}, we prove Theorem~\ref{theorem:lower} in subsection~\ref{section:theoremlower},
whose construction is significantly more intricate.

\subsection{Warm Up: Lower Bound of $+\frac{5}{4}\Dmax$}\label{section:warmup}

In this subsection, we construct an instance on an outerplanar graph $G$ with five inner faces. Four of these faces are attached to a central face. We show that any unsplittable flow on this instance exceeds the capacity of some edge by at least $\tfrac{5}{4}\,\Dmax$.
The construction in the proof of Theorem~\ref{theorem:lower} generalizes this, by considering an instance in which $G$ consists of four 1-dimensional grids attached to a central face.

\begin{figure}[h]
\centering
\includegraphics[width=0.3\textwidth]{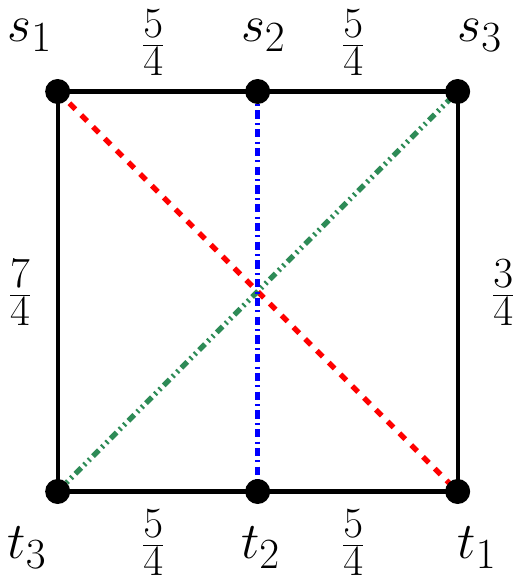}

\vspace{7mm}

\caption{The ring-loading instance $\mathcal{R}$}\label{fig:ring_loading_instance_r}
\end{figure}

We first consider the following ring-loading instance $\mathcal{R}:=(C,u,H,d)$, illustrated in Figure~\ref{fig:ring_loading_instance_r}.
\begin{align*}
\textbf{Supply graph:}&\quad C=s_1,s_2,s_3,t_1,t_2,t_3,s_1\text{ is a 6-cycle}\\[1ex]
\textbf{Demand graph:}&\quad E(H)=\{\{s_1,t_1\},\,\{s_2,t_2\},\,\{s_3,t_3\}\}\\[1ex]
\textbf{Capacities:}&\quad u(s_1,s_2)=\tfrac{5}{4},\;u(s_2,s_3)=\tfrac{5}{4},\;u(s_3,t_1)=\tfrac{3}{4}\\[0.5ex]
&\quad u(t_1,t_2)=\tfrac{5}{4},\;u(t_2,t_3)=\tfrac{5}{4},\;u(t_3,s_1)=\tfrac{7}{4}\\[1ex]
\textbf{Demands:}&\quad d(s_1,t_1)=1,\quad d(s_2,t_2)=\tfrac{1}{2},\quad d(s_3,t_3)=1.
\end{align*}
\begin{lemma}\label{lemma:feasiblering}
$\mathcal{R}$ is a feasible instance.
\end{lemma}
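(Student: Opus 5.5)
The most direct route is to write down an explicit feasible fractional flow and check the six edge loads. No cut-condition argument is needed, although Theorem~\ref{theorem:OS} together with Lemma~\ref{lemma:cut_sufficiency} would give an alternative proof by checking the central cuts of the $6$-cycle.

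On a cycle, each demand $\{s_i,t_i\}$ has exactly two routing paths. Call them the \emph{clockwise} path, which follows the order $s_1,s_2,s_3,t_1,t_2,t_3$, and the \emph{counterclockwise} path. Let $a,b,c$ denote the amounts of $\{s_1,t_1\}$, $\{s_2,t_2\}$, $\{s_3,t_3\}$ sent clockwise. The clockwise path of $\{s_1,t_1\}$ uses $s_1s_2,s_2s_3,s_3t_1$. That of $\{s_2,t_2\}$ uses $s_2s_3,s_3t_1,t_1t_2$. That of $\{s_3,t_3\}$ uses $s_3t_1,t_1t_2,t_2t_3$. The counterclockwise paths use the complementary edges. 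This gives the six loads as linear functions of $a,b,c$.

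Two capacities pin down the sum. The constraint on $s_3t_1$ gives $a+b+c\le \tfrac34$. The load on $t_3s_1$ equals $\tfrac52-(a+b+c)$, so its constraint gives $a+b+c\ge\tfrac34$. Hence $a+b+c=\tfrac34$.

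The remaining constraints then read as follows:
\begin{itemize}
\item $s_2s_3$ gives $c\ge\tfrac14$;
\item $t_1t_2$ gives $a\ge\tfrac14$;
\item $s_1s_2$ gives $a-b-c\le-\tfrac14$;
\item $t_2t_3$ gives $c-a-b\le-\tfrac14$.
\end{itemize}
Adding the last two forces $b\ge\tfrac14$. Combined with $a+b+c=\tfrac34$, this leaves the single choice $a=b=c=\tfrac14$.

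With each demand sending a quarter of its value clockwise and the rest counterclockwise, the loads come out as follows. Both $s_1s_2$ and $s_2s_3$ carry $\tfrac14+\tfrac14+\tfrac34=\tfrac54$. The edge $s_3t_1$ carries $\tfrac34$. Both $t_1t_2$ and $t_2t_3$ carry $\tfrac34+\tfrac14+\tfrac14=\tfrac54$. The edge $t_3s_1$ carries $\tfrac34+\tfrac14+\tfrac34=\tfrac74$. Every load equals its capacity, so the flow is feasible.

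There is no real obstacle; the only care needed is getting the edge-path incidences right. It is worth noting that every edge is tight. This is presumably what the subsequent lower-bound argument exploits, since it leaves no slack for any unsplittable rounding.
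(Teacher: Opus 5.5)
Your proof is correct and is essentially the paper's: the paper exhibits the same flow ($r_1=r_2=r_3=\tfrac14$ sent clockwise, the rest counter-clockwise) and checks that every edge load equals its capacity. Your preliminary derivation that $a=b=c=\tfrac14$ is forced is also correct but not needed for feasibility; it additionally shows that this flow is the unique feasible one.
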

\begin{proof}
Consider the following feasible flow.
For each $i\in\{1,2,3\}$, let $r_i$ denote the amount of $s_i$-$t_i$ flow routed (clockwise) along the path $s_i,\hdots,s_3,t_1,\hdots,t_i$, and let $\overline{r_i}$ denote the amount of $s_i$-$t_i$ flow routed (counter-clockwise) along the path $s_i,s_{i-1},\hdots,s_1,t_3,\hdots,t_i$.
Take
\[r_{1}=r_{3}=\frac{1}{4},\qquad \overline{r_{1}}=\overline{r_{3}}=\frac{3}{4},\qquad r_{2}=\overline{r_{2}}=\frac{1}{4}.\]
Observe that the total amount of flow on each edge of the paths $s_1,s_2,s_3$ and $t_1,t_2,t_3$ is equal to $\tfrac{5}{4}$. The flow on $\{s_3,t_1\}$ is equal to $\tfrac{3}{4}$, and the flow on $\{t_3,s_1\}$ is equal to $\tfrac{7}{4}$.
\end{proof}

The following is a simple observation that we later generalize to the classes of instances considered in the next section (cf. Lemma~\ref{lemma:totheleft}).

\begin{lemma}\label{claim:warmup}
Let $y$ be an unsplittable flow of $\mathcal{R}=(C,u,H,d)$.
If for some some $i\in\{1,3\}$,
$y$ routes $\{s_i,t_i\}$ (clockwise) along the path $s_i,\hdots,s_3,t_1,\hdots,t_i$, then there is an edge $e\in E(G) \setminus \{\{t_3,s_1\}\}$ with 
\[y(e)\ge u(e)+\tfrac{5}{4}.\]
\end{lemma}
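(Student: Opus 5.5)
Since $\mathcal{R}$ is a cycle, each demand has exactly two routings (clockwise or counter-clockwise), so the claim reduces to a finite case analysis. The plan is to fix the routing of one of $\{s_1,t_1\}$, $\{s_3,t_3\}$ as clockwise and then, for each remaining choice of the other two demands, exhibit an edge other than $\{t_3,s_1\}$ whose load is at least its capacity plus $\tfrac54$ (recall $\Dmax=1$). The arcs are: clockwise $s_i\to t_i$ uses the edges from $s_i$ through $s_3,t_1$ up to $t_i$; counter-clockwise uses $s_i,\dots,s_1,t_3,\dots,t_i$. We compute the loads on the edges $\{s_1,s_2\},\{s_2,s_3\},\{s_3,t_1\},\{t_1,t_2\},\{t_2,t_3\}$ and use the low capacity $\tfrac34$ of $\{s_3,t_1\}$ and the capacities $\tfrac54$ of the path edges.

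\textbf{Case $i=1$.} Demand $1$ clockwise uses $\{s_1,s_2\},\{s_2,s_3\},\{s_3,t_1\}$, contributing load $1$ to each. If demand $3$ is also clockwise, it uses $\{s_3,t_1\},\{t_1,t_2\},\{t_2,t_3\}$, so $\{s_3,t_1\}$ carries $2\ge \tfrac34+\tfrac54$, done. Otherwise demand $3$ is counter-clockwise, using $\{s_3,s_2\},\{s_2,s_1\},\{s_1,t_3\}$, so $\{s_1,s_2\}$ and $\{s_2,s_3\}$ each carry $2$. Now demand $2$ (value $\tfrac12$) either goes clockwise through $\{s_2,s_3\}$ or counter-clockwise through $\{s_1,s_2\}$; in either case one of these edges carries $\tfrac52=\tfrac54+\tfrac54$, as required.

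\textbf{Case $i=3$.} This is symmetric under the reflection $s_1\leftrightarrow t_3$, $s_2\leftrightarrow t_2$, $s_3\leftrightarrow t_1$, which maps demands to demands (with $1\leftrightarrow 3$), preserves capacities, fixes the edges $\{t_3,s_1\}$ and $\{s_3,t_1\}$, and swaps the paths $s_1s_2s_3$ and $t_1t_2t_3$. Explicitly: demand $3$ clockwise uses $\{s_3,t_1\},\{t_1,t_2\},\{t_2,t_3\}$; if demand $1$ is clockwise then $\{s_3,t_1\}$ carries $2$, and otherwise demand $1$ counter-clockwise loads $\{t_1,t_2\},\{t_2,t_3\}$ to $2$ each, and demand $2$ then adds $\tfrac12$ to one of these.

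There is no real obstacle. The only point needing care is to check that each witness edge is never $\{t_3,s_1\}$, which holds because all of them lie on the paths $s_1s_2s_3$, $t_1t_2t_3$ or are $\{s_3,t_1\}$, and to verify the arc memberships correctly.
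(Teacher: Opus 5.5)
Your proof is correct and follows the same case analysis as the paper. If both $\{s_1,t_1\}$ and $\{s_3,t_3\}$ go clockwise, the edge $\{s_3,t_1\}$ carries $2$. If they go in opposite directions, the path $s_1s_2s_3$ (or $t_1t_2t_3$) is loaded to $2$, and the half-unit demand $\{s_2,t_2\}$ pushes one of its edges to $\tfrac52$; the only difference is that you spell out the mirror case via the reflection, where the paper just calls it analogous.
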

\begin{proof}
Let $P_i$ denote the path on which $y$ routes $\{s_i,t_i\}$ for $i\in\{1,2,3\}$.
Observe that if both $\{s_1,t_1\}$ and $\{s_3,t_3\}$ are routed clockwise, then $\{s_3,t_1\}\in P_1\cap P_3$. Thus,
\[y(s_3,t_1)\ge 2=u(s_3,t_1)+\tfrac{5}{4}.\]
Suppose now that $\{s_1,t_1\}$ is routed clockwise and $\{s_3,t_3\}$ is routed counter-clockwise. Then $\{s_1,s_2\},\{s_2,s_3\}\in P_1\cap P_3$. If $\{s_2,t_2\}$ is routed clockwise, then all three demands are routed through $\{s_2,s_3\}$ and hence
\[y(s_2,s_3)=\tfrac{5}{2}=u(s_2,s_3)+\tfrac{5}{4}.\]
Similarly, if $\{s_2,t_2\}$ is routed counter-clockwise, then 
\[y(s_1,s_2)=\tfrac{5}{2}=u(s_1,s_2)+\tfrac{5}{4}.\]
The case in which $\{s_1,t_1\}$ is routed counter-clockwise and $\{s_3,t_3\}$ is routed clockwise is analogous. In that case, the capacity of $\{t_1,t_2\}$ or $\{t_2,t_3\}$ is exceeded by at least $\tfrac{5}{4}$.
\end{proof}

By the above lemma, in order to avoid congesting an edge with (at least) $\tfrac{5}{4}$ units of flow, both $\{s_1,t_1\}$ and $\{s_3,t_3\}$ must be routed counter-clockwise.
Observe that in this case the capacity of $\{t_3,s_1\}$ is exceeded by at least $\tfrac{1}{4}$. We use this observation in the construction that we consider next.

Consider four vertex-disjoint copies $\mathcal{R}^{(\ell)}=\{(C^{\ell},u^{\ell},H^{\ell},d^{\ell})\}_{\ell\in\{a,b,c,r\}}$ of $\mathcal{R}$.
For each $\ell\in\{a,b,c,r\}$,
we use $$C^{\ell}=s_1^\ell,s_2^\ell,s_3^\ell,t_1^\ell,t_2^\ell,t_3^\ell,s_1^\ell$$ to denote the supply cycle of $\mathcal{R}^{(\ell)}$ (and we interpret $s_i^\ell\equiv s_i$ and $t_{i}^\ell\equiv t_i$).
We will construct a bigger instance $\mathcal{I}$ in which we add one unit of capacity to the edge $\{t_3^\ell,s_1^\ell\}$ of each of these copies,
and then attach it to a ``central'' cycle, whose remaining edges have unit capacities.
More concretely,
create four new vertices $w_1,w_2,w_3,w_4$,
and consider the $12$-cycle $$W=w_1,t^{a}_3,s^a_1,w_2,t^{b}_3,s^b_1,w_3,t^{c}_3,s^c_1,w_4, t^{r}_3,s^r_1,w_1$$
(whose nodes can be viewed as being ordered counter-clockwise around $W$).
We add two ``crossing'' unit demand edges $\{w_1,w_3\}$ and $\{w_2,w_4\}$ of demand $1$,
and then use Lemma~\ref{claim:warmup} to argue that any unsplittable flow will route at least $u(e)+\tfrac{5}{4}$ units of flow on some edge $e$.

Formally, consider the following instance $\mathcal{I}=(G,u,H,d)$. See Figure~\ref{fig:lower_bound_5_4}.
\begin{align*}
\textbf{Vertex set:}&\quad V=\{w_1,w_2,w_3,w_4\}\cup \bigcup_{\ell\in\{a,b,c,r\}} V(C^{\ell})\;\;.\\
\textbf{Supply edges:}&\quad E(G)= E(W)\cup \bigcup_{\ell\in\{a,b,c,r\}} \big(\;E(C^{\ell}) \setminus \{\{t_3^\ell,s_1^\ell\}\}\;\big)\;\;.\\[1ex]
\textbf{Demand edges:}&\quad E(H)=\{\{w_1,w_3\},\;\{w_2,w_4\}\,\} \cup \{\{s^\ell_1,t^\ell_1\},\,\{s^\ell_2,t^\ell_2\},\,\{s^\ell_3,t^\ell_3\}\}_{\ell\in\{a,b,c,r\}} \;\;.\\[1ex]
\textbf{Capacities:}
&\quad u(s^\ell_1,s^\ell_2)=u(s^\ell_2,s^\ell_3)=\tfrac{5}{4}\,,\quad u(s^\ell_3,t^\ell_1)=\tfrac{3}{4}\,,\quad
u(t^\ell_1,t^\ell_2)=u(t^\ell_2,t^\ell_3)=\tfrac{5}{4}\,,\\[1ex]
&\quad u(t^\ell_3,s_1^\ell)=1+u^\ell(t^\ell_3,s_1^\ell)=\tfrac{11}{4}\;\qquad \qquad\forall\ell\in\{a,b,c,r\}\;.\\[1ex]
&\quad u(e)=1,\qquad \forall e\in E(W)\setminus\{\{t_3^\ell,s_1^\ell\}\;:\;\ell=a,b,c,r\}.\\[1ex]
\textbf{Demands:}
&\quad d(s^\ell_1,t^\ell_1)=1,\quad d(s^\ell_2,t^\ell_2)=\tfrac{1}{2},\quad d(s^\ell_3,t^\ell_3)=1,\quad \forall \ell\in\{a,b,c,r\}.\\[1ex]
&\quad d(w_1,w_3)=1,\quad d(w_2,w_4)=1\quad.
\end{align*}
First observe that $\mathcal{I}$ is feasible.
\begin{lemma}
$\mathcal{I}=(G,u,H,d)$ is a feasible instance.
\end{lemma}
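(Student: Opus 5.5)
We show feasibility directly, by writing down an explicit feasible fractional flow, in the same style as the proof of Lemma~\ref{lemma:feasiblering}. An alternative would be to verify the cut-condition and invoke Theorem~\ref{theorem:OS}, but the explicit flow is simpler. The idea is to superimpose two flows. The first is four copies of the flow from Lemma~\ref{lemma:feasiblering}, one inside each gadget $\mathcal{R}^{(\ell)}$. The second is a half–half split of the two crossing demands around the central cycle $W$. Then we check each edge class.

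\textbf{Gadget demands.} Fix $\ell\in\{a,b,c,r\}$. In $G$, the cycle $C^\ell$ is still present: its edges other than $\{t_3^\ell,s_1^\ell\}$ lie in $E(C^\ell)\setminus\{\{t_3^\ell,s_1^\ell\}\}$, and $\{t_3^\ell,s_1^\ell\}$ itself is an edge of $W$. So we route the three demands $\{s_i^\ell,t_i^\ell\}$ exactly as in the proof of Lemma~\ref{lemma:feasiblering}:
\[
r_1=r_3=\tfrac14,\qquad \overline{r_1}=\overline{r_3}=\tfrac34,\qquad r_2=\overline{r_2}=\tfrac14 .
\]
By that proof, this uses exactly the capacity $u^\ell(e)=u(e)$ on every edge $e\in E(C^\ell)\setminus\{\{t_3^\ell,s_1^\ell\}\}$. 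On the edge $\{t_3^\ell,s_1^\ell\}$ it uses $\tfrac74$, which leaves residual capacity $\tfrac{11}{4}-\tfrac74=1$. The four gadgets are vertex-disjoint, so these flows do not interact.

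\textbf{Crossing demands.} The cycle $W$ passes through the vertices $w_1,w_2,w_3,w_4$ in this cyclic order. Hence each of the demands $\{w_1,w_3\}$ and $\{w_2,w_4\}$ has two $W$-paths between its endpoints, and these two paths partition $E(W)$. We route $\tfrac12$ unit of each of these demands along each of its two $W$-paths. Every edge of $W$ then carries exactly $\tfrac12$ from each crossing demand, so $1$ in total. None of these paths uses a gadget edge outside $W$.

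\textbf{Capacity check.}
\begin{itemize}
\item Edges of $W$ other than the edges $\{t_3^\ell,s_1^\ell\}$ carry only the crossing flow, namely $1=u(e)$.
\item Each edge $\{t_3^\ell,s_1^\ell\}$ carries $\tfrac74+1=\tfrac{11}{4}=u(t_3^\ell,s_1^\ell)$.
\item The remaining gadget edges carry only the gadget flow, which meets their capacities exactly by Lemma~\ref{lemma:feasiblering}.
\end{itemize}
Thus the combined flow routes all demands and respects all capacities, so $\mathcal{I}$ is feasible.

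The only point needing care is confirming that the edge $\{t_3^\ell,s_1^\ell\}$ of $C^\ell$ is indeed the edge of $W$ that received the extra unit of capacity. This holds because $W$ contains the consecutive pair $t_3^\ell,s_1^\ell$, so the gadget flow can use that edge directly.
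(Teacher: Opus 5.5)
Your proof is correct and is essentially the paper's argument: route each gadget's three demands inside $C^\ell$ exactly as in Lemma~\ref{lemma:feasiblering}, observe that the extra unit on each $\{t_3^\ell,s_1^\ell\}$ leaves every edge of $W$ with exactly one unit of residual capacity, and split both crossing demands half-and-half around $W$. The identically worded lemma for the $\tfrac{4}{3}$ construction is proved the same way, with the feasible copies $\mathcal{J}^{(\theta)}$ in place of the copies of $\mathcal{R}$.
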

\begin{proof}
For each $\ell\in\{a,b,c,r\}$, route each demand in $\{\{s_1^\ell,t_1^\ell\},\,\{s_2^\ell,t_2^\ell\},\,\{s_3^\ell,t_3^\ell\}\}$ on the edge set of $C^\ell$ precisely as in the proof of Lemma~\ref{lemma:feasiblering}.
Since the edge capacities $u^\ell,\,u$ assigned to $E(C^\ell)$ in $\mathcal{R}^{(\ell)}$ and $\mathcal{I}$, respectively, differ only in the fact that $u(t_3^\ell,s_1^\ell)=u^\ell(t_3^\ell,s_1^\ell)+1$, it follows that after routing these twelve demands no edge capacity is exceeded in $\mathcal{I}$,
and that every (supply) edge of $E(W)$ still has one unit of capacity left.
Thus, we can route $\{w_1,w_3\}$ (resp. $\{w_2,w_4\}$) by sending $\tfrac{1}{2}$ units of flow along each of the two $w_1w_3$-paths (resp. $w_2w_4$-paths) in $W$. 
\end{proof}

\begin{figure}[h]
    \centering
    \includegraphics[width=0.55\textwidth]{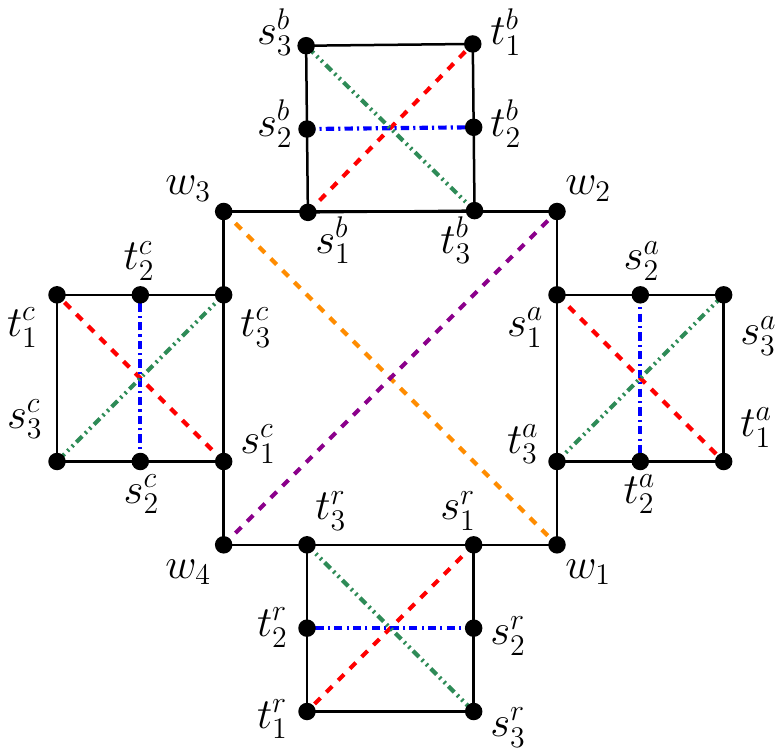}

    \vspace{7mm}
    \caption{The instance $\Iscr$ defined below}\label{fig:lower_bound_5_4}
\end{figure}

We are ready to prove the following weaker version of Theorem~\ref{theorem:lower}.
\begin{theorem}\label{theorem:weaker}
There exists a feasible instance $(G,u,H,d)$ on an outerplanar supply graph $G$, such that for any unsplittable flow $y$, there exists an edge $e\in E(G)$ with
\[y(e)\ge u(e)+\tfrac{5}{4}\,\Dmax.\]
\end{theorem}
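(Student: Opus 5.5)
We take the instance $\mathcal{I}$ constructed above; it is feasible by the preceding lemma, it is outerplanar, and $\Dmax=1$. Fix any unsplittable flow $y$ for $\mathcal{I}$. We aim to find an edge with $y(e)\ge u(e)+\tfrac54$.

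The first step handles the copies of $\mathcal{R}$ one at a time. Each of the three demands in copy $\ell$ has both endpoints on $C^\ell$. The set $V(C^\ell)$ is attached to the rest of $G$ only through the two vertices $t_3^\ell,s_1^\ell$. Hence every simple path for such a demand either stays on $C^\ell-\{t_3^\ell,s_1^\ell\}$ or uses the edge $\{t_3^\ell,s_1^\ell\}$ of $W$. These are exactly the clockwise and counter-clockwise routes in $\mathcal{R}$, since the rest of $W$ connects $t_3^\ell$ to $s_1^\ell$ only through a cut vertex pair and the supply edge $\{t_3^\ell,s_1^\ell\}$ is present in $G$. So $y$ restricted to copy $\ell$ is an unsplittable flow of $\mathcal{R}^{(\ell)}$, and all other edges of $C^\ell$ carry only copy-$\ell$ demands.

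Now apply Lemma~\ref{claim:warmup}. If for some $\ell$ and some $i\in\{1,3\}$ the demand $\{s_i^\ell,t_i^\ell\}$ is routed clockwise, then some edge $e\neq\{t_3^\ell,s_1^\ell\}$ of $C^\ell$ has $y(e)\ge u(e)+\tfrac54$, since $u$ agrees with $u^\ell$ there. Otherwise, in every copy both unit demands $\{s_1^\ell,t_1^\ell\}$ and $\{s_3^\ell,t_3^\ell\}$ use $\{t_3^\ell,s_1^\ell\}$. That edge then already carries at least $2$, and $u(t_3^\ell,s_1^\ell)=\tfrac{11}4$.

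Next consider the crossing demands $\{w_1,w_3\}$ and $\{w_2,w_4\}$. Each must use one of the two arcs of the cycle $W$, because the pendant parts of the copies cannot lie on a simple $w_iw_j$-path other than through the $W$-edges. Since the four points alternate around $W$, the two chosen arcs always share a quarter of $W$. That shared quarter, say between $w_j$ and $w_{j+1}$, contains the edge $\{t_3^\ell,s_1^\ell\}$ for the copy $\ell$ located there. This edge then carries at least $2+1+1=4=\tfrac{11}4+\tfrac54$.

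The main thing to verify carefully is this last combinatorial claim. Each of the two demands picks one of two arcs, giving four cases. In each case the two arcs overlap in exactly one of the four segments between consecutive $w$'s, and each segment contains exactly one copy edge $\{t_3^\ell,s_1^\ell\}$ by the ordering of $W$. A secondary point is the path-structure argument justifying the reduction to $\mathcal{R}$: simple paths cannot wander through other copies, since each copy meets $W$ only in one edge.
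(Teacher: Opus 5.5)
Your proof has a genuine gap. In $G$, the vertices $t_3^\ell$ and $s_1^\ell$ are joined by three internally disjoint paths: the edge $\{t_3^\ell,s_1^\ell\}$, the pendant path $t_3^\ell,t_2^\ell,t_1^\ell,s_3^\ell,s_2^\ell,s_1^\ell$, and the rest of $W$. Your structural claims overlook the last two, and they are false:
\begin{itemize}
\item A counter-clockwise route of a copy-$\ell$ demand may avoid $\{t_3^\ell,s_1^\ell\}$ by going around $W$.
\item A crossing demand may traverse the quarter of copy $\ell$ through the pendant path rather than through $\{t_3^\ell,s_1^\ell\}$, so it need not use an arc of $W$.
\item The edges of $C^\ell$ may carry demands from outside copy $\ell$.
\end{itemize}
Your application of Lemma~\ref{claim:warmup} survives this. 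Each copy demand still uses exactly the edges of $E(C^\ell)\setminus\{\{t_3^\ell,s_1^\ell\}\}$ that its clockwise or counter-clockwise route in $\mathcal{R}$ uses, and extra flow only increases loads. The detour of copy demands around $W$ is also easy to exclude in the shared quarter, say copy $a$ between $w_1$ and $w_2$, and the paper does this. Both crossing demands already load $\{w_1,t_3^a\}$ and $\{s_1^a,w_2\}$ with $2=u(e)+1$, and a detour would add at least $\tfrac12$ there.

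The real problem is the crossing demands, and it is a missing case, not a technicality. Consider the following unsplittable flow:
\begin{itemize}
\item $\{w_1,w_3\}$ is routed along $w_1,t_3^a,t_2^a,t_1^a,s_3^a,s_2^a,s_1^a,w_2,t_3^b,s_1^b,w_3$;
\item $\{w_2,w_4\}$ is routed along $w_2,s_1^a,t_3^a,w_1,s_1^r,t_3^r,w_4$;
\item all demands $\{s_1^\ell,t_1^\ell\}$ and $\{s_3^\ell,t_3^\ell\}$ are routed counter-clockwise through $\{t_3^\ell,s_1^\ell\}$;
\item $\{s_2^a,t_2^a\}$ is routed clockwise.
\end{itemize}
The shared quarter is that of copy $a$, but $\{t_3^a,s_1^a\}$ carries only $3=u(t_3^a,s_1^a)+\tfrac14$. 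So your claimed bound $2+1+1$ fails on the edge you point to.

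What is needed is a separate argument for the case that some crossing demand, say $\{w_1,w_3\}$, avoids $\{t_3^a,s_1^a\}$. Its path then contains $\{s_1^a,s_2^a\}$ and $\{s_2^a,s_3^a\}$. So does the counter-clockwise route of $\{s_3^a,t_3^a\}$, which is forced by Lemma~\ref{claim:warmup}. The $\tfrac12$-demand $\{s_2^a,t_2^a\}$ uses one of these two edges. That edge therefore carries at least $1+1+\tfrac12=\tfrac52=u(e)+\tfrac54$. Only when both crossing demands use $\{t_3^a,s_1^a\}$ does your count $4=\tfrac{11}{4}+\tfrac54$ go through.
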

\begin{proof}
Consider the instance $\mathcal{I}$ defined above. Note that $\Dmax=1$.
Let $P_{1,3}$ and $P_{2,4}$ denote the paths on which $y$ routes $\{w_1,w_3\}$ and $\{w_2,w_4\}$, respectively.
Then, there exist some $i\in[4]$ and $\ell\in\{a,b,c,r\}$ such that $\{w_i,t_3^\ell\},\{s^\ell_1,w_{i+1}\}\in P_{1,3}\cap P_{2,4}$ (where we interpret $w_5\equiv w_1$).
By the symmetry of the instance, we may assume without loss of generality that $\{w_1,t^a_3\},\{s^a_1,w_2\}\in P_{1,3}\cap P_{2,4}$.
We show that the capacity of one of the edges in the copy $C^a$ is exceeded by at least $\tfrac{5}{4}$.
The proof for the other three cases is identical.
Since  $\{w_1,t^a_3\},\{s^a_1,w_2\}\in P_{1,3}\cap P_{2,4}$, then 
\begin{align*}
y(w_1,t_3^a)&\ge y_{\{w_1,w_3\}}(w_1,t_3^a)+y_{\{w_2,w_4\}}(w_1,t_3^a)=2,\\
y(s_1^a,w_2)&\ge y_{\{w_1,w_3\}}(s_1^a,w_2)+y_{\{w_2,w_4\}}(s_1^a,w_2)=2.
\end{align*}
Observe that if $y$ routes any demand $h\in\{\{s_1^a,t_1^a\},\,\{s_2^a,t_2^a\},\,\{s_3^a,t_3^a\}\}$ using a supply edge $e\in\{\{w_1,t_3^a\},\,\{s_1^a,w_2\}\}$, then at least $2+\tfrac{1}{2}=\tfrac{5}{2}=u(e)+\tfrac{3}{2}$ units of flow would be routed on $e$.
Therefore, we can assume that each $h\in\{\{s_1^a,t_1^a\},\,\{s_2^a,t_2^a\},\,\{s_3^a,t_3^a\}\}$ is routed in $y$ using only edges of the cycle $C^a=s_1^a,\,s_2^a,\,s_3^a,\,t_1^a,\,t_2^a,\,t_3^a,\,s_1^a$.
Recall that the edge capacities of $C^a$ in $\mathcal{R}^{(a)}$ and in $\mathcal{I}$ differ only on the edge $\{t_3^a,s_1^a\}$.
Thus, by Lemma~\ref{claim:warmup} we may assume that $\{s_1^a,t_1^a\}$ and $\{s_3^a,t_3^a\}$ are routed (counter-clockwise) in $y$ along the paths $Q_1^a=s_1^a,\,t_3^a,\,t_2^a,\,t_1^a$ and $Q_3^a=s_3^a,\,s_2^a,\,s_1^a,\,t_3^a$, respectively; otherwise, the capacity of an edge in $E(C^a)\setminus \{\{t_3^a,s_1^a\}\}$ would be exceeded by at least $\tfrac{5}{4}$.
Observe that if $\{t_3^a,s_1^a\}\in P_{1,3}\cap P_{2,4}$, then
\[y(t_3^a,s_1^a)\ge d(w_1,w_3)+d(w_2,w_4)+d(s_1^a,t_1^a)+d(s_3^a,t_3^a)=4=\tfrac{11}{4}+\tfrac{5}{4}=u(t_3^a,s_1^a)+\tfrac{5}{4}.\]
Therefore, we assume without loss of generality that $\{t_3^a,s_1^a\}\notin P_{1,3}$.
This implies that $E(C^a)\setminus \{\{t_3^a,s_1^a\}\}\subseteq P_{1,3}$.
We argue that $y$ congests either $\{s_1^a,s_2^a\}$ or $\{s_2^a,s_3^a\}$ with at least $\tfrac{5}{4}$ units of flow in that case.
Let $Q_2^a$ denote the path on which $y$ routes $\{s_2^a,t_2^a\}$.
If $\{s_1^a,s^a_2\}\in Q_2^a$, then
\[y(s_1^a,s^a_2)\ge d(w_1,w_3)+d(s_2^a,t_2^a)+d(s_3^a,t_3^a)=\tfrac{5}{2}=u(s_1^a,s_2^a)+\tfrac{5}{4};\]
otherwise, if $\{s_2^a,s^a_3\}\in Q_2^a$, then
\[y(s_2^a,s^a_3)\ge d(w_1,w_3)+d(s_2^a,t_2^a)+d(s_3^a,t_3^a)=\tfrac{5}{2}=u(s_2^a,s_3^a)+\tfrac{5}{4}.\qedhere \]
\end{proof}

\subsection{Proof of Theorem~\ref{theorem:lower}}~\label{section:theoremlower}
We generalize the result in the previous section and show that for any $n\in \Z_{>0}$, there is a feasible multicommodity flow instance supported on an outerplanar graph with $1+4n$ inner faces,
such that any unsplittable flow must necessarily exceed the capacity of some edge by at least 
$\left(1+\tfrac{n}{1+3n}\right)\Dmax$.
Since $1+\tfrac{n}{1+3n}\rightarrow\tfrac{4}{3}$ as $n\rightarrow \infty$, Theorem~\ref{theorem:lower} will follow.\footnote{One can take $n=\lceil\tfrac{1}{9\varepsilon}\rceil$, where $\varepsilon>0$ is as in the statement of Theorem~\ref{theorem:lower}.}

We start by considering a family of ring-loading instances on a common (supply graph, demand graph) pair $(C,H)$.
For any $n\in\Zp$ and any $\ell\in[n]$ we define the following ring-loading instance $\mathcal{I}(n,\ell):=(C,u,H,d)$, which we illustrate in Figure~\ref{fig:i_nl}.

\begin{figure}[h]
    \centering
    \includegraphics[width=0.3\textwidth]{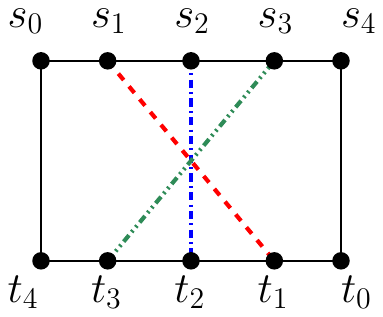}

    \vspace{5mm}
    
    \caption{The ring-loading instance $\Iscr(n,\ell)$}\label{fig:i_nl}
\end{figure}

\begin{align*}
\textbf{Supply graph:}&\quad C=s_0,s_1,s_2,s_3,s_4,t_0,t_1,t_2,t_3,t_4,s_0 \text{ is a 10-cycle},\\[1ex]
\textbf{Demand graph:}&\quad E(H)=\{\{s_1,t_1\},\,\{s_2,t_2\},\,\{s_3,t_3\}\},\\[1ex]
\textbf{Capacities:}&\quad u(s_1,s_2)=u(s_2,s_3)=1+\tfrac{n}{1+3n},\\
&\quad u(s_3,s_4)=u(s_4,t_0)=u(t_0,t_1)=1-\tfrac{\ell}{1+3n},\\
&\quad u(t_1,t_2)=u(t_2,t_3)=1+\tfrac{n}{1+3n},\\
&\quad u(t_3,t_4)=u(t_4,s_0)=u(s_0,s_1)= 1+\tfrac{\ell+2n}{1+3n},\;\\[1ex]
\textbf{Demands:}&\quad d(s_1,t_1)=1,\quad d(s_2,t_2)=\tfrac{2n}{1+3n},\quad d(s_3,t_3)=1.
\end{align*}

\begin{lemma}\label{claim:i_nl_is_feasible}
For any $n\in\Zp$ and $\ell \in[n]$, $\mathcal{I}(n,\ell)$ is a feasible instance.
\end{lemma}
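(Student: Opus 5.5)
The plan is to write down an explicit feasible fractional flow, exactly as in the proof of Lemma~\ref{lemma:feasiblering}. Set $N:=1+3n$. For $i\in\{1,2,3\}$, let $r_i$ be the amount of $s_it_i$-flow routed clockwise along $s_i,\dots,s_4,t_0,\dots,t_i$. Let $\overline{r_i}=d(s_i,t_i)-r_i$ be the amount routed counter-clockwise along $s_i,\dots,s_1,s_0,t_4,\dots,t_i$.

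The first step is to list the edge loads in terms of the $r_i$. Each of the three edges $\{s_3,s_4\},\{s_4,t_0\},\{t_0,t_1\}$ carries $r_1+r_2+r_3$. Each of $\{t_3,t_4\},\{t_4,s_0\},\{s_0,s_1\}$ carries $\overline{r_1}+\overline{r_2}+\overline{r_3}$. The remaining edges carry
\[\{s_1,s_2\}:\ r_1+\overline{r_2}+\overline{r_3},\qquad \{s_2,s_3\}:\ r_1+r_2+\overline{r_3},\]
\[\{t_1,t_2\}:\ \overline{r_1}+r_2+r_3,\qquad \{t_2,t_3\}:\ \overline{r_1}+\overline{r_2}+r_3.\]

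Next, observe that the two groups of three edges form a tight pair. Their capacities $\tfrac{N-\ell}{N}$ and $\tfrac{N+\ell+2n}{N}$ sum to $\tfrac{2N+2n}{N}=2+\tfrac{2n}{N}$, which is the total demand. This forces $r_1+r_2+r_3=\tfrac{N-\ell}{N}$, and then both groups are exactly saturated.

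Then use the left–right symmetry of the instance and take $r_1=r_3=a$ and $r_2=b$. The load on $\{s_2,s_3\}$ becomes $a+b+(1-a)=1+b$. The load on $\{s_1,s_2\}$ becomes $a+(\tfrac{2n}{N}-b)+(1-a)=1+\tfrac{2n}{N}-b$. Both must be at most $1+\tfrac{n}{N}$, which forces $b=\tfrac{n}{N}$. The edges $\{t_1,t_2\}$ and $\{t_2,t_3\}$ give the same two expressions by symmetry, so they are also satisfied. Finally, $a=\tfrac{N-\ell-n}{2N}=\tfrac{1+2n-\ell}{2N}$.

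The only remaining step is to check that all path values are nonnegative. Since $\ell\le n$, we get $a\ge\tfrac{1+n}{2N}>0$, and clearly $a\le 1$. Also $\overline{r_2}=\tfrac{2n}{N}-\tfrac{n}{N}=\tfrac{n}{N}\ge 0$. Hence every edge load equals its capacity, and the flow is feasible. There is no real obstacle here; the only care needed is in correctly listing which demands traverse each edge in each direction.
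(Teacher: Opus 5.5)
Your proposal is correct and is essentially the paper's proof. The paper writes down the same explicit flow, namely $r_1=r_3=\frac{1+2n-\ell}{2(1+3n)}$ and $r_2=\overline{r_2}=\frac{n}{1+3n}$, and checks that every edge is exactly saturated; the only difference is that you derive these values from the edge constraints instead of stating them outright.
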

\begin{proof}
For each $i\in\{1,2,3\}$, let $r_i$ denote the amount of $s_i$-$t_i$ flow routed (clockwise) along the path $s_i,s_{i+1},\hdots,s_4,t_0,\hdots,t_i$, and let $\overline{r_i}$ denote the amount of $s_i$-$t_i$ flow routed (counter-clockwise) along the path $s_i,s_{i-1},\hdots,s_0,t_4,\hdots,t_i$.
Take \[r_{1}=r_{3}=\frac{\tfrac{1}{2}+n}{1+3n}-\frac{\ell}{2+6n},\quad \overline{r_{1}}=\overline{r_{3}}=\frac{\tfrac{1}{2}+2n}{1+3n}+\frac{\ell}{2+6n},\quad r_{2}=\overline{r_{2}}=\frac{n}{1+3n}.\]
This flow routes the required demand, since $r_1+\overline r_1=r_3+\overline r_3=1$ and $r_2+\overline r_2=\frac{2n}{1+3n}$.
Observe that the flow on each edge of the paths $s_1,s_2,s_3$ and $t_1,t_2,t_3$ is equal to $1+\frac{n}{1+3n}.$
The flow on each edge of the path $s_3,s_4,t_0,t_1$ is equal to $1-\frac{\ell}{1+3n}$.
Finally, the flow on each edge of the path $t_3,t_4,s_0,s_1$ is equal to $1+\frac{\ell+2n}{1+3n}$.
Thus, on every edge the total flow equals its capacity, and the instance is feasible.
\end{proof}

Next, we construct an (intermediate) instance as follows.
Consider the $n$ vertex-disjoint instances $\{\mathcal{I}(n,\ell)=(C^{(\ell)},\,u^{(\ell)},\,H^{(\ell)},\,d^{(\ell)})\}_{\ell\in[n]}$,
where we assume that $\mathcal{I}(n,\ell)$ is supported on the cycle $C^{(\ell)}=s_0^\ell,\hdots,s_4^\ell,t_0^\ell,\hdots,t_4^\ell,s_0^\ell$ (and where we interpret $s^\ell_i\equiv s_i$ and $t^\ell_i\equiv t_i$).
We concatenate these instances into a larger instance whose inner faces form a one-dimensional grid.
Informally, for every $\ell\in[n-1]$, we glue the edge $\{t_4^{\ell+1},s_0^{\ell+1}\}$ at $\{s_4^\ell,t_0^\ell\}$ by identifying
$s_0^{\ell+1}$ with $s_4^\ell$ and $t_4^{\ell+1}$ with $t_0^\ell$.
Then, we add the capacity of $\{t_4^{\ell+1},s_0^{\ell+1}\}$ to that of $\{s_4^\ell,t_0^\ell\}$.
The demands and the capacities of all other edges (i.e., the outer edges of the resulting instance) remain unchanged.
This results in the following instance $\mathcal{J}=(G,u,H,d)$. See Figure~\ref{fig:ring_loading_sequence} for an illustration.

\begin{figure}[h]
    \centering
    \includegraphics[width=0.8\textwidth]{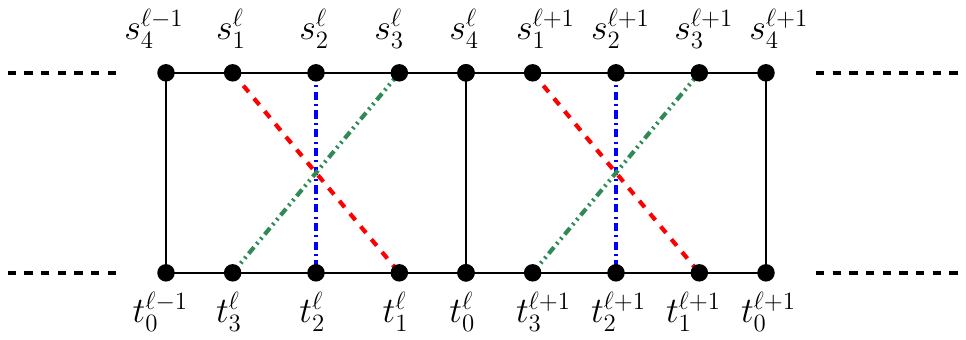}

    \vspace{5mm}
    \caption{The $\ell$-th and $\ell +1$-st inner face in the outerplanar instance $\mathcal{J}$.}\label{fig:ring_loading_sequence}
\end{figure}

\begin{align*}
\textbf{Vertex set:}& \quad V=\bigcup_{\ell=1}^n V(C^{(\ell)})\setminus \bigcup_{\ell=2}^n\{s_0^\ell,t^\ell_4\}.\\
\textbf{Supply edges:}
&\quad E(G)=\bigcup_{\ell=1}^n E(C^{(\ell)})\setminus \{\{t^\ell_4,s_0^\ell\}\,:\;2\le\ell\le n\;\}.\\[1ex]
\textbf{Demand edges:}&\quad E(H)=\bigcup_{\ell=1}^n\{\{s_1^\ell,t_1^\ell\},\,\{s_2^\ell,t_2^\ell\},\,\{s_3^\ell,t_3^\ell\}\}.\\[1ex]
\textbf{Capacities:}&\quad u(s_1^\ell,s^\ell_2)=u(s^\ell_2,s^\ell_3)=u(t^\ell_1,t^\ell_2)=u(t^\ell_2,t^\ell_3)=1+\tfrac{n}{1+3n}\quad \forall\ell\in[n]\;,\\[0.5ex]
&\quad u(s^\ell_3,s^\ell_4)=u(t^\ell_0,t^\ell_1)=1-\tfrac{\ell}{1+3n}\quad\forall\ell\in[n],\\[0.5ex]
&\quad u(t_3^{\ell},t^{\ell-1}_0)=u(s^{\ell-1}_4,s^{\ell}_1)= 1+\tfrac{\ell+2n}{1+3n}\quad \forall2\le\ell\le n,\;\\[0.5ex]
&\quad u(s_4^\ell,t_0^\ell)=u^{(\ell)}(s_4^\ell,t_0^\ell)+ u^{(\ell+1)}(t_4^{\ell+1},s_0^{\ell+1})=3-\tfrac{n}{1+3n}\quad\forall\ell\in[n-1],\\[0.5ex]
&\quad u(t_3^1,t_4^1)=u(t_4^1,s_0^1)=u(s_0^1,s_1^1)=2-\tfrac{n}{1+3n},\quad u(s_4^n,t_0^n)=1-\tfrac{n}{1+3n}.\\[1ex]
\textbf{Demands:}&\quad d(s_1^\ell,t_1^\ell)=1,\quad d(s_2^\ell,t_2^\ell)=\tfrac{2n}{1+3n},\quad d(s_3^\ell,t_3^\ell)=1\quad\forall\ell\in[n].
\end{align*}
Note that $\Dmax=1$ and that removing the outer face from the planar dual of $G$ results in a path whose $\ell^{th}$ node corresponds to the inner face of $G$ bounded by the cycle $C^{(\ell)}$.
\begin{lemma}
$\mathcal{J}$ is a feasible instance.
\end{lemma}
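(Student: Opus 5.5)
The plan is to reuse, inside $G$, the feasible flows of the individual ring-loading instances from Lemma~\ref{claim:i_nl_is_feasible}, exactly as in the warm-up feasibility argument. For each $\ell\in[n]$ we route the three demands $\{s_i^\ell,t_i^\ell\}$, $i\in\{1,2,3\}$, as in the proof of Lemma~\ref{claim:i_nl_is_feasible}. The clockwise paths use $s_3^\ell,s_4^\ell,t_0^\ell,t_1^\ell$ and the counter-clockwise paths use $t_3^\ell,t_4^\ell,s_0^\ell,s_1^\ell$; these paths live on the cycle $C^{(\ell)}$. Under the identification $s_0^{\ell}\equiv s_4^{\ell-1}$ and $t_4^{\ell}\equiv t_0^{\ell-1}$ for $\ell\ge2$, every edge of $C^{(\ell)}$ corresponds to an edge of $G$. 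The edge $\{t_4^\ell,s_0^\ell\}$ corresponds to the merged inner edge $\{t_0^{\ell-1},s_4^{\ell-1}\}$. Hence each path of the flow for $\mathcal{I}(n,\ell)$ maps to a simple path of $G$ between the same endpoints.

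Next I would check the capacities edge by edge. Every edge of $G$ other than a merged inner edge belongs to exactly one cycle $C^{(\ell)}$, and its capacity in $\mathcal{J}$ equals its capacity in $\mathcal{I}(n,\ell)$. Let me verify this against the definition:
\begin{itemize}
\item The edges $\{t_3^\ell,t_0^{\ell-1}\}$ and $\{s_4^{\ell-1},s_1^\ell\}$ are the images of $\{t_3^\ell,t_4^\ell\}$ and $\{s_0^\ell,s_1^\ell\}$, and both have capacity $1+\frac{\ell+2n}{1+3n}$.
\item For $\ell=1$, the value $2-\frac{n}{1+3n}=1+\frac{1+2n}{1+3n}$ matches.
\item The edge $\{s_4^n,t_0^n\}$ has capacity $1-\frac{n}{1+3n}$, which matches.
\end{itemize}
On each such edge only the flow of one copy is present, so the load equals that copy's load, which is at most its capacity.

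The only edges shared by two copies are the inner edges $\{s_4^\ell,t_0^\ell\}$ for $\ell\in[n-1]$. These carry the load of copy $\ell$ on its edge $\{s_4^\ell,t_0^\ell\}$, namely $1-\frac{\ell}{1+3n}$. They also carry the load of copy $\ell+1$ on its edge $\{t_4^{\ell+1},s_0^{\ell+1}\}$, namely $1+\frac{\ell+1+2n}{1+3n}$. The sum is
\[2+\frac{1+2n}{1+3n}=3-\frac{n}{1+3n},\]
which is exactly the assigned capacity, and agrees with $u^{(\ell)}(s_4^\ell,t_0^\ell)+u^{(\ell+1)}(t_4^{\ell+1},s_0^{\ell+1})$ as in the definition. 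So the union of the flows is feasible and $\mathcal{J}$ is feasible.

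The main point needing care is the bookkeeping of the identification: confirming that the glued edges are the only shared edges, and that the outer capacities listed for $\mathcal{J}$ coincide with those of the corresponding copy. This is routine.
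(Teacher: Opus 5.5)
Your proposal is correct and follows the same approach as the paper: route each triple of demands as in Lemma~\ref{claim:i_nl_is_feasible} around the boundary of the $\ell$-th inner face. You additionally verify the capacities explicitly, including that on each glued edge $\{s_4^\ell,t_0^\ell\}$ the two saturating loads sum to exactly $3-\tfrac{n}{1+3n}$; the paper leaves this check implicit.
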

\begin{proof}
 For each $\ell \in [n]$ we can route the demands $\{s_1^\ell, t_1^\ell\},\, \{s_2^\ell, t_2^\ell\},\, \{s_3^\ell, t_3^\ell\}$ as in Lemma~\ref{claim:i_nl_is_feasible},
 around the boundary of the inner face that contains all $s_i^\ell$ and $t_i^\ell$ for $i=1,2,3$.
\end{proof}

For simplicity, we introduce the following definition.
\begin{definition}
Consider an unsplittable flow $y$ of $\mathcal{J}$, and let $\ell\in[n]$ and $i\in\{1,2,3\}$. We say that $y$ routes $\{s_i^\ell,t_i^\ell\}$ \emph{to the right} if $y$ 
routes $\{s_i^\ell,t_i^\ell\}$ along an $s_i^\ell \,t_i^\ell$-path containing the edge $\{s^\ell_i,s^\ell_{i+1}\}$;
otherwise, we say that $y$ routes $\{s_i^\ell,t_i^\ell\}$ \emph{to the left}, i.e., 
if $y$ routes $\{s_i^\ell,t_i^\ell\}$ along an $s_i^\ell \,t_i^\ell$-path containing the edge $\{s^\ell_i,s^\ell_{i-1}\}$.\footnote{We interpret $s^\ell_{0}\equiv s_4^{\ell-1}$ for any $2\le\ell\le n$}
\end{definition}
Later, we construct the instance satisfying the conditions of Theorem~\ref{theorem:lower} by taking four vertex-disjoint copies of $\mathcal{J}$ and identifying the edge $\{t_4^1,s_0^1\}$ of each copy with an edge of a central cycle.
The next lemma will allow us to assume later that, within every copy, all demands $\{s_1^\ell,t_1^\ell\}$ and $\{s_3^\ell,t_3^\ell\}$ must be routed to the left.
\begin{lemma}\label{lemma:totheleft}
Let $y$ be an unsplittable flow of $\mathcal{J}=(G,u,H,d)$.
If, for some $\ell\in[n]$ and some $i\in\{1,3\}$, $y$ routes $\{s_i^\ell,t_i^\ell\}$ to the right, then there is an edge $e\in E(G) \setminus \{\{t_4^1,s_0^1\}\}$ with 
\[y(e)\ge u(e)+1+\tfrac{n}{1+3n}.\]
\end{lemma}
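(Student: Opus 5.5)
The plan is to imitate the proof of Lemma~\ref{claim:warmup}, working only inside the inner face bounded by $C^{(\ell)}$. Write $\gamma := 1+\tfrac{n}{1+3n}$ and let $P_i$ be the path on which $y$ routes $\{s_i^\ell,t_i^\ell\}$. First I would record a structural fact about $G$. The faces with index $>\ell$, together with $s_4^\ell,t_0^\ell$, attach to the rest of $G$ only through the two edges $\{s_3^\ell,s_4^\ell\}$ and $\{t_0^\ell,t_1^\ell\}$. A symmetric statement holds on the left through $\{s_0^\ell,s_1^\ell\}$ and $\{t_3^\ell,t_4^\ell\}$. Consequently a simple $s_i^\ell t_i^\ell$-path routed \emph{to the right} uses every edge of $s_i^\ell,\dots,s_3^\ell,s_4^\ell$ and of $t_0^\ell,t_1^\ell,\dots,t_i^\ell$. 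A path routed to the left uses every edge of $s_i^\ell,\dots,s_1^\ell$ and of $t_3^\ell,\dots,t_i^\ell$. None of these edges is $\{t_4^1,s_0^1\}$, so any violated edge found this way is admissible.

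Next I would do the case analysis on the directions of $P_1$ and $P_3$, exactly as in the warm-up. If $P_1$ goes right and $P_3$ goes left, both use $\{s_1^\ell,s_2^\ell\}$ and $\{s_2^\ell,s_3^\ell\}$. Whichever way $P_2$ leaves $s_2^\ell$, one of these two edges also carries $d(s_2^\ell,t_2^\ell)=\tfrac{2n}{1+3n}$. Its load is then $2+\tfrac{2n}{1+3n}=u(e)+\gamma$. The case where $P_1$ goes left and $P_3$ goes right is symmetric, with the violated edge among $\{t_1^\ell,t_2^\ell\},\{t_2^\ell,t_3^\ell\}$.

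The remaining case, both $P_1$ and $P_3$ routed right, is where I expect the real difficulty. Here $\{s_3^\ell,s_4^\ell\}$ and $\{t_0^\ell,t_1^\ell\}$ each carry at least $2$. This exceeds their capacity $1-\tfrac{\ell}{1+3n}$ by $1+\tfrac{\ell}{1+3n}$, which equals $\gamma$ only when $\ell=n$. If $P_2$ is also routed right, the load rises by $\tfrac{2n}{1+3n}$ and we are done. If $P_2$ goes left, the single-face count falls short for $\ell<n$. So for the both-right, $P_2$-left subcase I would try an induction on $\ell$ downward from $n$ ($\ell=n$ being immediate). I would follow the two units leaving $s_4^\ell$ until they return to $t_0^\ell$, either via the chord $\{s_4^\ell,t_0^\ell\}$ or by entering face $\ell+1$ through $\{s_4^\ell,s_1^{\ell+1}\}$. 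I would then try to show that either these two units combine with the demands of face $\ell+1$ to reproduce the same configuration one face further right, or some edge of face $\ell$ or $\ell+1$ is overloaded by $\gamma$.

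I am genuinely uncertain whether this subcase closes. A direct check suggests it may not: routing $P_1,P_3$ right through the chord, $P_2$ left, and every other face as in Lemma~\ref{claim:i_nl_is_feasible}, I find a maximum excess of only $1+\tfrac{\ell}{1+3n}$. If that check is right, the statement would have to be weakened (e.g.\ to $\ell=n$, or to an excess of $1+\tfrac{\ell}{1+3n}$). Alternatively, the global construction would have to rule out this configuration by other means. Verifying this subcase is therefore the first thing I would do before writing the rest of the proof.
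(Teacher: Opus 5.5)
Your opposite-direction case and your $\ell=n$ case match the paper. The gap is exactly where you suspected: the case where $\{s_1^\ell,t_1^\ell\}$ and $\{s_3^\ell,t_3^\ell\}$ both go right with $\ell<n$ is never closed.

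Your sanity check suggesting the lemma might be false is not valid. You route every other face ``as in Lemma~\ref{claim:i_nl_is_feasible}'', i.e.\ fractionally, so that flow is not unsplittable and is not a counterexample. In an unsplittable $y$, each of $\{s_1^{\ell+1},t_1^{\ell+1}\}$ and $\{s_3^{\ell+1},t_3^{\ell+1}\}$ must go entirely left or entirely right. Also, the overloaded edge may lie anywhere in $G$ except $\{t_4^1,s_0^1\}$, not necessarily in face $\ell$. Tracing where face $\ell$'s two units go is therefore the wrong focus. What closes the argument is the direction of face $(\ell+1)$'s unit demands.

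The missing step is as follows. Apply your opposite-direction argument to every face. Then let $\ell$ be the largest index for which both unit demands of face $\ell$ go right; this is equivalent to your downward induction, with the inductive hypothesis invoked whenever face $\ell+1$ has a right-routed unit demand. If $\ell<n$, both unit demands of face $\ell+1$ go left. So two units enter $s_4^\ell$ through $\{s_3^\ell,s_4^\ell\}$, and two enter through $\{s_4^\ell,s_1^{\ell+1}\}$. Since $s_4^\ell$ has degree $3$ and $s_1^{\ell+1},s_2^{\ell+1},s_3^{\ell+1}$ have degree $2$, one of three things happens:
\begin{enumerate}
\item All four demands use the chord $\{s_4^\ell,t_0^\ell\}$. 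Its load is $4$ against capacity $3-\tfrac{n}{1+3n}$.
\item A face-$(\ell+1)$ demand leaves $s_4^\ell$ via $\{s_3^\ell,s_4^\ell\}$. That edge then has load $3$ against capacity $1-\tfrac{\ell}{1+3n}$.
\item A face-$\ell$ demand leaves $s_4^\ell$ via $\{s_4^\ell,s_1^{\ell+1}\}$ and is forced along $s_1^{\ell+1},s_2^{\ell+1},s_3^{\ell+1}$. There it shares both edges with the left-routed $\{s_3^{\ell+1},t_3^{\ell+1}\}$, and one of these edges also carries $\{s_2^{\ell+1},t_2^{\ell+1}\}$. That edge has load $2+\tfrac{2n}{1+3n}$ against capacity $1+\tfrac{n}{1+3n}$.
\end{enumerate}
Each case gives excess at least $1+\tfrac{n}{1+3n}$ on an edge other than $\{t_4^1,s_0^1\}$. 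No case split on the direction of $\{s_2^\ell,t_2^\ell\}$ is needed.
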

\begin{proof}
We start by observing that if the demands $\{s_1^\ell,t_1^\ell\}$ and $\{s_3^\ell,t_3^\ell\}$ are routed in opposite directions, 
then $y$ exceeds the capacity of some edge of the $\ell$-th face of $G$ by at least $1+\tfrac{n}{1+3n}$.
\begin{claiminproof}\label{claim:opposite}
If for some $\ell\in[n]$ and distinct $i,j\in\{1,3\}$,
$y$ routes $\{s_i^\ell,t_i^\ell\}$ to the right and it routes $\{s_j^\ell,t_j^\ell\}$ to the left, then there is a supply edge $e\in E(G)$ with   
$y(e)\ge u(e)+1+\frac{n}{1+3n}$.
\end{claiminproof}
\noindent\emph{Proof.}
Suppose first that $i=1$ and $j=3$.
Observe that in this case both $\{s_1^\ell,t_1^\ell\}$ and $\{s_3^\ell,t_3^\ell\}$ are routed in $y$ by using the edges of the path $s_1^\ell,s_2^\ell,s_3^\ell$.
Thus, if $\{s_2^\ell,t_2^\ell\}$ is routed to the left, then 
\[y(s_1^\ell,s_2^\ell)\ge d(s_1^\ell,t_1^\ell)+d(s_2^\ell,t_2^\ell)+d(s_3^\ell,t_3^\ell)=2+\tfrac{2n}{1+3n}=2\left(1+\tfrac{n}{1+3n}\right)=u(s_1^\ell,s_2^\ell)+1+\tfrac{n}{1+3n}.\]
On the other hand, if $\{s_2^\ell,t_2^\ell\}$ is routed to the right, then
\[y(s_2^\ell,s_3^\ell)\ge d(s_1^\ell,t_1^\ell)+d(s_2^\ell,t_2^\ell)+d(s_3^\ell,t_3^\ell)=2+\tfrac{2n}{1+3n}=2\left(1+\tfrac{n}{1+3n}\right)=u(s_2^\ell,s_3^\ell)+1+\tfrac{n}{1+3n}.\]
The case in which $i=3$ and $j=1$ is analogous.
In that case, the capacity of either $\{t_1^\ell,t_2^\ell\}$ or $\{t_2^\ell,t_3^\ell\}$ is exceeded by at least $1+\frac{n}{1+3n}$ units of flow.
\hfill{ \Large$\triangleleft$}
\vspace{2ex}

By Claim~\ref{claim:opposite}, we can assume that for all $\ell\in[n]$ either both  $\{s_1^\ell,t_1^\ell\}$ and $\{s_3^\ell,t_3^\ell\}$ are routed to the right, 
or that both are routed to the left.
Take the largest $\ell$ such that both $\{s_1^\ell,t_1^\ell\}$ and $\{s_3^\ell,t_3^\ell\}$ are routed to the right.
If $\ell=n$,
then $y$ routes at least $2=d(s_1^n,t_1^n)+d(s_3^n,t_3^n)$ units of flow through the edge $\{s_4^n,t_0^n\}$. 
Thus,
\[y(s_4^n,t_0^n)\ge 2=u(s_4^n,t_0^n)+1+\tfrac{n}{1+3n}.\]
Therefore, we can assume that $\ell<n$.
For simplicity, let $P_1^\ell$ and $P_3^\ell$ denote the paths on which $y$ routes $\{s_1^\ell,t_1^\ell\}$ and $\{s_3^\ell,t_3^\ell\}$, respectively.
Similarly, let $P_1^{\ell+1}$ and $P_3^{\ell+1}$ denote the paths on which $y$ routes $\{s_1^{\ell+1},t_1^{\ell+1}\}$ and $\{s_3^{\ell+1},t_3^{\ell+1}\}$, respectively.
By our choice of $\ell$ (and the assumption that for each of these $n$ indices, the two corresponding demands are routed in the same direction),
we have that 
\[\{s_3^\ell,s_4^\ell\}\in P_1^{\ell}\cap P_3^{\ell} \;\;\text{ and }\;\;\{s_4^\ell,s_1^{\ell+1}\}\in P_1^{\ell+1}\cap P_3^{\ell+1}.\]
Observe that if these four (unit) demands are routed on the inner edge $\{s_4^\ell,t_0^\ell\}$ i.e., $\{s_4^\ell,t_0^\ell\}\in P_1^{\ell}\cap P_3^{\ell} \cap P_1^{\ell+1}\cap P_3^{\ell+1}$, then
\[y(s_4^\ell,t_0^\ell)-u(s_4^\ell,t_0^\ell)\ge
4-\left(3-\tfrac{n}{1+3n}\right)=1+\tfrac{n}{1+3n}.\]
Thus, we assume that either $\{s_3^\ell,s_4^\ell\}\in P_1^{\ell+1}\cup P_3^{\ell+1}$ or $\{s_4^\ell,s_1^{\ell+1}\}\in P_1^{\ell}\cup P_3^{\ell}$.
If $\{s_3^\ell,s_4^\ell\}\in P_1^{\ell+1}\cup P_3^{\ell+1}$, 
then $y$ would route (at least) $3$ units of flow on $\{s_3^\ell,s_4^\ell\}$, implying that
\[y(s_3^\ell,s_4^\ell)-u(s_3^\ell,s_4^\ell)\ge 3-\left(1-\tfrac{\ell}{1+3n}\right)>2.\]
On the other hand,
if $\{s_4^\ell,s_1^{\ell+1}\}\in P_1^{\ell}\cup P_3^{\ell}$ then $\{s_1^{\ell+1},s_2^{\ell+1}\},\,\{s_2^{\ell+1},s_3^{\ell+1}\}\in P_1^{\ell}\cup P_3^{\ell}$.
Since $\{s_1^{\ell+1},s_2^{\ell+1}\},\,\{s_2^{\ell+1},s_3^{\ell+1}\}\in P_3^{\ell+1}$ as well, this implies that $y$ routes at least $2$ units of flow on these two supply edges without counting the demand $\{s_2^{\ell+1},t_2^{\ell+1}\}$. Thus, the same argument of Claim~\ref{claim:opposite} implies that either
$y(s_1^{\ell+1},s_2^{\ell+1})-u(s_1^{\ell+1},s_2^{\ell+1})\ge 1+\tfrac{n}{1+3n}$
(if $\{s_2^{\ell+1},t_2^{\ell+1}\}$ is routed to the left), or
$y(s_2^{\ell+1},s_3^{\ell+1})-u(s_2^{\ell+1},s_3^{\ell+1})\ge 1+\tfrac{n}{1+3n}$
(if $\{s_2^{\ell+1},t_2^{\ell+1}\}$ is routed to the right).
\end{proof}

Now we describe the instance which will imply the proof of Theorem~\ref{theorem:lower}.
The instance is illustrated in Figure~\ref{fig:lower_bound_4_3}. 
We create four disjoint copies of $\mathcal{J}$ and identify the edge $\{t_4^1,s_0^1\}$ of each of these copies with an edge of a central 12-cycle $W$ containing four new vertices $w_1,w_2,w_3,w_4$ and two ``crossing'' unit demands $\{w_1,w_3\}$, $\{w_2,w_4\}$.
We increase the capacity of the four copies of $\{t_4^1,s_0^1\}$ by one unit, and then assign one unit of capacity to the remaining edges of $W$.
By Lemma~\ref{lemma:totheleft}, we can assume that the demands $\{s_1^1,t_1^1\}$ and $\{s_3^1,t_3^1\}$ of each of these copies are routed to the left; otherwise,
the capacity of some edge within the corresponding copy is exceeded by at least $1+\tfrac{n}{1+3n}$.
Hence, just as in the proof of Theorem~\ref{theorem:weaker}, we can then argue that $\{w_1,w_3\}$, $\{w_2,w_4\}$, together with the demands $\{s_1^1,t_1^1\}$, $\{s_3^1,t_3^1\}$ of one of the four copies of $\mathcal{J}$,
must exceed the capacity of some edge by at least $1+\tfrac{n}{1+3n}$.
We provide a formal proof for completeness.

Consider four vertex-disjoint copies $\{\,\mathcal{J}^{(\theta)}=(G^\theta,\,u^\theta,\,H^\theta,\,d^\theta)\;\}_{\theta\in\{a,b,c,r\}}$ of $\mathcal{J}$.
To simplify notation, we use $t^\theta$ and $s^\theta$ to denote the copies of nodes $t_4^1$ and $s^1_0$ in $\mathcal{J}^{(\theta)}$, respectively. Consider the 12-cycle
\[W=w_1,\,t^a,\,s^a,\,w_2,\,t^b,\,s^b,\,w_3,\,t^c,\,s^c,\,w_4,\,t^r,\,s^r,\,w_1\;.\]
Consider the following instance $\mathcal{I}=(G,u,H,d)$.
\begin{align*}
\textbf{Vertex set:}&\quad V=\{w_1,w_2,w_3,w_4\}\cup \bigcup_{\theta\in\{a,b,c,r\}} V(G^{\theta}).\\
\textbf{Supply edges:}&\quad E(G)= E(W)\cup \bigcup_{\theta\in\{a,b,c,r\}} \big(E(G^{\theta})\setminus \{\{t^\theta,s^\theta\}\}\big).\\[1ex]
\textbf{Demand edges:}&\quad E(H)=\{\{w_1,w_3\},\;\{w_2,w_4\}\} \cup \bigcup_{\theta\in\{a,b,c,r\}}E(H^\theta).\\[1ex]
\textbf{Capacities:} &\quad u(e)=u^\theta(e),\quad \quad\forall \theta\in\{a,b,c,r\},\;\forall e\in E(G^\theta)\setminus\{\{t^\theta,s^\theta\}\}.\\[1ex]
&\quad u(t^\theta,s^\theta)=1+u^\theta(t^\theta,s^\theta)=3-\tfrac{n}{1+3n},\;\qquad \qquad\forall\theta\in\{a,b,c,r\}.\\[1ex]
&\quad u(e)=1,\quad\quad \quad\forall e\in E(W)\setminus\{\{t^\theta,s^\theta\}\;:\;\theta=a,b,c,r\}.\\[1ex]
\textbf{Demands:}
&\quad d(h)=d^\theta(h),\quad\quad \forall \theta\in\{a,b,c,r\},\quad\forall h\in E(H^\theta).\\[1ex]
&\quad d(w_1,w_3)=1,\quad d(w_2,w_4)=1.
\end{align*}

\begin{figure}[h]
    \centering
    \includegraphics[width=0.65\textwidth]{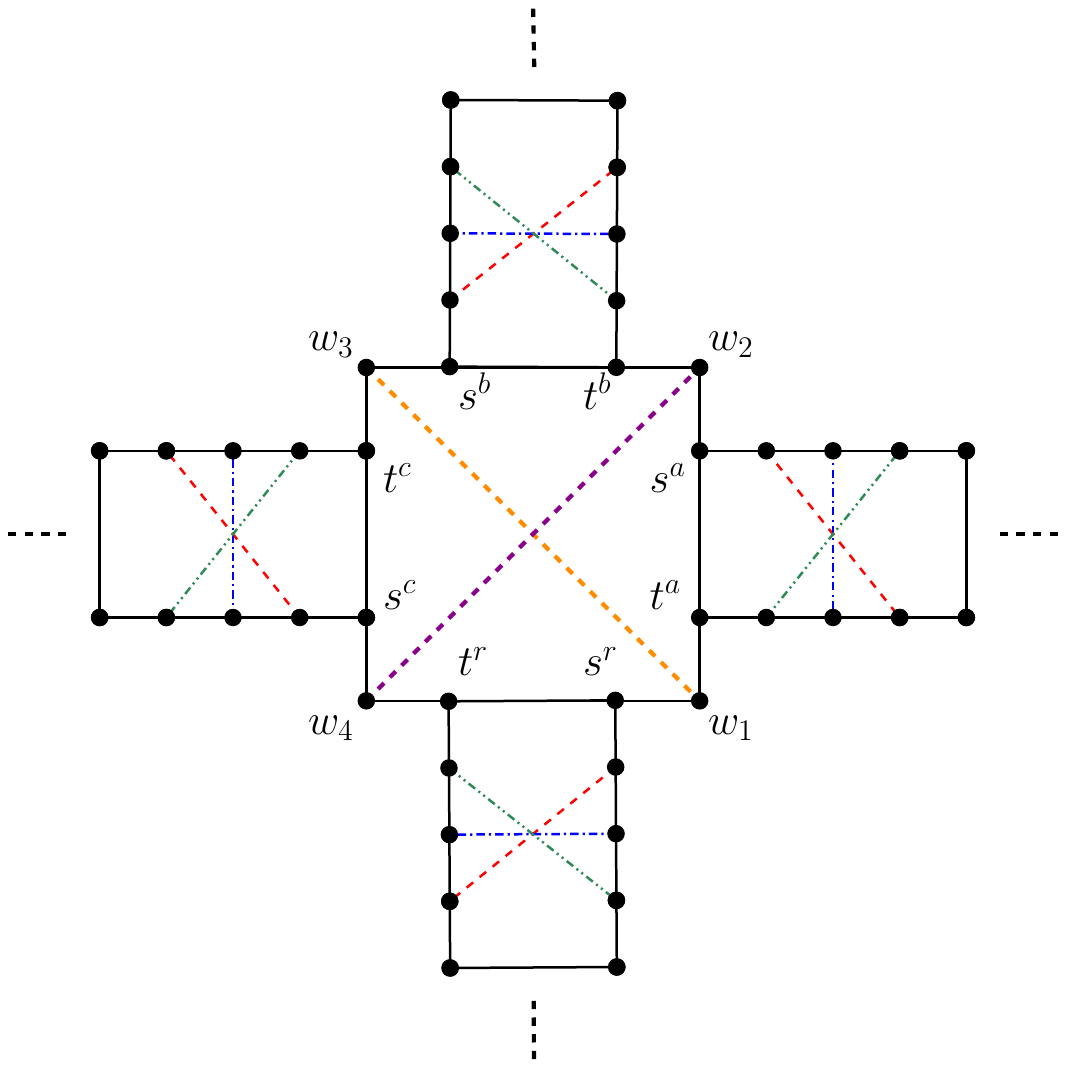}

    \vspace{5mm}
    \caption{The instance $\Iscr$ that consists of four copies of $\mathcal{J}$, joined at the cycle $W$}\label{fig:lower_bound_4_3}
\end{figure}

\begin{lemma}
$\mathcal{I}=(G,u,H,d)$ is a feasible instance.
\end{lemma}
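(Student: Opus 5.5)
We exhibit an explicit feasible flow, in the same way as for the warm-up instance of Section~\ref{section:warmup}. Inside each copy $\mathcal{J}^{(\theta)}$, $\theta\in\{a,b,c,r\}$, we route the demands of $E(H^\theta)$ exactly as in the proof that $\mathcal{J}$ is feasible. That is, for each $\ell\in[n]$, the three demands $\{s_i^\ell,t_i^\ell\}$ are routed around the boundary of the $\ell$-th inner face with the split flow values of Lemma~\ref{claim:i_nl_is_feasible}. Every path of this routing uses only edges of $G^\theta$, with the edge $\{t^\theta,s^\theta\}$ of $G^\theta$ identified with the corresponding edge of $W$. Hence these paths are valid paths in $G$.

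Next we check the load on the edges of $G^\theta$. On every edge $e\in E(G^\theta)\setminus\{\{t^\theta,s^\theta\}\}$ this flow is exactly the flow of the $\mathcal{J}$-routing, and $u(e)=u^\theta(e)$, so $e$ is not over capacity. On the identified edge $\{t^\theta,s^\theta\}$ the flow is at most $u^\theta(t^\theta,s^\theta)$. By construction, $u(t^\theta,s^\theta)=u^\theta(t^\theta,s^\theta)+1$, so this edge keeps one unit of residual capacity. Every other edge of $W$ carries no flow yet and has capacity $1$. Consequently, after this step every edge of the $12$-cycle $W$ has residual capacity at least $1$.

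Finally we route the two central demands. The demand $\{w_1,w_3\}$ is sent as $\tfrac12$ unit along each of the two $w_1w_3$-paths of $W$, and $\{w_2,w_4\}$ is sent likewise along its two paths. Each edge of $W$ lies on exactly one $w_1w_3$-path and exactly one $w_2w_4$-path, so it receives an additional load of exactly $\tfrac12+\tfrac12=1$. This fits into the residual capacity, so the combined flow is feasible.

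The only point that needs care is to confirm that the edges of $W$ coming from the copies are used in $\mathcal{J}$ with load at most $u^\theta(t^\theta,s^\theta)=2-\tfrac{n}{1+3n}$. This holds because the flow of Lemma~\ref{claim:i_nl_is_feasible} saturates each edge of $C^{(1)}$ exactly, and $\{t_4^1,s_0^1\}$ is an edge of $C^{(1)}$ whose capacity in $\mathcal{J}$ is unchanged from $\mathcal{I}(n,1)$.
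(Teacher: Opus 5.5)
Your proof is correct and is essentially the paper's argument. You route each copy $\mathcal{J}^{(\theta)}$ with its feasible flow, using that $u$ and $u^\theta$ agree except for the extra unit on $\{t^\theta,s^\theta\}$, and you send each central demand half-and-half around $W$; the paper only does these two steps in the opposite order. Your final saturation check is correct but not needed, since feasibility of $\mathcal{J}$ already bounds the load on $\{t^\theta,s^\theta\}$ by $u^\theta(t^\theta,s^\theta)$.
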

\begin{proof}
First, route $\{w_1,w_3\}$ (resp. $\{w_2,w_4\}$) by sending $\tfrac{1}{2}$ units
of flow along each of the two $w_1w_3$-paths (resp. $w_2w_4$-paths) in $W$.
The residual capacity, with respect to $u$, on each edge
$e\in E(G^\theta)\subseteq E(G)$ is precisely $u^\theta(e)$, for each
$\theta\in\{a,b,c,r\}$. Since each instance $\mathcal{J}^\theta$ is feasible,
the remaining demands can be routed within the corresponding copies. It follows
that $\mathcal{I}$ is feasible.
\end{proof}
We can now prove Theorem~\ref{theorem:lower}.
\begin{proof}[Proof of Theorem~\ref{theorem:lower}]
As mentioned at the start of the section,
it suffices to show that for any unsplittable flow $y$ of $\mathcal{I}$,
there exists some $e\in E(G)$ with $y(e)\ge u(e)+1+\tfrac{n}{1+3n}$ (note that $\Dmax=1$).
Let $P_{1,3}$ and $P_{2,4}$ denote the paths on which $y$ routes $\{w_1,w_3\}$ and $\{w_2,w_4\}$, respectively.
Then, there exists some $i\in[4]$ and $\theta\in\{a,b,c,r\}$ such that $\{w_i,t^\theta\},\{s^\theta,w_{i+1}\}\in P_{1,3}\cap P_{2,4}$ (where we interpret $w_5\equiv w_1$).
Suppose without loss of generality (by the symmetry of the instance) that $\{w_1,t^a\},\{s^a,w_2\}\in P_{1,3}\cap P_{2,4}$.
Observe that if $y$ routes any demand of $E(H^a)$ on some $e\in \{\{w_1,t^a\},\{s^a,w_2\}\}$, then
\[y(e)\ge d(w_1,w_3)+d(w_2,w_4)+\tfrac{2n}{1+3n}=2+\tfrac{2n}{1+3n}=u(e)+1+\tfrac{2n}{1+3n},\]
where we use the fact that $\frac{2n}{1+3n}$ is the \emph{minimum} demand value in $\mathcal{I}$.
Thus, we can assume that $y$ routes every demand in $E(H^a)$ using only the edges of $E(G^a)$.
We overload notation and use $s^1_i$ and $t^1_i$ for each $i\in\{0,1,2,3,4\}$ to denote the corresponding copies of these nodes of $\mathcal{J}$ in $V(G^a)$ (note that $s^1_0\equiv s^a$ and $t^1_4\equiv t^a$).
Since $u(e)=u^a(e)$ for every $e\in E(G^a)\setminus\{\{t^a,s^a\}\}$, 
Lemma~\ref{lemma:totheleft} allows us to assume that, in $\mathcal{J}^a$, all demands $\{s_1^\ell,t_1^\ell\}$ and $\{s_3^\ell,t_3^\ell\}$ are routed to the left.
In particular, this holds for $\{s^1_1,t^1_1\}$ and $\{s^1_3,t^1_3\}$.
Let $Q_1$ and $Q_3$ denote the paths on which $y$ routes $\{s^1_1,t^1_1\}$ and $\{s^1_3,t^1_3\}$, respectively. 
By the above, $\{t^a,s^a\}\equiv\{t^1_4,s^1_0\}$ is contained in $Q_1\cap Q_3$.
Observe that if $\{t^a,s^a\}$ is contained in $P_{1,3}\cap P_{2,4}$, then 
\[y(t^a,s^a)\ge d(w_1,w_3)+d(w_2,w_4)+d(s^1_1,t^1_1)+d(s^1_3,t^1_3)=4=u(t^a,s^a)+1+\tfrac{n}{1+3n}.\]
Thus, we can w.l.o.g.\@ assume that $\{s^a,s^1_1\}\equiv\{s^1_0,s^1_1\}$ is contained in $P_{1,3}$ (since $s^a$ has degree $3$). Since $y$ routes $\{s^1_1,t^1_1\}$ and $\{s^1_3,t^1_3\}$ to the left, $\{s^a,s^1_1\}\in Q_1\cap Q_3$. It follows that
\[y(s^a,s^1_1)\ge d(w_1,w_3)+d(s^1_1,t^1_1)+d(s^1_3,t^1_3)=3=u(s^a,s_1^1)+1+\tfrac{n}{1+3n}. \qedhere
\]
\end{proof}

\section*{Acknowledgements}
We thank the anonymous reviewers for their valuable comments and suggestions.
We are grateful to Chaitanya Swamy for detailed feedback on a preliminary version
of this paper, and to Nikhil Kumar, Joseph Poremba, and Bruce Shepherd for many interesting discussions.
This project started while David was visiting the Research Institute for Discrete Mathematics at the University of Bonn, with financial support from the Hausdorff Center for Mathematics, funded by the Deutsche Forschungsgemeinschaft (DFG, German Research Foundation) under Germany's Excellence Strategy — EXC-2047/2 — 390685813. David thanks both institutions for their hospitality during this visit.

\printbibliography
\appendix

\section{Omitted Proofs from Section~\ref{section:2dmax}}\label{appendix:omitted}

We provide a proof for Theorem~\ref{theorem:ring_loading_3_2} and Lemma~\ref{lemma:ring_loading_with_two_demands}.
We first note the following structural result by Schrijver, Seymour, and Winkler~\cite{schrijver1998ringloading}.

\begin{definition}
    Let $(G,u,H,d)$ be a feasible ring-loading instance.
    We say that two demand edges $h_1=\{s_1,t_1\}$ and $h_2=\{s_2,t_2\}$ \emph{cross} if $h_1 \cap h_2 = \emptyset$ and for $i=1,2$ the vertices $s_i$ and $t_i$ lie in different connected components of $G-\{s_{3-i}, t_{3-i}\}$.
\end{definition}

\begin{lemma}[Schrijver, Seymour, Winkler~\cite{schrijver1998ringloading}]\label{lemma:ring_loading_sturcture}
    Let $(G,u,H,d)$ be a feasible ring-loading instance.
    Then there exists a feasible fractional flow $x = (x_h)_{h \in E(H)}$ such that all demands that are not routed unsplittably pairwise \emph{cross}, i.e.,\@ if $h_1 \neq h_2 \in E(H)$ such that $x_{h_i}$ is not unsplittable for $i=1,2$ then $h_1$ and $h_2$ cross.
\end{lemma}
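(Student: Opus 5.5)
Take a feasible flow $x$ that is extremal: among all feasible flows, choose one minimizing a potential that penalizes pairs of split demands that do not cross. A concrete choice is to minimize the number of pairs $\{h_1,h_2\}$ of non-crossing demands that are both split. Since the flow polytope is compact and only finitely many support patterns occur, such a minimizer exists; one can also take a vertex of a suitable LP. On a cycle, each demand $h=\{s,t\}$ has exactly two paths, clockwise $P_h^+$ and counter-clockwise $P_h^-$. So $x_h$ is described by one number $x_h(P_h^+)\in[0,d(h)]$, and $h$ is split exactly when $0<x_h(P_h^+)<d(h)$.

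The core step is an exchange argument. Suppose $h_1,h_2$ are both split and do not cross. Then either they share an endpoint or their endpoints are nested or disjoint along the cycle. In every such configuration, the symmetric difference of the two routings has a simple structure. We can choose orientations $\sigma_1,\sigma_2\in\{+,-\}$ so that
\[\chi(P_{h_1}^{\sigma_1})-\chi(P_{h_1}^{-\sigma_1}) = -\bigl(\chi(P_{h_2}^{\sigma_2})-\chi(P_{h_2}^{-\sigma_2})\bigr)\]
holds on every edge, or at least so that the sum is $\le 0$ on every edge. Here $\chi$ denotes edge incidence vectors on the cycle.

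Concretely, $\chi(P^+_h)-\chi(P^-_h)$ equals $+1$ on one arc between $s$ and $t$ and $-1$ on the complementary arc. For non-crossing pairs, one arc of $h_1$ contains one arc of $h_2$, or the two arcs are complementary in a compatible way. A case check then gives that moving $\varepsilon$ of $h_1$ in direction $\sigma_1$ and $\varepsilon$ of $h_2$ in direction $\sigma_2$ does not increase the load on any edge. We then increase $\varepsilon$ until one of the two demands becomes unsplit.

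The main obstacle is making sure this shift does not create new bad pairs elsewhere, so that the potential really decreases. Shifting only changes the split status of $h_1$ and $h_2$, and it only ever unsplits them. The set of split demands therefore strictly shrinks, and every non-crossing split pair involving the newly unsplit demand disappears. Using the number of split demands as the potential thus works directly. Feasibility is preserved because no edge load increases, which finishes the proof.

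The remaining detail is the case analysis for non-crossing pairs. There are two cases: shared endpoint, and disjoint non-interleaved endpoints. In both, the arcs of $h_1$ and $h_2$ are either nested or disjoint. We choose the $\sigma_i$ so that the $+1$ arc of one demand lies inside the $-1$ arc of the other. The net change on each edge is then in $\{-2,0\}$, and it is never $+1$.
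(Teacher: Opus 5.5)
Your proof is correct and is essentially the standard uncrossing argument of Schrijver, Seymour, and Winkler; the paper does not reprove this lemma and only cites it. For split, non-crossing $h_1,h_2$, some arc $B$ of $h_2$ lies inside an arc $A$ of $h_1$. Moving $\varepsilon$ of $h_1$ onto the complement of $A$, and $\varepsilon$ of $h_2$ onto $B$, changes every edge load by $0$ or $-2\varepsilon$. So the maximal such shift strictly decreases the number of split demands while keeping the flow feasible.

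You should drop the aside about taking a vertex of a suitable LP. It is not needed, and it is false for the bare feasibility polytope. Take the $4$-cycle $1,2,3,4$ with unit demands $\{1,3\}$ and $\{1,2\}$, and capacities $1,1,2,2$ on $\{1,2\},\{2,3\},\{3,4\},\{4,1\}$. Routing half of each demand each way is a vertex, yet these two demands do not cross and both are split. The aside does hold for a vertex that minimizes total load.
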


In~\cite{schrijver1998ringloading}, the authors use this observation to show how to compute a $\frac{3}{2}$-feasible unsplittable flow in any feasible ring-loading instance.
Their approach directly implies a stronger guarantee of $\frac{1}{2}$ for a single edge $\etight \in E(G)$, which can be chosen arbitrarily in advance.
We restate Theorem~\ref{theorem:ring_loading_3_2} for convenience.
The proof follows the same argument as in~\cite{schrijver1998ringloading}.

\theoremring*
\begin{proof}
    Let $x = (x_h)_{h \in E(H)}$ be a feasible flow as given by Lemma~\ref{lemma:ring_loading_sturcture}.
    Since $x\le u$, we may assume that $u(e)=x(e)$ for every $e\in E(G)$.
    Let $D_1\subseteq E(H)$ be the set of demands that are routed unsplittably in $x$, and let $D_2:=E(H)\setminus D_1$.
    Observe that any $h_1\neq h_2 \in D_2$ cross. In particular, the demand edges $D_2$ are pairwise vertex-disjoint. 
    Note that we can remove the demands in $D_1$ from the instance along with their unsplittable $h$-flows without making our problem harder i.e., it suffices to prove the theorem for the ``residual'' instance with demands $D_2$, and where the capacity of every $e\in E(G)$ is given by $\sum_{h\in D_2}x_h(e)= u(e)-\sum_{h\in D_1}x_h(e)$.
    Thus, we assume from now on that $E(H)=D_2$.
    
    Additionally, observe that we can assume that every vertex of $G$ is incident to some $h\in E(H)$;
    otherwise, if $P\subseteq E(G)$ is the edge set of a (maximal) path such that none of its inner vertices are incident to any demand $h\in E(H)$, then $u(e_1)=u(e_2)$ for every $e_1,e_2\in P$.
    Thus, we may contract $P$ into an edge $e$ of capacity $u(e):=u(e')$ for any $e'\in P$, and obtain an equivalent (feasible) instance.
    By the above, we may assume that 
    \[E(H)=\{\{s_i,t_i\}:1\le i\le k\},\]
    and that the supply graph is given by the cycle 
    \[G=s_1,s_2,\hdots,s_k,t_1,t_2,\hdots,t_k,s_1.\]
    In particular, we may assume w.l.o.g.\@ that $\etight=\{s_k,t_1\}.$
    
    For each $ i \in [k]$, let $r_i$ denote the amount of flow that $x_{\{s_i,t_i\}}$ routes (clockwise) along the path $s_i,s_{i+1},\hdots,s_k,t_1,\hdots,t_i$, and let $\overline{r_i}=d(h_i)-r_i$ denote the amount of flow that $x_{\{s_i,t_i\}}$ routes (counter-clockwise) along the path $s_i,s_{i-1}\hdots,s_1,t_k,\hdots,t_i$.
    
    Schrijver, Seymour, and Winkler~\cite{schrijver1998ringloading} observed that there is a correspondence between unsplittable flows $y$ and vectors $\vec z \in \R^k$ such that $z_i \in \{-r_i, \,\overline{r_i}\}$ for $i \in [k]$.
    Under this correspondence, for every $j\in[k]$ and every supply edge $e\in\{\{s_j,s_{j+1}\},\,\{t_j,t_{j+1}\}\}$, where we interpret $s_{k+1}\equiv t_1$ and $t_{k+1}\equiv s_1$, the flow difference between $y$ and $x$ on $e$ is $|y(e) - x(e)| = |\sum_{i=1}^j z_i - \sum_{i=j+1}^k z_i|$.
    In particular, $|y(\etight) - x(\etight)| = |\sum_{i=1}^k z_i|$.
    
    The algorithm from~\cite{schrijver1998ringloading} works by iteratively choosing $z_j \in \{-r_j, \overline{r_j}\}$ from $j=1$ to $j=k$ such that $|\sum_{i=1}^j z_i| \leq \frac{\Dmax}{2}$.
    This can be done greedily because $r_j + \overline{r_j}=d(s_j,t_j) \leq \Dmax$.
    As a direct consequence, for each $ j \in [k]$ the triangle inequality yields
    \[\Big|\sum_{i=1}^j z_i - \sum_{i=j+1}^k z_i\Big| 
    =\Big|2\sum_{i=1}^j z_i - \sum_{i=1}^k z_i\Big| 
    \leq 2 \cdot \Big|\sum_{i=1}^j z_i\Big| + \Big|\sum_{i=1}^k z_i\Big| \leq 3 \cdot \frac{\Dmax}{2}.\]
    This shows that the corresponding unsplittable flow satisfies our capacity constraints for edges $e \in E(G) \setminus \{\etight\}$, while for $\etight$ we use $|y(\etight) - x(\etight)| = |\sum_{i=1}^k z_i| \leq \frac{\Dmax}{2}$.
\end{proof}

We now restate and prove Lemma~\ref{lemma:ring_loading_with_two_demands} by a simple case distinction.

\lemmaring*
\begin{proof}
    As in the proof of Theorem~\ref{theorem:ring_loading_3_2}, using Lemma~\ref{lemma:ring_loading_sturcture} and removing the demands already routed unsplittably, we may assume that all demand edges pairwise cross.
    Since every demand edge contains $v$ or $w$, and crossing demand edges are pairwise vertex-disjoint, it follows that $|E(H)| \leq 2$.
    
    Let $x = (x_h)_{h \in E(H)}$ be a feasible flow for $\Iscr$.
    If $|E(H)| = 1$, then we route the unique demand $h$ along the path on which $x$ routes at least  $\tfrac{d(h)}{2}$ units of flow.
    This yields a $\frac{1}{2}$-feasible unsplittable flow for $\Iscr$.
    
    Now consider the case $E(H) = \{\{s_1,t_1\}, \,\{s_2,t_2\}\}$.
    If there is an $s_it_i$-path $P$ in $G-\{e_1, e_2\}$ for some $i \in \{1,2\}$ then we route $\{s_i,t_i\}$ along $P$,
    and route the other demand $\{s_{3-i},t_{3-i}\}$ along the path on which $x_{\{s_{3-i},\,t_{3-i}\}}$ routes at least $\tfrac{d(s_{3-i},\,t_{3-i})}{2}$ units of flow.
    This yields a $\vec \beta$-feasible unsplittable flow.    
    Otherwise, $e_1$ and $e_2$ lie on different $s_it_i$-paths in $G$ for $i=1,2$.
    In particular, any unsplittable flow $y$ must satisfy $y(e_1) - x(e_1) = x(e_2) - y(e_2)$.
    The result then follows by applying Theorem~\ref{theorem:ring_loading_3_2} with $\etight:=e_1$.
    \end{proof}

\end{document}